\DeclareSymbolFont{bbold}{U}{bbold}{m}{n}
\DeclareSymbolFontAlphabet{\mathbbold}{bbold}
\newtheorem{theorem}{Theorem}
\newtheorem{lemma}{Lemma}
\newtheorem{corollary}{Corollary}
\newtheorem{proposition}{Proposition}
\newtheorem{assumption}{Assumption}
\newtheorem{remark}{Remark}
\theoremstyle{definition}
\theoremstyle{remark}
\newcommand\independent{\protect\mathpalette{\protect\independenT}{\perp}}
\def\independenT#1#2{\mathrel{\rlap{$#1#2$}\mkern2mu{#1#2}}}
\pgfplotsset{compat=1.18}
\begin{document}

\def\spacingset#1{\renewcommand{\baselinestretch}%
{#1}\small\normalsize} \spacingset{1}

\raggedbottom
\allowdisplaybreaks[1]


  \title{\vspace*{-.4in} {Nonparametric identification and efficient estimation of causal effects with instrumental variables}}
  \author{Alexander W. Levis$^1$, Edward H. Kennedy$^1$, Luke Keele$^2$ \\  \\ \\
    $^1$Department of Statistics \& Data Science, \\
    Carnegie Mellon University \\
    $^2$Department of Surgery and Biostatistics, \\
    University of Pennsylvania \\ \\ 
    \texttt{alevis@cmu.edu}; \\ \texttt{edward@stat.cmu.edu}; \\
    \texttt{luke.keele@uphs.upenn.edu}
\date{}
    }
    
  \maketitle
  \thispagestyle{empty}

\begin{abstract}
Instrumental variables are widely used in econometrics and epidemiology for identifying and estimating causal effects when an exposure of interest is confounded by unmeasured factors. Despite this popularity, the assumptions invoked to justify the use of instruments differ substantially across the literature. Similarly, statistical approaches for estimating the resulting causal quantities vary considerably, and often rely on strong parametric assumptions. In this work, we compile and organize structural conditions that nonparametrically identify conditional average treatment effects, average treatment effects among the treated, and local average treatment effects, with a focus on identification formulae invoking the conditional Wald estimand. Moreover, we build upon existing work and propose nonparametric efficient estimators of functionals corresponding to marginal and conditional causal contrasts resulting from the various identification paradigms. We illustrate the proposed methods on an observational study examining the effects of operative care on adverse events for cholecystitis patients, and a randomized trial assessing the effects of market participation on political views.
\end{abstract}

\bigskip

\noindent
{\it Keywords: instrumental variables, causal inference, influence function, nonparametric efficiency} 

\pagebreak

\section{Introduction}
In studies of causal effects, hidden confounding---the presence of unobserved factors that relate to both the exposure and outcome of interest---poses a critical challenge. Assigning treatments via a randomized mechanism is one method based on the study design to avoid unobserved confounders and mitigate bias. Another approach is to assume that selection into treatment groups is based on observed data only; that is, to assume there are no unobserved confounders. A popular alternative to the assumption of no unmeasured confounding is to make use of an instrumental variable. An instrument is a variable that affects the exposure of interest but does not directly affect the outcome, and is itself unconfounded \citep{Angrist:1996,Hernan:2006}. If $Z$ denotes the instrument, $A$ denotes the treatment actually received, and $Y$ denotes the outcome of interest, then $Z$ can be used to estimate an effect of $A$ on $Y$ even in the presence of unobserved confounding.

An instrument variable (IV) can arise in two main ways. First, in randomized experiments with noncompliance, randomization to treatment arms serves as an instrument for exposure to the experimental protocol. Second, researchers can identify natural circumstances that provide haphazard encouragement for units to be exposed to a treatment. Nowadays, methodologies based on instrumental variables see widespread use across a number of disciplines of study including economics \citep{Angrist:2001,imbens2014instrumental}, political science \citep{Keele:2014b}, and epidemiology \citep{Hernan:2006, Baiocchi:2014}, including the Mendelian randomization paradigm used in studies in genetic epidemiology \citep{davey_smith_mendelian_2003, davey_smith_mendelian_2004, lawlor_mendelian_2008}. 

Ideally, an IV analysis should follow what has become the template for modern causal analysis.  The first step of this template is to define the target causal effect. Often this consists of defining the causal effect or estimand as a counterfactual contrast. Next, the researcher should state or derive the assumptions needed to identify the estimand from observed data---that is, represent the causal estimand as a statistical functional. Given that many causal analyses depend on assumptions that are untestable, researchers should also provide substantive justification for the plausibility of these assumptions. The final step is estimation of the causal effects from the data. In many applications, there is an interplay between the choice of the causal estimand and the assumptions needed for identification. That is, analysts can often select from among more than one causal estimand, and each estimand may require different assumptions. As such, the degree of plausibility of the identification assumptions may be a direct consequence of the choice of the estimand. This interplay between causal estimands and assumptions is particularly acute in an instrumental variable analysis.

In an IV analysis, applied investigators may wish to target the average treatment effect (ATE), but identification of the ATE typically relies on effect homogeneity or no-interaction assumptions \citep{robins1994b, Hernan:2006, tan2010, wang2018}.  Alternatively, investigators may instead focus on an estimand known as the local average treatment effect (LATE), which is the average causal effect among those who comply with encouragement to treatment \citep{Angrist:1996}. Identification of the LATE estimand requires a monotonicity assumption, which may be plausible in many settings, but has more limited interpretability and applicability than the ATE. A variety of other estimands---including the average treatment effect among the treated (ATT)---can be targeted in an IV analysis, each with assumptions that vary in plausibility depending on the context.

In this paper, we focus on two related topics. The vast majority of IV studies rely on a set of three core assumptions, but these assumptions do not allow for point identification of any target estimand. As such, additional assumptions are needed for point identification, and this class of additional assumptions dictate the causal quantity that is identified in a specific IV study. First, we provide a review of additive causal contrasts that can be point identified in an IV analysis. For each of these estimands, we review the key assumptions needed for point identification. Additionally, for each estimand we review empirical applications where those assumptions are plausible or implausible. Second, we outline a unified estimation framework that encompasses each of the estimands we reviewed. More specifically, we invoke a framework that is often referred to as doubly-robust machine learning (DRML) that combines semiparametrics, doubly robust (DR) methods, nonparametrics, and machine learning~\citep{bickel1993efficient,van2003unified,tsiatis2006semiparametric, kennedy2016semiparametric, kennedy2022}. In this framework, DR methods are combined with ML estimation and sample-splitting to achieve fast parameteric-type convergence rates with flexible estimation, i.e., less bias while preserving efficiency. While other general overviews of the IV framework exist in the literature~\citep{Baiocchi:2014, clarke2012, imbens2014instrumental}, our review is the first to focus on the variety of possible estimands and the related assumptions, and provide a unified estimation framework based on DRML methods.

\subsection{Other Related Topics}
As mentioned above, our main focus in this paper is on identifying assumptions that underlie additive causal contrasts, and subsequent efficient estimation, in settings with an IV. We emphasize that this is by no means comprehensive of the set of topics that have been explored in the IV literature. Indeed, research on IV methods, which was once mostly confined to econometrics, has become an active area of research in statistics and has expanded rapidly over the last twenty years. Much of this literature is focused on improving methods of IV estimation and inference. For example, one strand of the literature that we do not focus on in this paper has considered IV-based estimation of causal effects on other scales, e.g., risk ratios and odds ratios \citep{Vansteelandt:2003,robins2004,Clarke:2010,Clarke:2012}. Other work has focused on estimation when there are multiple IVs \citep{kang2016instrumental,burgess2017review,windmeijer2019use}. In particular, estimation and inference with weak instruments (i.e., settings where one or more IVs are only weakly correlated with the exposure of interest), which pose significant challenges for estimation and inference, has been an active area of development \citep{andrews2007performance,Keele:2017fiv, andrews2019weak}. This research includes the development of methods that effectively strengthen instruments that are weak via matching or other methods \citep{Baiocchi:2010,Zubizarreta:2012a,heng2019instrumental,keeleicumatch2019,kennedy2020b}. Finally, moving beyond point identification, there has been considerable work in developing methods for partial identification of causal estimands in IV settings. This research includes methods for sensitivity analysis \citep{Rosenbaum:1996,Baiocchi:2014,fogarty2021biased}, as well as the characterization of nonparametric bounds \citep{Balke:1997,heckman1999local,manski2000monotone,swanson2018partial,kennedy2018survivor, levis2023}. 

\section{Notation and Core Assumptions}\label{sec:notation}


First, we outline our notation. We use \(O = (\boldsymbol{X}, Z, A, Y) \sim P\) to denote the collection of observed quantities. Specifically, \(Y \in \mathcal{Y} \subseteq \mathbb{R}\) is the outcome, \(A \in \mathcal{A} \subseteq \mathbb{R}\) is the treatment or exposure of interest, \(Z \in \mathcal{Z} \subseteq \mathbb{R}\) is the instrumental variable, and \(\boldsymbol{X} \in \mathcal{X} \subseteq \mathbb{R}^d\) is the set of observed, baseline covariates. Frequently, we will focus on the case of a binary instrument, $\mathcal{Z} = \{0,1\}$, and/or binary exposure, $\mathcal{A} = \{0,1\}$. For applications with no measured covariates, we set \(\boldsymbol{X} = \emptyset\). 

Throughout, we use potential outcomes to denote counterfactual quantities. We use \(Y(a)\) to denote the counterfactual outcome that \emph{would have been observed} had treatment \(A\) been set to \(a \in \mathcal{A}\). We write \(A(z)\), \(Y(z)\) for the counterfactual treatment and outcome for \(A\) and \(Y\) when setting \(Z\) to \(z \in \mathcal{Z}\). Finally, we use \(Y(z,a)\) to denote the counterfactual outcome when setting \((Z, A)\) to \((z, a) \in \mathcal{Z} \times \mathcal{A}\). Critically, our notation encodes two important causal assumptions: first, that there are no hidden forms of the instrument or treatment, and second, that a subject's potential outcome is not affected by other subjects' instrument or exposure values. Together, these allow us to link counterfactual and observed data, such that for those subjects with $Z=z$ and $A=a$, we have $Y(z, a) = Y$, which is often referred to as the consistency assumption in the epidemiology literature. These two assumptions together are often referred to as the stable unit treatment value assumption (SUTVA)~\citep{Rubin:1986}. We assume consistency for the remaining two counterfactual quantities as well, i.e., we assume $A(Z) = A$ and $Y(Z) = Y$. These assumptions are not unique to IV methods, and are invoked in a variety of causal inference settings.

The primary goal is to use IV methods to identify and estimate the \(A\rightarrow Y\) treatment effect, e.g., the mean of \(Y(a) - Y(a')\) for \(a, a' \in \mathcal{A}\), possibly among some subset of the population. When the treatment-outcome relationship is confounded (e.g., when $A$ is not randomized), estimating the effect of $A$ on $Y$ is difficult. However, if a valid instrumental variable, $Z$, is available, a causal effect of $A$ on $Y$ can be estimated consistently---under critical assumptions to be discussed---even in the presence of unobserved confounders $U$ between $Y$ and $A$. Informally, $Z$ is a valid IV if it 1) affects or is associated with $A$, 2) is as good as randomized, possibly after conditioning on measured covariates $\boldsymbol{X}$, and 3) affects the outcome $Y$ only indirectly through $A$. These three ``core'' IV assumptions are typically written formally as follows.
\begin{assumption}[Relevance]
\label{ass:relevance}
    $\mathrm{Cov}_P(Z, A \mid \boldsymbol{X}) \neq 0$, almost surely.
\end{assumption}

\begin{assumption}[Unconfoundedness]
\label{ass:UC}
\
    \begin{enumerate}[label=(\alph*),leftmargin=.5in]
        \item $Z \independent Y(z, a) \mid \boldsymbol{X}$, for all $a \in \mathcal{A}$
        \item $Z \independent (A(z), Y(z)) \mid \boldsymbol{X}$, for all $z \in \mathcal{Z}$
        \item $(Z, A) \independent Y(z, a) \mid \boldsymbol{X}, U$, for all $z \in\mathcal{Z}$, $a \in \mathcal{A}$
        \item $Z \independent U \mid \boldsymbol{X}$
    \end{enumerate}
\end{assumption}

\begin{assumption}[Exclusion Restriction]
\label{ass:ER}
    $Y(z,a) = Y(a)$ for any $z \in \mathcal{Z}$ and $a \in \mathcal{A}$.
\end{assumption}

\begin{remark}
    When the instrument is binary (i.e., $\mathcal{Z} = \{0,1\}$), we often require and invoke the positivity condition,
\begin{equation}\label{eq:pos}
      0< P[Z = 1 \mid \boldsymbol{X}] < 1, \text{ almost surely}.
    \end{equation}
     We note that~\eqref{eq:pos} holds as a consequence of our relevance Assumption~\ref{ass:relevance}. To see this, note that if $P[P[Z = z \mid \boldsymbol{X}] = 1] > 0$, for some $z \in \{0,1\}$, then for the set $B = \{\boldsymbol{x} \in \mathcal{X}: P[Z = z \mid \boldsymbol{X} = \boldsymbol{x}] = 1\}$, we have $P(\mathrm{Cov}_P(Z, A \mid \boldsymbol{X}) = 0) \geq P(\boldsymbol{X} \in B) > 0$, since $Z$ is a constant on $B$, thus contradicting Assumption~\ref{ass:relevance}.
\end{remark}

\begin{remark}\label{remark:UC}
    While some authors prefer to work with a subset of the components of Assumption~\ref{ass:UC} (e.g., using only~\ref{ass:UC}(b), or~\ref{ass:UC}(c)\&(d)), we deliberately list all four conditions for this review. In each of the identification results in Section~\ref{sec:identification}, one or more components of Assumption~\ref{ass:UC} will be invoked, and we attempt to use only the weakest statement(s). Note, in particular, that by the properties of conditional independence, Assumptions~\ref{ass:UC}(c) and \ref{ass:UC}(d) imply that $Z \independent (U, Y(z, a)) \mid \boldsymbol{X}$, which implies Assumption~\ref{ass:UC}(a), i.e., (a) is weaker than (c)\&(d). Thus, the results in Section~\ref{sec:identification} that only depend on Assumption~\ref{ass:UC}(a) remain valid under a potentially wider set of scenarios than those that require Assumptions~\ref{ass:UC}(c)\&(d). In the Appendices, we also include a scenario---alternative to the mechanism illustrated in Figure~\ref{fig:IV}---in which Assumption~\ref{ass:UC}(a) holds, yet \ref{ass:UC}(b) fails.
\end{remark}


We now briefly discuss Assumptions~\ref{ass:relevance}--\ref{ass:ER} in turn. Assumption~\ref{ass:relevance} implies that there is a (linear) association between $Z$ and $A$, even after controlling for $\boldsymbol{X}$. That is, for $Z$ to be an instrument it must have some effect on treatment exposure, $A$. 
Assumption~\ref{ass:UC} formally expresses the fact that the IV must be unconfounded. Assumptions~\ref{ass:UC}(a) and~\ref{ass:UC}(b) can be interpreted as saying that the effect of $Z$ on $A$ and $Y$ is unconfounded after controlling for covariates $\boldsymbol{X}$. Assumption~\ref{ass:UC}(c) implies that the measured confounders $\boldsymbol{X}$ and unmeasured confounders $U$ would be sufficient to control for confounding of the effects of $Z$ and $A$ on $Y$. Assumption~\ref{ass:UC}(d) requires that the unmeasured confounders $U$ of the $A$-$Y$ relationship do not confound the effect of $Z$ on $Y$. Note that in the special case that $Z$ is marginally randomized (e.g., in a trial), Assumption~\ref{ass:UC} holds by design. In that case, Assumption~\ref{ass:UC} holds unconditionally, that is without conditioning on $\boldsymbol{X}$. Note, though, that conditioning on baseline covariates $\boldsymbol{X}$ in such an experiment will not affect the validity of Assumption~\ref{ass:UC}. Finally, Assumption~\ref{ass:ER} means that $Y(z,a)$ does not depend on $z$, i.e., any effect of the instrument on the outcome acts entirely through the exposure $A$.


While Assumption~\ref{ass:relevance} can be directly tested, Assumptions~\ref{ass:UC} and \ref{ass:ER} are both untestable with observed data. Justification of these two assumptions requires practitioners to appeal to qualitative reasoning. Alternatively, a variety of falsification tests have been proposed to probe the plausibility of Assumption~\ref{ass:UC}~\citep{jackson2015toward,davies2015commentary,davies2017compare,zhao2018,branson2020evaluating}. Assumption~\ref{ass:ER} can also be subject to falsification testing~\citep{glymour2012credible,kang2013causal,Yang:2013,pizer2016falsification,keeleexrest2018}. Throughout, we will assume these core assumptions hold. Other reviews provide in-depth discussions of these assumptions~\citep{hernan2019,Baiocchi:2014,Imbens:2015}.

Next, we introduce an application that we use to structure our discussion at various points. Here, we briefly discuss each of the core assumptions in the context of this application. REFLUX was a multicenter randomized controlled trial in the UK that randomized patients with Gastro-Oseophageal Reflux Disease (GORD) to either surgical management or Proton Pump Inhibitors (PPIs)~\citep{grant2008,grant2013}. In the trial, there was a substantial amount of non-adherence. Specifically, among the 178 participants in the surgical arm, 67 received medical management contrary to random assignment, and of 179 participants the medication arm, 10 crossed over and received surgery instead of medical management. Here, trial arm assignment serves as the instrument $Z$, and actual medical care received serves as the treatment $A$. The outcome $Y$ in the trial was a measure of quality of life. Note that in REFLUX, $\boldsymbol{X}$ includes the following covariates: age, sex, BMI, employment status (3 categories), age no longer in school, quality of life at baseline, and 5 scales that measured reflux symptoms including heartburn and nausea. For Assumption~\ref{ass:relevance} to hold, trial arm assignment must be related to medical care received: this is reasonable and was observed in the data. Next, since $Z$ was randomized, Assumption~\ref{ass:UC} holds by design, marginally or conditional on $\boldsymbol{X}$. Finally, Assumption~\ref{ass:ER} requires that $Z$ only has an effect on $Y$ through $A$.  That is, assignment to the surgery arm can only affect quality of life through exposure to surgery or medical management. Here, while the effect of $Z$ on $Y$ is straightforward to identify, we focus on the use of the IV framework to identify an effect of $A$ on $Y$, even in the presence of unobserved confounders between $A$ and $Y$.

The three core IV assumptions can also be encoded into a causal diagram \citep{pearl2009, richardson2013}. Figure~\ref{fig:IVa} contains a directed acyclic graph (DAG) for an IV design. Assumption~\ref{ass:relevance} is represented by the arrow from $Z$ to $A$. Assumptions~\ref{ass:UC}(a) and \ref{ass:UC}(b) hold given that $\boldsymbol{X}$ blocks all backdoor paths from $Z$ to $A$ and $Y$. Similarly, Assumptions~\ref{ass:UC}(c) and \ref{ass:UC}(d) hold as $(\boldsymbol{X}, U)$ blocks all backdoor paths from $Z$ and $A$ to $Y$, and $Z$ is $d$-separated from $U$ given $\boldsymbol{X}$. The case where $Z$ is (marginally) randomized is represented by the DAG in Figure~\ref{fig:IVb}, which omits the arrow from $\boldsymbol{X}$ to $Z$. In this scenario, Assumption~\ref{ass:UC} holds without conditioning on $\boldsymbol{X}$, but continues to hold conditional on $\boldsymbol{X}$. Here, $\boldsymbol{X}$ simply represents measured baseline confounders between $A$ and $Y$; this scenario would be the expected structure in the REFLUX example. Finally, whether in the DAG of Figure~\ref{fig:IVa} or~\ref{fig:IVb}, Assumption~\ref{ass:ER} holds due to the fact that there is no arrow directly from $Z$ to $Y$ or $U$.  That is, any effect $Z$ has on $Y$ flows through $A$. Notably, either DAG allows for unobserved confounders $U$ between $A$ and $Y$.

\begin{figure}
\centering




\begin{subfigure}{.5\textwidth}
\large{\begin{tikzpicture}[%
        ->,
        >=stealth,
        node distance=0.5cm,
        pil/.style={
          ->,
          thick,
          shorten =2pt,},
        regnode/.style={circle, draw=black, fill=none, thick, minimum size=8mm},
        bluenode/.style={circle, draw=blue, fill=none, thick, minimum size=8mm},
        boxednode/.style={rectangle, draw=black, fill=none, thick, minimum size=8mm},
        leftsplitnode/.style={semicircle, draw=black, fill=none, thick, minimum size=9mm, shape border rotate=90},
        rightsplitnode/.style={semicircle, draw=red, fill=none, thick, minimum size=9mm, shape border rotate=270},
        rednode/.style={circle, draw=red, fill=none, thick, minimum size=8mm},
        redbox/.style={rectangle, draw=red, fill=none, thick, minimum size=8mm},
        ]
        \node[regnode] (1) {$Z$};
        \node[regnode, right=of 1] (2) {$A$};
        \node[regnode, above right=of 2] (3) {$U$};
        \node[regnode, below right=of 3] (4) {$Y$};
        \node[regnode, left=of 1] (5) {$\boldsymbol{X}$};
        \draw [->] (1) to (2);
        \draw [->] (2) to (4);
        \draw [->] (3) to (2);
        \draw [->] (3) to (4);
        \draw [->] (5) to (1);
        \draw [->] (5) to [out=330, in=210] (2);
        \draw [<->] (5) to [out=45, in=155] (3);
        \draw [->] (5) to [out=315, in=225] (4);
\end{tikzpicture}}

\caption{$Z$ unconfounded given $\boldsymbol{X}$}
\label{fig:IVa}
\end{subfigure}%
\begin{subfigure}{.5\textwidth}
\large{\begin{tikzpicture}[%
        ->,
        >=stealth,
        node distance=0.5cm,
        pil/.style={
          ->,
          thick,
          shorten =2pt,},
        regnode/.style={circle, draw=black, fill=none, thick, minimum size=8mm},
        bluenode/.style={circle, draw=blue, fill=none, thick, minimum size=8mm},
        boxednode/.style={rectangle, draw=black, fill=none, thick, minimum size=8mm},
        leftsplitnode/.style={semicircle, draw=black, fill=none, thick, minimum size=9mm, shape border rotate=90},
        rightsplitnode/.style={semicircle, draw=red, fill=none, thick, minimum size=9mm, shape border rotate=270},
        rednode/.style={circle, draw=red, fill=none, thick, minimum size=8mm},
        redbox/.style={rectangle, draw=red, fill=none, thick, minimum size=8mm},
        ]
        \node[regnode] (1) {$Z$};
        \node[regnode, right=of 1] (2) {$A$};
        \node[regnode, above right=of 2] (3) {$U$};
        \node[regnode, below right=of 3] (4) {$Y$};
        \node[regnode, left=of 1] (5) {$\boldsymbol{X}$};
        \draw [->] (1) to (2);
        \draw [->] (2) to (4);
        \draw [->] (3) to (2);
        \draw [->] (3) to (4);
        \draw [->] (5) to [out=330, in=210] (2);
        \draw [<->] (5) to [out=45, in=155] (3);
        \draw [->] (5) to [out=315, in=225] (4);
\end{tikzpicture}}

\caption{$Z$ marginally randomized}
\label{fig:IVb}
\end{subfigure}
\caption{Causal diagram for the IV design}
\label{fig:IV}
\end{figure}
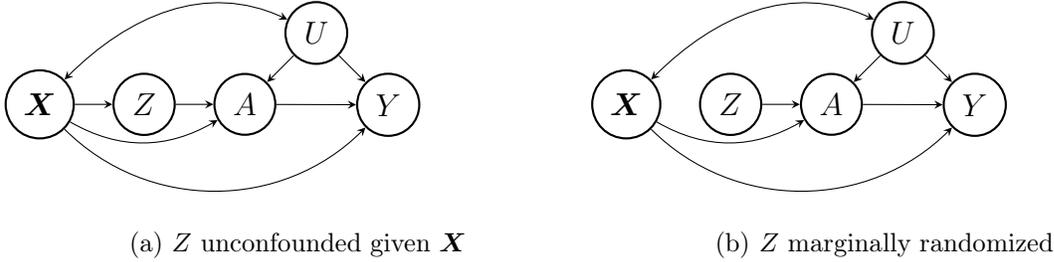

Finally, we outline the notation for a nonparametric structural equation model (NPSEM) that describes IV methods. We use this NPSEM to provide a unifying framework for the full set of possible IV assumptions. From the DAG in Figure \ref{fig:IV}, we can infer a general NPSEM, in which each variable is generated as a function of its parents, as well as some variable-specific noise~\citep{pearl2009, richardson2013}. In IV problems, the most critical structural equation---and the only one we will invoke in this paper---is that for the outcome. Concretely, we write the NPSEM for the outcome as
\begin{equation} \label{eq:structY}
Y = f_Y(\boldsymbol{X}, U, A, \epsilon_Y),
\end{equation}
where $f_Y$ is an unknown function and $\epsilon_Y$ is a noise variable. Since we view this as a \textit{structural} equation within the model implied by Figure~\ref{fig:IV}, how $Y$ behaves under specific interventions can be determined by propagating fixed values of intervened variables to all of their descendants. For example, intervening to set $A=a$, we represent this in the model as $Y(a) = f_Y(\boldsymbol{X}, U, a, \epsilon_Y)$. Note that structural equation \eqref{eq:structY} directly encodes the exclusion restriction, as $Z$ does not directly affect how $Y$ is generated. Written formally, we have $Y(z, a) = f_Y(\boldsymbol{X}, U, a, \epsilon_Y) = Y(a)$, since $\boldsymbol{X}$ and $U$ are not descendants of $Z$ in Figure \ref{fig:IV}. When $A$ is binary, we may write the NPSEM without any further assumptions as:
\begin{equation}\label{eq:general-binary}
    Y = h(\boldsymbol{X}, U, \epsilon_Y)A + g(\boldsymbol{X}, U, \epsilon_Y),
\end{equation}
for some unknown functions $g$ and $h$, which represent $Y(a = 0)$ and $\{Y(a = 1) - Y(a = 0)\}$, respectively. Since primary interest lies in the effect of $A$ on $Y$, many existing structural assumptions place restrictions on $h$. 

As we highlighted above, the primary purpose of the IV framework is to use the presence of $Z$ to consistently estimate an effect of $A$ on $Y$ even in the presence of the unobserved confounding by $U$ in Figure~\ref{fig:IV}. Critically, the three core assumptions are not in themselves sufficient for point identification of $A\rightarrow Y$ treatment effects in a nonparametric model. In fact, under the three core assumptions, the most we can do is bound treatment effects~\citep{manski1990, balke1997}, or test the sharp null hypothesis under which the effect is zero for all units \citep{Hernan:2006}. Indeed, additional ``structural'' restrictions on the counterfactual treatment or outcome distributions are typically asserted so as to make certain causal effects point identifiable---this is why we have yet to formally define any estimands. 
The roadmap we now provide is focused on laying out the different structural assumptions that can be used for point identification of additive causal constrasts, and defining the specific estimands that can be targeted in each case.

\section{Identification of Causal Contrasts} 
\label{sec:identification}


In this section, we review the various structural assumptions that have been proposed for point identification in the IV framework. In the literature, two primary types of assumptions have been articulated for point identification. The first relies on differing types of assumptions about effect heterogeneity. The second relies on an assumption known as monotonicity. Our discussion expands on a similar review in Chapter 16 of \citet{hernan2019}. Proofs of all results can be found in the Appendices.

\subsection{Homogeneity Assumptions}
\label{sec:homogeneity}

We begin by reviewing the various forms of effect homogeneity assumptions that are in the literature. Homogeneity assumptions place various restrictions on how the effects of $A$ on $Y$ (or $Z$ on $A$) vary from unit to unit in the study population. Below, we outline several different homogeneity assumptions. The key advantage of homogeneity assumptions is that they allow for point identification of an estimand that is generally thought to be highly relevant. Unfortunately, homogeneity assumptions are often implausible or difficult to verify in specific applications. For each homogeneity assumption, we derive the primary estimand and consider how to interpret the assumption within the context of REFLUX. 

\subsubsection{Constant Additive Effects}
\label{sec:strict-homogeneity}

The strongest homogeneity assumption we consider requires that the effect of $A$ on $Y$ is constant from unit to unit in the study population, within levels of covariates $\boldsymbol{X}$. Specifically, we can represent this assumption with the following structural equation:
\begin{equation} \label{eq:strict-homogeneity}
Y = r(\boldsymbol{X})A + g(\boldsymbol{X}, U, \epsilon_Y),
\end{equation}
which implies $Y(a) - Y(a') = (a - a')r(\boldsymbol{X})$. Notably, the unmeasured confounders $U$ do not modify the effect of $A$ on $Y$. For a binary treatment, equation~\eqref{eq:strict-homogeneity} restricts equation~\eqref{eq:general-binary} so that $h(\boldsymbol{X}, U, \epsilon_Y) = Y(a = 1) - Y(a = 0) \equiv r(\boldsymbol{X})$. Since $r(\boldsymbol{X})$ represents the individual treatment effect, it is also necesarily the conditional average treatment effect (CATE), 
\[
\mathbb{E}(Y(a = 1)- Y(a = 0) \mid \boldsymbol{X})
\]
which is the average difference in outcomes when all individuals in the study population are assigned to one exposure value versus when all individuals are assigned to another exposure value (e.g., treatment versus control) averaging within the stratum of $\boldsymbol{X}$. The CATE in REFLUX is the average level of quality of life when all patients are assigned to surgery versus when all patients are assigned to PPIs, conditional on $\boldsymbol{X}$.

In the following proposition, we show that the CATE, $r(\boldsymbol{X})$, is identified under the core IV assumptions and the constant effect model~\eqref{eq:strict-homogeneity}.
\begin{proposition} \label{prop:homogeneity-ident}
    Under the IV Assumptions~\ref{ass:relevance}, \ref{ass:UC}(a), and model \eqref{eq:strict-homogeneity}, the constant treatment effect $r(\boldsymbol{X})$ is identified via the conditional Wald estimand,
    \[\Psi_P(\boldsymbol{X}) \coloneqq \frac{\mathrm{Cov}_P(Z, Y \mid \boldsymbol{X})}{\mathrm{Cov}_P(Z, A \mid \boldsymbol{X})}.\]
\end{proposition}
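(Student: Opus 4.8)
The plan is to substitute the constant-effect structural equation \eqref{eq:strict-homogeneity} directly into the numerator of the conditional Wald estimand and exploit the bilinearity of conditional covariance. Writing $g \coloneqq g(\boldsymbol{X}, U, \epsilon_Y)$ for brevity, model \eqref{eq:strict-homogeneity} states that $Y = r(\boldsymbol{X})\,A + g$, and since $r(\boldsymbol{X})$ is $\sigma(\boldsymbol{X})$-measurable it pulls out of the conditional covariance, giving
\[
\mathrm{Cov}_P(Z, Y \mid \boldsymbol{X}) = r(\boldsymbol{X})\,\mathrm{Cov}_P(Z, A \mid \boldsymbol{X}) + \mathrm{Cov}_P(Z, g \mid \boldsymbol{X}),
\]
almost surely (assuming the relevant second moments are finite so all conditional covariances are well defined). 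The crux is then to show that the last term vanishes.

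To that end, I would identify the structural residual $g$ with a potential outcome. Propagating the intervention $A = a$ through \eqref{eq:strict-homogeneity}, exactly as in the discussion surrounding \eqref{eq:structY}, gives $Y(z,a) = r(\boldsymbol{X})\,a + g = Y(a)$ for every $z \in \mathcal{Z}$ and $a \in \mathcal{A}$ (so the exclusion restriction is automatically encoded). Fixing any $a \in \mathcal{A}$ (e.g., $a = 0$ in the binary case, where $g = Y(a=0)$), we thus have $g = Y(z,a) - r(\boldsymbol{X})\,a$, i.e.\ conditionally on $\boldsymbol{X}$ the variable $g$ is a deterministic function of $Y(z,a)$. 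Assumption~\ref{ass:UC}(a) gives $Z \independent Y(z,a) \mid \boldsymbol{X}$, hence $Z \independent g \mid \boldsymbol{X}$, and therefore $\mathrm{Cov}_P(Z, g \mid \boldsymbol{X}) = 0$ almost surely.

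Combining the two facts yields $\mathrm{Cov}_P(Z, Y \mid \boldsymbol{X}) = r(\boldsymbol{X})\,\mathrm{Cov}_P(Z, A \mid \boldsymbol{X})$ almost surely. By the relevance Assumption~\ref{ass:relevance}, $\mathrm{Cov}_P(Z, A \mid \boldsymbol{X}) \neq 0$ almost surely, so I can divide through to obtain $r(\boldsymbol{X}) = \mathrm{Cov}_P(Z, Y \mid \boldsymbol{X}) / \mathrm{Cov}_P(Z, A \mid \boldsymbol{X}) = \Psi_P(\boldsymbol{X})$ almost surely, which is the claim.

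I expect the only genuine subtlety to be the middle step: carefully arguing that the structural noise term $g(\boldsymbol{X}, U, \epsilon_Y)$ coincides with $Y(a=0)$ (equivalently $Y(z,a) - r(\boldsymbol{X})a$) so that Assumption~\ref{ass:UC}(a)---the weakest component of the unconfoundedness assumption---can be invoked to kill the residual covariance. Everything else is routine conditional bilinearity, measurability of $r(\boldsymbol{X})$, and a division justified by relevance, with the usual ``almost surely'' qualifiers carried along.
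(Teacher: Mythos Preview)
Your proof is correct and follows essentially the same approach as the paper's own proof: both use the structural equation to write $Y(a) = Y - r(\boldsymbol{X})(A - a)$ (equivalently, $g = Y(z,a) - r(\boldsymbol{X})a$), invoke Assumption~\ref{ass:UC}(a) to conclude $\mathrm{Cov}_P(Z, Y(a) \mid \boldsymbol{X}) = 0$, expand via bilinearity, and divide through using Assumption~\ref{ass:relevance}. The only difference is the order of presentation---you expand $\mathrm{Cov}_P(Z, Y \mid \boldsymbol{X})$ first and then kill the residual term, whereas the paper starts from $\mathrm{Cov}(Z, Y(a) \mid \boldsymbol{X}) = 0$ and expands from there.
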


First, we note that under the constant additive effect model, a simple consequence of Proposition~\ref{prop:homogeneity-ident} is that we can also point identify the following estimand:
\[
 \mathbb{E}(Y(a) - Y(a')) = (a - a')\mathbb{E}_P(\Psi_P(\boldsymbol{X})),
\]
an estimand commonly referred to as the average treatment effect (ATE). The ATE measures the unconditional average difference in outcomes. The ATE in REFLUX is the average level of quality of life when all patients are assigned to surgery versus when all patients are assigned to PPIs. Many analysts argue that the ATE is generally the most useful estimand from a policy perspective \citep{swanson2014think}. As we highlight in subsequent sections, other IV assumptions target estimands that differ from the CATE and ATE.

The function $\Psi_P(\boldsymbol{X})$ plays a prominent role in the IV framework. As we will see in subsequent sections, this function identifies causal effects across a variety of identification assumptions. The interpretation of this specific causal estimand, however, depends entirely on which assumptions are invoked. Importantly, since we will focus estimation efforts mostly on the case where $Z$ is binary, it helps to re-write the identification formula from Proposition \ref{prop:homogeneity-ident} for this setting. To that end, when $\mathcal{Z} = \{0,1\}$, we define the following nuisance functions: $\pi_z(\boldsymbol{X}) = P[Z = z \mid \boldsymbol{X}]$, $\mu_z(\boldsymbol{X}) = \mathbb{E}_P(Y \mid \boldsymbol{X}, Z = z)$, $\lambda_z(\boldsymbol{X}) = \mathbb{E}_P(A \mid \boldsymbol{X}, Z = z)$, for $z \in \{0,1\}$.

\begin{lemma}\label{lemma:binary}
    If $\mathcal{Z} = \{0,1\}$, then
    \[\Psi_P(\boldsymbol{X}) = \frac{\mu_1(\boldsymbol{X}) - \mu_0(\boldsymbol{X})}{\lambda_1(\boldsymbol{X}) - \lambda_0(\boldsymbol{X})}.\]
\end{lemma}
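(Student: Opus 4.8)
The plan is to specialize the general conditional Wald estimand $\Psi_P(\boldsymbol{X}) = \mathrm{Cov}_P(Z, Y \mid \boldsymbol{X})/\mathrm{Cov}_P(Z, A \mid \boldsymbol{X})$ to the binary-instrument case by computing each conditional covariance explicitly in terms of $\pi_z$, $\mu_z$, and $\lambda_z$. The key observation is that when $Z \in \{0,1\}$, we have $\mathbb{E}_P(Z \mid \boldsymbol{X}) = \pi_1(\boldsymbol{X})$, and more generally for any random variable $W$, $\mathbb{E}_P(ZW \mid \boldsymbol{X}) = \pi_1(\boldsymbol{X})\, \mathbb{E}_P(W \mid \boldsymbol{X}, Z = 1)$ by the tower property (conditioning further on $Z$).

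First I would write $\mathrm{Cov}_P(Z, Y \mid \boldsymbol{X}) = \mathbb{E}_P(ZY \mid \boldsymbol{X}) - \mathbb{E}_P(Z \mid \boldsymbol{X})\,\mathbb{E}_P(Y \mid \boldsymbol{X})$. Using the identity above, $\mathbb{E}_P(ZY \mid \boldsymbol{X}) = \pi_1(\boldsymbol{X})\mu_1(\boldsymbol{X})$, while $\mathbb{E}_P(Y \mid \boldsymbol{X}) = \pi_1(\boldsymbol{X})\mu_1(\boldsymbol{X}) + \pi_0(\boldsymbol{X})\mu_0(\boldsymbol{X})$ by the law of total expectation, and $\mathbb{E}_P(Z \mid \boldsymbol{X}) = \pi_1(\boldsymbol{X})$. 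Substituting and using $\pi_0(\boldsymbol{X}) = 1 - \pi_1(\boldsymbol{X})$ gives, after a short algebraic simplification, $\mathrm{Cov}_P(Z, Y \mid \boldsymbol{X}) = \pi_1(\boldsymbol{X})\pi_0(\boldsymbol{X})\{\mu_1(\boldsymbol{X}) - \mu_0(\boldsymbol{X})\}$. The identical computation with $A$ in place of $Y$ yields $\mathrm{Cov}_P(Z, A \mid \boldsymbol{X}) = \pi_1(\boldsymbol{X})\pi_0(\boldsymbol{X})\{\lambda_1(\boldsymbol{X}) - \lambda_0(\boldsymbol{X})\}$.

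Then I would take the ratio: the common factor $\pi_1(\boldsymbol{X})\pi_0(\boldsymbol{X})$ cancels, leaving $\Psi_P(\boldsymbol{X}) = \{\mu_1(\boldsymbol{X}) - \mu_0(\boldsymbol{X})\}/\{\lambda_1(\boldsymbol{X}) - \lambda_0(\boldsymbol{X})\}$. For this cancellation to be legitimate I need $\pi_1(\boldsymbol{X})\pi_0(\boldsymbol{X}) \neq 0$ almost surely, which is exactly the positivity condition~\eqref{eq:pos}; as noted in the remark following the core assumptions, this follows from the relevance Assumption~\ref{ass:relevance}, and relevance also guarantees the denominator $\mathrm{Cov}_P(Z, A \mid \boldsymbol{X}) = \pi_1(\boldsymbol{X})\pi_0(\boldsymbol{X})\{\lambda_1(\boldsymbol{X}) - \lambda_0(\boldsymbol{X})\}$ is nonzero, so $\Psi_P(\boldsymbol{X})$ is well defined and equals the stated ratio.

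There is essentially no hard part here—the result is a direct computation—but the one place to be careful is the bookkeeping in the covariance expansion: making sure the cross terms combine correctly so that the $\pi_1\pi_0$ factor genuinely factors out of both numerator and denominator, and explicitly invoking positivity/relevance so that dividing by it is justified. Everything else is routine conditional-expectation manipulation.
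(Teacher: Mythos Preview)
Your proposal is correct and essentially identical to the paper's proof: both compute $\mathrm{Cov}_P(Z, W \mid \boldsymbol{X}) = \pi_1(\boldsymbol{X})\pi_0(\boldsymbol{X})\{\mathbb{E}_P(W \mid \boldsymbol{X}, Z=1) - \mathbb{E}_P(W \mid \boldsymbol{X}, Z=0)\}$ for $W \in \{Y, A\}$ via the tower property and then cancel the common factor. The only cosmetic difference is that the paper expands the covariance as $\mathbb{E}(W\{Z - \mathbb{E}(Z \mid \boldsymbol{X})\} \mid \boldsymbol{X})$ before conditioning on $Z$, whereas you expand as $\mathbb{E}(ZW \mid \boldsymbol{X}) - \mathbb{E}(Z \mid \boldsymbol{X})\mathbb{E}(W \mid \boldsymbol{X})$; your explicit mention of positivity/relevance to justify the cancellation is a nice touch that the paper leaves implicit.
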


Inspection of $\Psi_P(\boldsymbol{X})$ in Lemma~\ref{lemma:binary} reveals that, for a binary instrument, we can interpret the numerator and denominator in the ratio as the conditional average treatment effects for the $Z\rightarrow Y$ and $Z \rightarrow A$ relationships, respectively, given $\boldsymbol{X}$.

In our view, model \eqref{eq:strict-homogeneity} is quite strong, and therefore unlikely to hold in real examples. Not only does this model require that unmeasured confounders not act as effect modifiers, but also that the treatment effect is \textit{exactly} the same across subjects in the same $\boldsymbol{X}$-stratum. Alternatively, the constant additive effect model with $\boldsymbol{X} = \emptyset$ asserts that the effect is exactly the same from unit to unit. In REFLUX, for this assumption to hold, the effect of surgery must be exactly the same from patient to patient across the whole population. More realistically, we might expect that some individuals who receive surgery may have a higher quality of life compared to treatment with PPIs, while quality of life is unaffected or is slightly worse for other patients that receive surgery. The remaining homogeneity assumptions place less restrictive constraints on how the effect of $A$ on $Y$ varies.

Before moving on, we note that traditional IV estimation methods based on two-stage least squares~\citep{angrist1995} can be justified under the constant additive effects model~\eqref{eq:strict-homogeneity}. Namely, suppose that \eqref{eq:strict-homogeneity} holds with $r(\boldsymbol{X}) = \boldsymbol{\beta}_0^T\phi(\boldsymbol{X})$ and $\mathbb{E}(g(\boldsymbol{X}, U, \epsilon_Y) \mid \boldsymbol{X}) = \boldsymbol{\alpha}_0^T \eta(\boldsymbol{X})$, for some known basis functions $\phi : \mathcal{X} \to \mathbb{R}^p$, $\eta: \mathcal{X} \to \mathbb{R}^s$, and unknown parameters $\boldsymbol{\beta}_0 \in \mathbb{R}^p$, $\boldsymbol{\alpha}_0 \in \mathbb{R}^s$.
If $\mathbb{E}(Y - \boldsymbol{\beta}_0^T \phi(\boldsymbol{X})A - \boldsymbol{\alpha}_0^T \eta(\boldsymbol{X})\mid \boldsymbol{X}, Z) = 0$, then the two-stage least squares procedure is a valid approach for estimating the CATE parameters $\boldsymbol{\beta}_0$: in the first stage, $\phi(\boldsymbol{X})A$ is regressed (linearly) on $q(\boldsymbol{X})Z$ and $\eta(\boldsymbol{X})$ for some user-specified function $q: \mathcal{X} \to \mathbb{R}^p$, and in the second stage, $Y$ is regressed on the fitted values from the first stage and $\eta(\boldsymbol{X})$; $\widehat{\boldsymbol{\beta}}$ are the estimated coefficients on the first-stage fitted values.


\subsubsection{No \texorpdfstring{$U$-$A$}{U-A} Interaction on \texorpdfstring{$Y$}{Y}}
\label{sec:nonconstant-homogeneity}

For identification purposes, the important feature of model \eqref{eq:strict-homogeneity} is that $U$ does not modify the effect of $A$ on $Y$. Consider the following weaker version of equation~\eqref{eq:strict-homogeneity} allowing non-constant effects:
\begin{equation} \label{eq:nonconstant-homogeneity}
Y = h(\boldsymbol{X}, \epsilon_Y)A + g(\boldsymbol{X}, U, \epsilon_Y),
\end{equation}
which implies that the individual treatment effect is $Y(a) - Y(a') = (a - a')h(\boldsymbol{X}, \epsilon_Y)$, and the CATE is given by $(a - a')\mathbb{E}(h(\boldsymbol{X}, \epsilon_Y) \mid \boldsymbol{X})$, for any two values $a, a' \in \mathcal{A}$. Due to the presence of $\epsilon_Y$ in $h$, effects may vary from unit to unit within the same $\boldsymbol{X}$-stratum, though not due to differences in $U$. As we now formalize, this model also identifies the CATE, so long as we rule out certain dependencies between $\epsilon_Y$ and $(Z, A)$ or $U$. 

\begin{proposition} \label{prop:nonconstant-ident}
    Under the IV Assumptions~\ref{ass:relevance}, \ref{ass:UC}(a), model \eqref{eq:nonconstant-homogeneity}, and assuming
    \begin{equation}\label{eq:nonconstant-ass}
    \mathbb{E}(h(\boldsymbol{X}, \epsilon_Y) \mid \boldsymbol{X}, Z, A) = \mathbb{E}(h(\boldsymbol{X}, \epsilon_Y) \mid \boldsymbol{X}),
    \end{equation}
    the CATE function $\mathbb{E}(h(\boldsymbol{X}, \epsilon_Y) \mid \boldsymbol{X})$ is identified via $\Psi_P(\boldsymbol{X})$.
\end{proposition}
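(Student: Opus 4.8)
The strategy is to evaluate the numerator $\mathrm{Cov}_P(Z, Y \mid \boldsymbol{X})$ of $\Psi_P(\boldsymbol{X})$ directly by plugging in the structural equation~\eqref{eq:nonconstant-homogeneity}, and to show that it factors as $\bar h(\boldsymbol{X})\,\mathrm{Cov}_P(Z, A \mid \boldsymbol{X})$, where $\bar h(\boldsymbol{X}) \coloneqq \mathbb{E}(h(\boldsymbol{X}, \epsilon_Y) \mid \boldsymbol{X})$. Since $\mathrm{Cov}_P(Z, A \mid \boldsymbol{X}) \neq 0$ almost surely by Assumption~\ref{ass:relevance}, dividing through then gives $\Psi_P(\boldsymbol{X}) = \bar h(\boldsymbol{X})$, which is the claimed CATE function. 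Throughout I would work with the centered form $\mathrm{Cov}_P(Z, Y \mid \boldsymbol{X}) = \mathbb{E}\big[\{Z - \mathbb{E}(Z \mid \boldsymbol{X})\}\,Y \mid \boldsymbol{X}\big]$, and similarly for $\mathrm{Cov}_P(Z, A \mid \boldsymbol{X})$.

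First, I would substitute $Y = h(\boldsymbol{X}, \epsilon_Y) A + g(\boldsymbol{X}, U, \epsilon_Y)$ into this identity, splitting $\mathrm{Cov}_P(Z, Y \mid \boldsymbol{X})$ into an ``$h$-part'' $\mathbb{E}\big[\{Z - \mathbb{E}(Z \mid \boldsymbol{X})\}\,h(\boldsymbol{X}, \epsilon_Y) A \mid \boldsymbol{X}\big]$ and a ``$g$-part'' $\mathbb{E}\big[\{Z - \mathbb{E}(Z \mid \boldsymbol{X})\}\,g(\boldsymbol{X}, U, \epsilon_Y) \mid \boldsymbol{X}\big]$. For the $g$-part, the key observation is that the structural form gives $Y(a) = h(\boldsymbol{X}, \epsilon_Y)\,a + g(\boldsymbol{X}, U, \epsilon_Y)$ for each $a$ (using the exclusion restriction baked into the NPSEM form of~\eqref{eq:nonconstant-homogeneity}, so that $Y(z,a) = Y(a)$ does not depend on $z$). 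Applying Assumption~\ref{ass:UC}(a) at a fixed reference level $a \in \mathcal{A}$ gives $\mathbb{E}(Y(a) \mid \boldsymbol{X}, Z) = \mathbb{E}(Y(a) \mid \boldsymbol{X})$, and since~\eqref{eq:nonconstant-ass} yields $\mathbb{E}(h(\boldsymbol{X}, \epsilon_Y) \mid \boldsymbol{X}, Z) = \bar h(\boldsymbol{X})$ after averaging over $A$, subtracting $a\,\bar h(\boldsymbol{X})$ from both sides shows $\mathbb{E}(g(\boldsymbol{X}, U, \epsilon_Y) \mid \boldsymbol{X}, Z) = \mathbb{E}(g(\boldsymbol{X}, U, \epsilon_Y) \mid \boldsymbol{X})$; hence the $g$-part vanishes by iterated expectations over $(\boldsymbol{X}, Z)$.

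Next, I would evaluate the $h$-part by iterated expectations, conditioning on $(\boldsymbol{X}, Z, A)$: the factor $\{Z - \mathbb{E}(Z \mid \boldsymbol{X})\} A$ is $(\boldsymbol{X}, Z, A)$-measurable and so pulls out of the inner conditional expectation, which reduces to $\mathbb{E}(h(\boldsymbol{X}, \epsilon_Y) \mid \boldsymbol{X}, Z, A) = \bar h(\boldsymbol{X})$ by assumption~\eqref{eq:nonconstant-ass}. Then $\bar h(\boldsymbol{X})$ factors out of the remaining outer expectation, leaving $\bar h(\boldsymbol{X})\,\mathbb{E}\big[\{Z - \mathbb{E}(Z \mid \boldsymbol{X})\} A \mid \boldsymbol{X}\big] = \bar h(\boldsymbol{X})\,\mathrm{Cov}_P(Z, A \mid \boldsymbol{X})$. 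Combining the two parts gives $\mathrm{Cov}_P(Z, Y \mid \boldsymbol{X}) = \bar h(\boldsymbol{X})\,\mathrm{Cov}_P(Z, A \mid \boldsymbol{X})$, and dividing by the almost surely nonzero denominator completes the argument.

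The only step requiring real care is showing the $g$-part vanishes---that is, making precise how mean-independence of $g(\boldsymbol{X}, U, \epsilon_Y)$ from $Z$ given $\boldsymbol{X}$ is extracted from Assumption~\ref{ass:UC}(a) together with~\eqref{eq:nonconstant-ass}; everything else is routine conditioning. I would also remark that this computation generalizes Proposition~\ref{prop:homogeneity-ident}: when $h$ does not involve $\epsilon_Y$, condition~\eqref{eq:nonconstant-ass} holds automatically and the $h$-part argument is unchanged, so one could equivalently present Proposition~\ref{prop:nonconstant-ident} as a mild extension of the proof of Proposition~\ref{prop:homogeneity-ident}.
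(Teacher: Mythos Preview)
Your proposal is correct and follows essentially the same approach as the paper: both use Assumption~\ref{ass:UC}(a) together with the structural form $Y(a)=h(\boldsymbol{X},\epsilon_Y)a+g(\boldsymbol{X},U,\epsilon_Y)$ to reduce $\mathrm{Cov}_P(Z,Y\mid\boldsymbol{X})$ to $\mathrm{Cov}_P(Z,h(\boldsymbol{X},\epsilon_Y)A\mid\boldsymbol{X})$, and then condition on $(\boldsymbol{X},Z,A)$ and apply~\eqref{eq:nonconstant-ass} to factor out $\bar h(\boldsymbol{X})$. The paper's version is marginally more direct in that it writes $Y(a)=Y-h(\boldsymbol{X},\epsilon_Y)(A-a)$ and invokes $\mathrm{Cov}(Z,Y(a)\mid\boldsymbol{X})=0$ straight from Assumption~\ref{ass:UC}(a), thereby dispatching your ``$g$-part'' in one line without separately arguing that $g$ is mean-independent of $Z$ given $\boldsymbol{X}$.
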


In the proof of Proposition~\ref{prop:nonconstant-ident}, one may notice that the fact that $U$ does not feature in the additive effect function $h$ plays no role: the result would hold if individual treatment effects equaled $(a = a')h(\boldsymbol{X}, U, \epsilon_Y)$, so long as this function $h$ satisfied $\mathbb{E}(h(\boldsymbol{X}, U, \epsilon_Y) \mid \boldsymbol{X}, Z, A) = \mathbb{E}(h(\boldsymbol{X}, U, \epsilon_Y) \mid \boldsymbol{X})$. However, the latter assumption would be hard to justify in the context of a generative DAG as in Figure~\ref{fig:IV}. By contrast, if one were to assert model~\eqref{eq:nonconstant-homogeneity}, then assumption~\eqref{eq:nonconstant-ass} would hold automatically if one adopted a standard NPSEM model associated with a DAG from Figure~\ref{fig:IV}---either the NPSEM with independent errors~\citep{pearl2009} or the finest fully randomized causally interpretable structured tree graph~\citep{robins2010}.

\citet{wang2018} propose and study a similar condition in the case of a binary treatment, i.e., $\mathcal{A} = \{0,1\}$:
\begin{equation}\label{eq:wang-outcome}
    \mathbb{E}(Y(a = 1) - Y(a = 0) \mid \boldsymbol{X}, U) = \mathbb{E}(Y(a = 1) - Y(a = 0) \mid \boldsymbol{X}).
\end{equation}
This assumption directly encodes a lack of additive effect modification by $U$, after controlling for $\boldsymbol{X}$. Equivalently, in the language of the general structural equation~\eqref{eq:general-binary}, this asserts that $h(\boldsymbol{X}, U, \epsilon_Y)$ is mean-independent of $U$ given $\boldsymbol{X}$. This could hold, for instance, under model~\eqref{eq:nonconstant-homogeneity}, if we assumed that $\mathbb{E}(h(\boldsymbol{X}, \epsilon_Y) \mid \boldsymbol{X}, U)$ does not depend on $U$, which we can view as an alternative to assumption~\eqref{eq:nonconstant-ass}. The following identification result is proved in Theorem 1 of \citet{wang2018}. For completeness, we provide a self-contained proof in the Appendices.

\begin{proposition} \label{prop:wang-outcome-ident}
    Under the IV Assumptions~\ref{ass:relevance}, \ref{ass:UC}(a)\&(c), and condition \eqref{eq:wang-outcome} for binary $A$,
    the CATE function $\mathbb{E}(Y(a = 1) - Y(a = 0) \mid \boldsymbol{X})$ equals $\Psi_P(\boldsymbol{X})$.
\end{proposition}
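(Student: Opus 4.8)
The plan is to compute the outcome regression $\mathbb{E}_P(Y \mid \boldsymbol{X}, Z)$ in closed form, show it equals a $\sigma(\boldsymbol{X})$-measurable intercept plus the CATE times $\lambda_Z(\boldsymbol{X}) \coloneqq \mathbb{E}_P(A \mid \boldsymbol{X}, Z)$, and then feed this into the covariance ratio defining $\Psi_P$. Throughout I use consistency together with the exclusion restriction (Assumption~\ref{ass:ER}, in force throughout the paper) to write, for binary $A$, $Y = Y(A) = Y(a=0) + A\{Y(a=1) - Y(a=0)\}$, and I abbreviate the unit-level effect by $\tau \coloneqq Y(a=1) - Y(a=0)$.

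The baseline term is immediate: Assumption~\ref{ass:UC}(a) together with the exclusion restriction gives $Z \independent Y(a=0) \mid \boldsymbol{X}$, so $\mathbb{E}_P\{Y(a=0) \mid \boldsymbol{X}, Z\} = \mathbb{E}_P\{Y(a=0) \mid \boldsymbol{X}\}$ does not depend on $Z$. The crux---and the step I expect to be the main obstacle---is the interaction term $\mathbb{E}_P(A\tau \mid \boldsymbol{X}, Z)$: there is no independence of $Z$ from the \emph{product} $A\tau$ that can be invoked directly, so the idea is first to show that $\tau$ is mean-independent of $(Z, A)$ given $\boldsymbol{X}$, which is exactly where condition~\eqref{eq:wang-outcome} must enter, and the only way to bring it to bear is to route through the unobserved $U$. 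Concretely: Assumption~\ref{ass:UC}(c) with the exclusion restriction gives $(Z, A) \independent Y(a) \mid \boldsymbol{X}, U$ for $a \in \{0,1\}$, so by linearity of conditional expectation $\mathbb{E}_P(\tau \mid \boldsymbol{X}, U, Z, A) = \mathbb{E}_P(\tau \mid \boldsymbol{X}, U)$; condition~\eqref{eq:wang-outcome} collapses the right-hand side to $\mathbb{E}_P(\tau \mid \boldsymbol{X})$, which is $\sigma(\boldsymbol{X})$-measurable, and iterating expectations over $U$ yields $\mathbb{E}_P(\tau \mid \boldsymbol{X}, Z, A) = \mathbb{E}_P(\tau \mid \boldsymbol{X})$. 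Hence $\mathbb{E}_P(A\tau \mid \boldsymbol{X}, Z) = \mathbb{E}_P(\tau \mid \boldsymbol{X})\,\lambda_Z(\boldsymbol{X})$, and adding the baseline term, $\mathbb{E}_P(Y \mid \boldsymbol{X}, Z) = \mathbb{E}_P\{Y(a=0) \mid \boldsymbol{X}\} + \mathbb{E}_P(\tau \mid \boldsymbol{X})\,\lambda_Z(\boldsymbol{X})$.

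To finish, I plug this into $\Psi_P(\boldsymbol{X})$: using $\mathrm{Cov}_P(Z, W \mid \boldsymbol{X}) = \mathrm{Cov}_P\{Z, \mathbb{E}_P(W \mid \boldsymbol{X}, Z) \mid \boldsymbol{X}\}$ for $W \in \{Y, A\}$ (the within-$Z$ conditional covariance vanishing since $Z$ is then held fixed), the $\sigma(\boldsymbol{X})$-measurable intercept drops out and $\mathrm{Cov}_P(Z, Y \mid \boldsymbol{X}) = \mathbb{E}_P(\tau \mid \boldsymbol{X})\,\mathrm{Cov}_P(Z, A \mid \boldsymbol{X})$; dividing by $\mathrm{Cov}_P(Z, A \mid \boldsymbol{X})$, which is nonzero almost surely by Assumption~\ref{ass:relevance}, gives $\Psi_P(\boldsymbol{X}) = \mathbb{E}_P(\tau \mid \boldsymbol{X}) = \mathbb{E}_P\{Y(a=1) - Y(a=0) \mid \boldsymbol{X}\}$. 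Apart from that middle step everything is routine bookkeeping with the law of total covariance; it is worth noting that the exclusion restriction does quiet but essential work here, since it is what lets Assumptions~\ref{ass:UC}(a) and~\ref{ass:UC}(c)---stated for $Y(z,a)$---be applied to the $Y(a)$ terms in the decomposition of $Y$. As a sanity check, in the binary-instrument case the same computation via Lemma~\ref{lemma:binary} gives $\{\mu_1(\boldsymbol{X}) - \mu_0(\boldsymbol{X})\} / \{\lambda_1(\boldsymbol{X}) - \lambda_0(\boldsymbol{X})\}$ equal to $\mathbb{E}_P(\tau \mid \boldsymbol{X})$ directly.
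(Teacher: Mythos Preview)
Your proposal is correct and follows essentially the same route as the paper: decompose $Y = Y(a=0) + A\tau$, dispose of the baseline via Assumption~\ref{ass:UC}(a), and handle the interaction term by iterating through $U$, invoking Assumption~\ref{ass:UC}(c) to separate $(Z,A)$ from $\tau$ and then condition~\eqref{eq:wang-outcome} to eliminate the $U$-dependence. The paper's proof retains slightly more structure---factoring out $\mathbb{E}(A \mid \boldsymbol{X}, Z, U)$ explicitly---because it treats Propositions~\ref{prop:wang-outcome-ident}--\ref{prop:cui-cov-ident} in a single unified argument, but for Proposition~\ref{prop:wang-outcome-ident} alone your shortcut $\mathbb{E}_P(\tau \mid \boldsymbol{X}, Z, A) = \mathbb{E}_P(\tau \mid \boldsymbol{X})$ is equivalent.
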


Note that, as was true in Section~\ref{sec:strict-homogeneity}, once the CATE is identified, the marginal ATE is identified by averaging over the distribution of $\boldsymbol{X}$: here, $\mathbb{E}(Y(a = 1) - Y(a = 0)) = \mathbb{E}_P(\Psi_P(\boldsymbol{X}))$. Critically, both models \eqref{eq:nonconstant-homogeneity}\&\eqref{eq:nonconstant-ass} and \eqref{eq:wang-outcome} are less restrictive than the constant effect model \eqref{eq:strict-homogeneity}. However, these models still rule out the possibility that unmeasured variables act as effect modifiers, which may be hard to justify in practice~\citep{hernan2019}. Consider these models in the context of the REFLUX example. These models require that there are no unobserved covariates that are effect modifiers of the effect of surgery on quality of life. As we have previously noted, a strength of IV methods is that they allow for and acknowledge unobserved confounding of the $A \rightarrow Y$ effect. Under model~\eqref{eq:nonconstant-homogeneity} or~\eqref{eq:wang-outcome}, these unobserved variables cannot be effect modifiers on the additive scale, given measured covariates $\boldsymbol{X}$.

\subsubsection{No \texorpdfstring{$U$-$Z$}{U-Z} Interaction on \texorpdfstring{$A$}{A}}
\label{sec:no-a-interaction}

\citet{wang2018} also developed the following homogeneity assumption as an alternative to the assumption of no $A$-$Y$ effect modification: 
\begin{equation} \label{eq:wang-exposure}
    \mathbb{E}(A \mid \boldsymbol{X}, U, Z = 1) - \mathbb{E}(A \mid \boldsymbol{X}, U, Z = 0) \text{ does not depend on } U,
\end{equation}
Here we must assume that $U$ does not modify the effect of $Z$ on $A$ on the additive scale, given $\boldsymbol{X}$. This assumption also turns out to be sufficient for identification of the CATE under very similar conditions, as proved in Theorem 1 of~\citet{wang2018}.

\begin{proposition} \label{prop:wang-exposure-ident}
    Under the IV Assumptions~\ref{ass:relevance}, \ref{ass:UC}(c)\&(d), and condition \eqref{eq:wang-exposure} for binary $A$,
    the CATE function $\mathbb{E}(Y(a = 1) - Y(a = 0) \mid \boldsymbol{X})$ equals $\Psi_P(\boldsymbol{X})$.
\end{proposition}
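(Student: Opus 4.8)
The plan is to evaluate the numerator and denominator of $\Psi_P(\boldsymbol{X})$ separately, in each case ``peeling off'' the unmeasured confounder $U$. For $W \in \{A, Y\}$, the law of total covariance gives $\mathrm{Cov}_P(Z, W \mid \boldsymbol{X}) = \mathbb{E}\{\mathrm{Cov}_P(Z, W \mid \boldsymbol{X}, U) \mid \boldsymbol{X}\} + \mathrm{Cov}_P\{\mathbb{E}(Z\mid\boldsymbol{X},U), \mathbb{E}(W\mid\boldsymbol{X},U) \mid \boldsymbol{X}\}$, and since Assumption~\ref{ass:UC}(d) forces $\mathbb{E}(Z\mid\boldsymbol{X},U) = \pi_1(\boldsymbol{X})$ to be $\boldsymbol{X}$-measurable, the second term vanishes. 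So it suffices to compute the inner covariances $\mathrm{Cov}_P(Z, W\mid\boldsymbol{X},U)$ and then average over $U$ given $\boldsymbol{X}$.

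For the denominator, binary $Z$ with $P(Z=1\mid\boldsymbol{X},U) = \pi_1(\boldsymbol{X})$ (again by Assumption~\ref{ass:UC}(d)) gives the elementary identity $\mathrm{Cov}_P(Z, A\mid\boldsymbol{X},U) = \pi_1(\boldsymbol{X})\{1-\pi_1(\boldsymbol{X})\}\{\mathbb{E}(A\mid\boldsymbol{X},U,Z=1) - \mathbb{E}(A\mid\boldsymbol{X},U,Z=0)\}$, and by condition~\eqref{eq:wang-exposure} the bracketed difference equals some $\delta(\boldsymbol{X})$ depending on $\boldsymbol{X}$ alone; hence $\mathrm{Cov}_P(Z, A\mid\boldsymbol{X},U) = \pi_1(\boldsymbol{X})\{1-\pi_1(\boldsymbol{X})\}\delta(\boldsymbol{X})$ is free of $U$ and survives the outer average unchanged. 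The same identity with $A$ replaced by $Y$ reduces the numerator to computing $\pi_1(\boldsymbol{X})\{1-\pi_1(\boldsymbol{X})\}\{\mathbb{E}(Y\mid\boldsymbol{X},U,Z=1) - \mathbb{E}(Y\mid\boldsymbol{X},U,Z=0)\}$, i.e.\ a difference of conditional outcome means.

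This last step is where the structural content enters. Writing $Y = \{Y(1) - Y(0)\}A + Y(0)$ — exactly the binary-$A$ structural form~\eqref{eq:general-binary}, with $Y(0) = g$ and $Y(1)-Y(0) = h$, which already encodes the exclusion restriction — I would split $\mathbb{E}(Y\mid\boldsymbol{X},U,Z=z)$ into a $Y(0)$ piece and a $\{Y(1)-Y(0)\}A$ piece. Assumption~\ref{ass:UC}(c) with the exclusion restriction gives $(Z,A) \independent \{Y(0), Y(1)\} \mid \boldsymbol{X}, U$ (reading~\ref{ass:UC}(c) jointly over its index set, equivalently invoking the NPSEM with independent errors so that $\epsilon_Y \independent (Z,A)\mid \boldsymbol{X},U$). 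Hence $\mathbb{E}(Y(0)\mid\boldsymbol{X},U,Z=z) = \mathbb{E}(Y(0)\mid\boldsymbol{X},U)$ does not depend on $z$ and cancels in the $Z=1$ versus $Z=0$ difference, while $\mathbb{E}[\{Y(1)-Y(0)\}A\mid\boldsymbol{X},U,Z=z] = \mathbb{E}(A\mid\boldsymbol{X},U,Z=z)\,\mathbb{E}\{Y(1)-Y(0)\mid\boldsymbol{X},U\}$ by the same independence. Taking the difference and using~\eqref{eq:wang-exposure} once more, $\mathbb{E}(Y\mid\boldsymbol{X},U,Z=1) - \mathbb{E}(Y\mid\boldsymbol{X},U,Z=0) = \delta(\boldsymbol{X})\,\mathbb{E}\{Y(1)-Y(0)\mid\boldsymbol{X},U\}$; multiplying by $\pi_1(1-\pi_1)$ and averaging over $U\mid\boldsymbol{X}$ via iterated expectation yields $\mathrm{Cov}_P(Z,Y\mid\boldsymbol{X}) = \pi_1(\boldsymbol{X})\{1-\pi_1(\boldsymbol{X})\}\delta(\boldsymbol{X})\,\mathbb{E}\{Y(1)-Y(0)\mid\boldsymbol{X}\}$. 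Forming the ratio, the common factor $\pi_1(\boldsymbol{X})\{1-\pi_1(\boldsymbol{X})\}\delta(\boldsymbol{X})$ cancels — it equals $\mathrm{Cov}_P(Z,A\mid\boldsymbol{X})$, which is nonzero almost surely by Assumption~\ref{ass:relevance} — leaving $\Psi_P(\boldsymbol{X}) = \mathbb{E}\{Y(1)-Y(0)\mid\boldsymbol{X}\}$. The main obstacle is precisely the factorization and $Y(0)$-cancellation in this last paragraph: one must be careful that the pointwise-in-$(z,a)$ Assumption~\ref{ass:UC}(c) is deployed in its joint form, so that the cross term $\{Y(1)-Y(0)\}A$ genuinely separates given $(\boldsymbol{X},U)$; everything else is bookkeeping with the law of total covariance and the binary-$Z$ covariance identity.
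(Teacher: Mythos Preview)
Your proof is correct. Both your argument and the paper's rest on the same structural ingredients---conditioning on $U$, using Assumption~\ref{ass:UC}(c) to factor $A$ from the potential outcomes, Assumption~\ref{ass:UC}(d) to drop $Z$-dependence when averaging over $U$, and condition~\eqref{eq:wang-exposure} to make the $Z$-on-$A$ effect a function of $\boldsymbol{X}$ alone---but they are organized differently. The paper takes a residual approach: it shows that $\mathbb{E}(Y - \tau(\boldsymbol{X})A \mid \boldsymbol{X}, Z)$ does not depend on $Z$, whence $\mathrm{Cov}(Z, Y - \tau(\boldsymbol{X})A \mid \boldsymbol{X}) = 0$ and $\tau = \Psi_P$; this framing lets the paper handle Propositions~\ref{prop:wang-outcome-ident}, \ref{prop:wang-exposure-ident}, and~\ref{prop:cui-cov-ident} simultaneously by reducing all three to the single condition $\mathrm{Cov}(\widetilde{\delta}(\boldsymbol{X},U), \widetilde{\tau}(\boldsymbol{X},U)\mid\boldsymbol{X}) = 0$. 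Your route via the law of total covariance and separate computation of numerator and denominator is more direct for this one proposition but does not unify the three results. One small observation: the joint-in-$(z,a)$ reading of Assumption~\ref{ass:UC}(c) you flag as the main obstacle is not strictly needed---the pointwise assumption already gives $A \independent Y(a) \mid \boldsymbol{X}, U, Z$ for each $a$, so handling $\mathbb{E}(AY(1)\mid\boldsymbol{X},U,Z)$ and $\mathbb{E}(AY(0)\mid\boldsymbol{X},U,Z)$ term by term and then subtracting avoids the cross-world concern (the paper's proof proceeds this way too).
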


For interpretation, let's again consider this assumption in the REFLUX example. For model~\eqref{eq:wang-exposure} to hold, it must be the case that there are no unobserved $A$-$Y$ confounders that modify the effect of assignment to surgery on actual exposure to surgery.

\subsubsection{A Weaker "No Unmeasured Common Effect Modifier" Condition} \label{sec:cui-cov}
\citet{cui2021} propose a slightly weaker form of conditions~\eqref{eq:wang-outcome} and~\eqref{eq:wang-exposure}, which allows for some degree of additive effect modification by unmeasured confounders on both the $Z$-$A$ and $A$-$Y$ relationships, so long as the functional forms of the two types of effect modification are suitably uncorrelated. Namely, letting $\widetilde{\tau}(\boldsymbol{X}, U) = \mathbb{E}(Y(a = 1) - Y(a = 0) \mid \boldsymbol{X}, U)$ and $\widetilde{\delta}(\boldsymbol{X}, U) = \mathbb{E}(A \mid \boldsymbol{X}, U, Z = 1) - \mathbb{E}(A \mid \boldsymbol{X}, U, Z = 0)$, they put forth the following condition:
\begin{equation}\label{eq:cui-cov}
    \mathrm{Cov}\left(\widetilde{\delta}(\boldsymbol{X}, U), \widetilde{\tau}(\boldsymbol{X}, U) \mid \boldsymbol{X}\right) = 0.
\end{equation}
It is straightforward to verify that either~\eqref{eq:wang-outcome}, which says that $\widetilde{\tau}(\boldsymbol{X}, U)$ depends only on $\boldsymbol{X}$, or~\eqref{eq:wang-exposure}, which says that $\widetilde{\delta}(\boldsymbol{X}, U)$ depends only on $\boldsymbol{X}$, will guarantee condition~\eqref{eq:cui-cov}. This weaker condition, however, is sufficient for identification of the CATE. We again provide a self-contained proof in the Appendices.

\begin{proposition} \label{prop:cui-cov-ident}
    Under the IV Assumptions~\ref{ass:relevance}, \ref{ass:UC}(c)\&(d), and condition \eqref{eq:cui-cov} for binary $A$,
    the CATE function $\mathbb{E}(Y(a = 1) - Y(a = 0) \mid \boldsymbol{X})$ equals $\Psi_P(\boldsymbol{X})$.
\end{proposition}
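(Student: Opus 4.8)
The plan is to work directly with potential outcomes, computing the numerator and denominator of $\Psi_P(\boldsymbol{X})$ by conditioning on $(\boldsymbol{X},U)$ and then integrating $U$ out, and to recognize the result as a strict generalization of the proofs of Propositions~\ref{prop:wang-outcome-ident} and~\ref{prop:wang-exposure-ident}. Since $\mathcal{A}=\{0,1\}$, consistency gives $Y = Y(A) = Y(0) + AD$ with $D \coloneqq Y(1)-Y(0)$; since condition~\eqref{eq:cui-cov} is stated for a binary instrument I take $\mathcal{Z}=\{0,1\}$. Writing $\widetilde{\lambda}_z(\boldsymbol{X},U) \coloneqq \mathbb{E}(A\mid\boldsymbol{X},U,Z=z)$, so that $\widetilde{\delta}(\boldsymbol{X},U)=\widetilde{\lambda}_1(\boldsymbol{X},U)-\widetilde{\lambda}_0(\boldsymbol{X},U)$ and $\widetilde{\tau}(\boldsymbol{X},U)=\mathbb{E}(D\mid\boldsymbol{X},U)$, the target of the computation is to show that the numerator equals $\pi_1(\boldsymbol{X})\{1-\pi_1(\boldsymbol{X})\}\,\mathbb{E}\{\widetilde{\delta}(\boldsymbol{X},U)\widetilde{\tau}(\boldsymbol{X},U)\mid\boldsymbol{X}\}$ and the denominator equals $\pi_1(\boldsymbol{X})\{1-\pi_1(\boldsymbol{X})\}\,\mathbb{E}\{\widetilde{\delta}(\boldsymbol{X},U)\mid\boldsymbol{X}\}$, at which point~\eqref{eq:cui-cov} closes the argument.

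For the denominator, I would condition on $(\boldsymbol{X},U)$ and use Assumption~\ref{ass:UC}(d) to get $\mathbb{E}(Z\mid\boldsymbol{X},U)=\pi_1(\boldsymbol{X})$ and $P(Z=z\mid\boldsymbol{X},U)=\pi_z(\boldsymbol{X})$: since $Z\in\{0,1\}$, $\mathbb{E}(ZA\mid\boldsymbol{X},U) = \mathbb{E}\{Z\,\mathbb{E}(A\mid\boldsymbol{X},U,Z)\mid\boldsymbol{X},U\} = \pi_1(\boldsymbol{X})\widetilde{\lambda}_1(\boldsymbol{X},U)$, while $\mathbb{E}(A\mid\boldsymbol{X},U)=\pi_1(\boldsymbol{X})\widetilde{\lambda}_1(\boldsymbol{X},U)+\{1-\pi_1(\boldsymbol{X})\}\widetilde{\lambda}_0(\boldsymbol{X},U)$; taking $\mathbb{E}(\cdot\mid\boldsymbol{X})$ and subtracting $\mathbb{E}(Z\mid\boldsymbol{X})\mathbb{E}(A\mid\boldsymbol{X})$ collapses to $\mathrm{Cov}_P(Z,A\mid\boldsymbol{X}) = \pi_1(\boldsymbol{X})\{1-\pi_1(\boldsymbol{X})\}\,\mathbb{E}\{\widetilde{\delta}(\boldsymbol{X},U)\mid\boldsymbol{X}\}$. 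For the numerator, I would split $\mathrm{Cov}_P(Z,Y\mid\boldsymbol{X}) = \mathrm{Cov}_P(Z,Y(0)\mid\boldsymbol{X}) + \mathrm{Cov}_P(Z,AD\mid\boldsymbol{X})$. The first term vanishes: by the exclusion restriction $Y(0)=Y(z,0)$ and Assumption~\ref{ass:UC}(c), $Z\independent Y(0)\mid\boldsymbol{X},U$, so $\mathbb{E}(ZY(0)\mid\boldsymbol{X},U)=\pi_1(\boldsymbol{X})\mathbb{E}(Y(0)\mid\boldsymbol{X},U)$ by~\ref{ass:UC}(d), which integrates to $\pi_1(\boldsymbol{X})\mathbb{E}(Y(0)\mid\boldsymbol{X})=\mathbb{E}(Z\mid\boldsymbol{X})\mathbb{E}(Y(0)\mid\boldsymbol{X})$. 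For the second term, the crucial step is that Assumption~\ref{ass:UC}(c), read jointly over $a\in\{0,1\}$, together with the exclusion restriction, yields $(Z,A)\independent D\mid\boldsymbol{X},U$; hence $\mathbb{E}(ZAD\mid\boldsymbol{X},U)=\mathbb{E}(ZA\mid\boldsymbol{X},U)\widetilde{\tau}(\boldsymbol{X},U)=\pi_1(\boldsymbol{X})\widetilde{\lambda}_1(\boldsymbol{X},U)\widetilde{\tau}(\boldsymbol{X},U)$ and $\mathbb{E}(AD\mid\boldsymbol{X},U)=\mathbb{E}(A\mid\boldsymbol{X},U)\widetilde{\tau}(\boldsymbol{X},U)$, and the same bookkeeping as for the denominator gives $\mathrm{Cov}_P(Z,AD\mid\boldsymbol{X})=\pi_1(\boldsymbol{X})\{1-\pi_1(\boldsymbol{X})\}\,\mathbb{E}\{\widetilde{\delta}(\boldsymbol{X},U)\widetilde{\tau}(\boldsymbol{X},U)\mid\boldsymbol{X}\}$.

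To finish, Assumption~\ref{ass:relevance} forces $\pi_1(\boldsymbol{X})\in(0,1)$ almost surely (as noted after~\eqref{eq:pos}) and $\mathrm{Cov}_P(Z,A\mid\boldsymbol{X})\neq0$, hence $\mathbb{E}\{\widetilde{\delta}(\boldsymbol{X},U)\mid\boldsymbol{X}\}\neq0$ almost surely, so the factor $\pi_1(\boldsymbol{X})\{1-\pi_1(\boldsymbol{X})\}$ cancels in the ratio and
\[
\Psi_P(\boldsymbol{X}) = \frac{\mathbb{E}\{\widetilde{\delta}(\boldsymbol{X},U)\widetilde{\tau}(\boldsymbol{X},U)\mid\boldsymbol{X}\}}{\mathbb{E}\{\widetilde{\delta}(\boldsymbol{X},U)\mid\boldsymbol{X}\}} = \frac{\mathbb{E}\{\widetilde{\delta}(\boldsymbol{X},U)\mid\boldsymbol{X}\}\,\mathbb{E}\{\widetilde{\tau}(\boldsymbol{X},U)\mid\boldsymbol{X}\}}{\mathbb{E}\{\widetilde{\delta}(\boldsymbol{X},U)\mid\boldsymbol{X}\}} = \mathbb{E}\{\widetilde{\tau}(\boldsymbol{X},U)\mid\boldsymbol{X}\},
\]
where the middle step is exactly condition~\eqref{eq:cui-cov}; the tower property then identifies this with $\mathbb{E}(Y(1)-Y(0)\mid\boldsymbol{X})$. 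I expect the only delicate point to be justifying the joint conditional independence $(Z,A)\independent(Y(0),Y(1))\mid\boldsymbol{X},U$ from the per-value statement of Assumption~\ref{ass:UC}(c) — this is the standard reading under the NPSEM of Figure~\ref{fig:IV} — and keeping the $\pi_z(\boldsymbol{X})$ bookkeeping straight; everything else is routine conditioning and integration, and the computation transparently recovers Propositions~\ref{prop:wang-outcome-ident} and~\ref{prop:wang-exposure-ident} as the special cases where $\widetilde{\tau}$ or $\widetilde{\delta}$ does not depend on $U$.
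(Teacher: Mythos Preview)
Your proof is correct and takes a somewhat different route from the paper's. The paper unifies Propositions~\ref{prop:wang-outcome-ident}--\ref{prop:cui-cov-ident} in one argument: it shows that $\mathbb{E}(Y-\tau(\boldsymbol{X})A\mid\boldsymbol{X},Z)$ does not depend on $Z$ by conditioning on $U$ inside $\mathbb{E}(A\{Y(1)-Y(0)-\tau(\boldsymbol{X})\}\mid\boldsymbol{X},Z)$, writing $\mathbb{E}(A\mid\boldsymbol{X},U,Z)=Z\widetilde{\delta}+\widetilde{\lambda}_0$, and isolating the $Z$-dependent part as $Z\cdot\mathrm{Cov}(\widetilde{\delta},\widetilde{\tau}\mid\boldsymbol{X})$, which vanishes under~\eqref{eq:cui-cov}. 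You instead compute the numerator and denominator of $\Psi_P(\boldsymbol{X})$ separately by conditioning on $(\boldsymbol{X},U)$, arriving at the closed-form identity $\Psi_P(\boldsymbol{X})=\mathbb{E}(\widetilde{\delta}\widetilde{\tau}\mid\boldsymbol{X})/\mathbb{E}(\widetilde{\delta}\mid\boldsymbol{X})$ \emph{before} invoking~\eqref{eq:cui-cov}. This is precisely the identity~\eqref{eq:stochastic-ident} that the paper derives separately for Proposition~\ref{prop:stochastic-ident}, so your argument has the pleasant side effect of unifying Propositions~\ref{prop:cui-cov-ident} and~\ref{prop:stochastic-ident} under the same computation; the paper's organization instead groups Propositions~\ref{prop:wang-outcome-ident}--\ref{prop:cui-cov-ident} together as a single proof with incremental assumptions. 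Both arguments ultimately rely on $A\independent (Y(0),Y(1))\mid\boldsymbol{X},U,Z$, which---as you rightly flag---needs the joint reading of Assumption~\ref{ass:UC}(c) across $a\in\{0,1\}$; the paper invokes this step in the same way without comment.
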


According to \citet{cui2021}, condition~\eqref{eq:cui-cov} ``essentially states that there is no common effect modifier by an unmeasured confounder, of the additive effect of treatment on the outcome, and the additive effect of the IV on treatment.'' While no elaboration is given beyond this, we infer the following interpretation: if $U$ can be decomposed into $U = (U_1, U_2)$, where $\widetilde{\delta}(\boldsymbol{X}, U)$ depends on $U$ only through $U_1$, and $\widetilde{\tau}(\boldsymbol{X}, U)$ depends on $U$ only through $U_2$, then condition~\eqref{eq:cui-cov} will hold assuming that $U_1 \independent U_2 \mid \boldsymbol{X}$. This provides subject matter experts a way to justify condition~\eqref{eq:cui-cov}, which otherwise might be quite abstract. Namely, one could try to argue that unobserved confounders modifying the $Z$-$A$ effect do not overlap with (and are conditionally independent of) those modifying the $A$-$Y$ effect.

\subsubsection{No \texorpdfstring{$Z$-$A$}{Z-A} Interaction on \texorpdfstring{$Y$}{Y}}
\label{sec:SMM}

The assumptions considered thus far have mostly explicitly (other than condition \eqref{eq:nonconstant-ass}) stipulated that unobserved variables were not effect modifiers of key effects in the IV framework. Alternatively, it is possible to identify causal effects by restricting effect modification by the instrument $Z$. For example, for a binary treatment, we might assume that $E(Y(a=1) - Y(a=0) \mid \boldsymbol{X}, Z=1, A=a) = E(Y(a=1) - Y(a=0) \mid \boldsymbol{X}, Z=0, A=a)$ for $a = 0,1$~\citep{Hernan:2006}. Here, for the treated group $(A=1)$ and the control group $(A=0)$, the causal effect is unrelated to the level of the instrument. This homogeneity assumption has typically been formulated in a somewhat weaker manner, in the context of an additive structural mean model (SMM)~\citep{robins1989, robins1994b}. Concretely, without loss of generality assuming $0 \in \mathcal{A}$, consider the following SMM:
\begin{equation} \label{eq:SMM}
    \mathbb{E}(Y - Y(a = 0) \mid \boldsymbol{X}, Z, A)  = r(\boldsymbol{X})A,
\end{equation}
for some function $r: \mathcal{X} \to \mathbb{R}$. By consistency, model \eqref{eq:SMM} is equivalent to 
\[\mathbb{E}(Y(a) - Y(a = 0) \mid \boldsymbol{X}, Z, A = a) = r(\boldsymbol{X})a, \text{ for all } a \in \mathcal{A},\]
from which it is clear that the SMM assumes no effect modification by $Z$ on the additive effect of $A$ on $Y$ relative to baseline $a = 0$, given $\boldsymbol{X}$, among those treated with $A = a$. This has also been referred to as the ``no current treatment interaction'' assumption for SMMs, which states that the causal effect of one level of the intervention (compared to baseline) is identical among subjects who actually received that level, regardless of assigned treatment \citep{Hernan:2006}. In this model, $r(\boldsymbol{X})a = \mathbb{E}(Y(a) - Y(a = 0) \mid \boldsymbol{X}, A = a)$ represents the conditional average treatment effect among the treated (CATT) for each $a \in \mathcal{A}$, e.g., if $\mathcal{A} = \{0,1\}$,
\[r(\boldsymbol{X}) = \mathbb{E}(Y(a = 1) - Y(a = 0) \mid \boldsymbol{X}, A = 1).\] Note that once the CATT is identified, the marginal ATT is identified by averaging over the distribution of $\boldsymbol{X}$ among those with $A = 1$. While the ATE measures the average difference in outcomes when all individuals in the study population are assigned to treatment versus when all individuals are assigned to control, the ATT measures the average difference in outcomes among those individuals in the population that were actually exposed to the treatment. In REFLUX, the ATT would be the average difference in quality of life among those patients that were actually exposed to surgery.

Model~\eqref{eq:SMM} together with the IV assumptions yields identification of the causal effect $r(\boldsymbol{X})$, as we now formalize.
\begin{proposition} \label{prop:SMM-ident}
    Under the IV Assumptions~\ref{ass:relevance}, \ref{ass:UC}(a), and model \eqref{eq:SMM}
    the CATT function $r(\boldsymbol{X})$ equals $\Psi_P(\boldsymbol{X})$.
\end{proposition}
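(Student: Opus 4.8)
The plan is to compute $\mathrm{Cov}_P(Z, Y \mid \boldsymbol{X})$ directly, show it factors as $r(\boldsymbol{X}) \cdot \mathrm{Cov}_P(Z, A \mid \boldsymbol{X})$, and then divide by the (almost surely nonzero, by Assumption~\ref{ass:relevance}) denominator. The starting point is the decomposition $Y = \{Y - Y(a = 0)\} + Y(a = 0)$, so that
\[
\mathrm{Cov}_P(Z, Y \mid \boldsymbol{X}) = \mathrm{Cov}_P\big(Z,\, Y - Y(a = 0) \mid \boldsymbol{X}\big) + \mathrm{Cov}_P\big(Z,\, Y(a = 0) \mid \boldsymbol{X}\big),
\]
and I would treat the two terms separately.

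For the second term, the exclusion restriction (Assumption~\ref{ass:ER}, maintained as a core assumption throughout) gives $Y(a = 0) = Y(z, a = 0)$, so Assumption~\ref{ass:UC}(a) yields $Z \independent Y(a = 0) \mid \boldsymbol{X}$ and hence $\mathrm{Cov}_P(Z, Y(a = 0) \mid \boldsymbol{X}) = 0$. For the first term, I would condition on $(\boldsymbol{X}, Z, A)$ via the tower property: since $\mathbb{E}(Z\, g \mid \boldsymbol{X}) = \mathbb{E}\{Z\,\mathbb{E}(g \mid \boldsymbol{X}, Z, A) \mid \boldsymbol{X}\}$ for any integrable $g$, the SMM~\eqref{eq:SMM} replaces $\mathbb{E}(Y - Y(a = 0) \mid \boldsymbol{X}, Z, A)$ by $r(\boldsymbol{X})A$, giving $\mathbb{E}\{Z(Y - Y(a=0)) \mid \boldsymbol{X}\} = r(\boldsymbol{X})\,\mathbb{E}(ZA \mid \boldsymbol{X})$ and, likewise, $\mathbb{E}(Y - Y(a=0) \mid \boldsymbol{X}) = r(\boldsymbol{X})\,\mathbb{E}(A \mid \boldsymbol{X})$. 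Subtracting, $\mathrm{Cov}_P(Z, Y - Y(a=0) \mid \boldsymbol{X}) = r(\boldsymbol{X})\,\mathrm{Cov}_P(Z, A \mid \boldsymbol{X})$.

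Combining the two terms gives $\mathrm{Cov}_P(Z, Y \mid \boldsymbol{X}) = r(\boldsymbol{X})\,\mathrm{Cov}_P(Z, A \mid \boldsymbol{X})$ almost surely; dividing by $\mathrm{Cov}_P(Z, A \mid \boldsymbol{X})$, which is nonzero almost surely by Assumption~\ref{ass:relevance}, yields $r(\boldsymbol{X}) = \Psi_P(\boldsymbol{X})$. (That $r(\boldsymbol{X})$ is the CATT has already been recorded in the surrounding discussion, via consistency: $r(\boldsymbol{X}) = \mathbb{E}(Y(a=1) - Y(a=0) \mid \boldsymbol{X}, A = 1)$ when $\mathcal{A} = \{0,1\}$.)

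There is no serious obstacle here — the argument is a short moment computation. The only points needing care are (i) invoking the exclusion restriction to pass from Assumption~\ref{ass:UC}(a), which is phrased in terms of $Y(z,a)$, to the statement $Z \independent Y(a = 0) \mid \boldsymbol{X}$ that is actually used, and (ii) lining up the conditioning sets so that the SMM~\eqref{eq:SMM}, which conditions on $(\boldsymbol{X}, Z, A)$, can be applied after a single round of iterated expectations. Everything else is bookkeeping.
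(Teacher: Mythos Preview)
Your proof is correct and essentially matches the paper's argument: both show $\mathrm{Cov}_P(Z, Y - r(\boldsymbol{X})A \mid \boldsymbol{X}) = 0$ by combining the SMM~\eqref{eq:SMM} (to reduce $Y - Y(a=0)$ to $r(\boldsymbol{X})A$ in conditional mean) with Assumption~\ref{ass:UC}(a) (to kill the $Y(a=0)$ piece), then divide by the relevance denominator. The only cosmetic difference is that the paper groups $Y - r(\boldsymbol{X})A$ first and shows its conditional mean given $(\boldsymbol{X}, Z)$ is free of $Z$, whereas you split $Y$ as $\{Y - Y(a=0)\} + Y(a=0)$ and handle the two covariances separately; the content is identical.
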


As is the case for the CATE in the previous identification schemes, the CATT can be marginalized to obtain a population-level effect. Specifically, under the SMM~\eqref{eq:SMM}, we can point identify
\[\mathbb{E}(Y(a) - Y(a = 0) \mid A = a) = \mathbb{E}_P(\Psi_P(\boldsymbol{X}) \mid A = a)a,\]
the average treatment effect among the treated (ATT) for the $A = a$ group. For a binary treatment, this yields the usual ATT, $\mathbb{E}(Y(a = 1) - Y(a = 0) \mid A = 1) = \mathbb{E}_P(\Psi_P(\boldsymbol{X}) \mid A = 1)$.

To convey intuition for this homogeneity assumption~\eqref{eq:SMM}, we again use the REFLUX example from above. In words, since in this trial we may choose (by randomization) $\boldsymbol{X}$ to be empty in the core IV assumptions, the SMM for such a choice would imply that the treatment effect among those patients exposed to surgery is unrelated to the instrument (marginal over the distribution of $\boldsymbol{X} \mid A = 1$). That is, the effect of surgery among those who had undergone surgery would be the same regardless of trial arm assignment. Mathematically, with $\boldsymbol{X} = \emptyset$ and $\mathcal{A} = \{0,1\}$, model~\eqref{eq:SMM} simplifies to
\begin{equation}\label{eq:SMM-noX}
    \mathbb{E}(Y(a = 1) - Y(a = 0) \mid Z, A = 1) = \mathbb{E}(Y(a = 1) - Y(a = 0) \mid A = 1).
\end{equation} Alternatively, one could choose to incorporate measured baseline covariates $\boldsymbol{X}$---recall that this will not typically affect the validity of the IV Assumptions~\ref{ass:relevance}--\ref{ass:ER}---if model~\eqref{eq:SMM} were deemed more plausible.
As noted in \citet{hernan2019}, regardless of the incorporation of $\boldsymbol{X}$, the SMM model~\eqref{eq:SMM} is both untestable and not particularly intuitive.

Critically, applied investigators analyzing clinical trial data may mistakenly interpret Proposition~\ref{prop:SMM-ident} as guaranteeing identification of the ATT via the covariate-less IV functional $\frac{\mathbb{E}_P(Y \mid Z = 1) - \mathbb{E}_P(Y \mid Z = 0)}{\mathbb{E}_P(A \mid Z = 1) - \mathbb{E}_P(A \mid Z = 0)}$. However, this interpretation relies on the validity of the SMM without covariates, i.e., it is correct if the homogeneity assumption happens to hold \emph{unconditionally}. In most applications, though, it is likely that there are at least a few effect modifier variables. For example, the effect of surgery might vary with baseline frailty, since healthier patients might benefit more from surgery as they are less likely to experience post-operative infections. If we condition on sufficient effect modifiers $\boldsymbol{X}$ such that $Z$ itself exhibits no residual effect modification among the treated (i.e., model~\eqref{eq:SMM} is valid), analysts can recover the CATT by Proposition~\ref{prop:SMM-ident}, and thus the ATT by marginalizing.
That is, in a randomized study, conditioning on $\boldsymbol{X}$ may be useful to render the SMM more plausible, in that homogeneity would need to hold within strata of $\boldsymbol{X}$ rather than unconditionally. To be sure, mathematically, model~\eqref{eq:SMM} does not imply model~\eqref{eq:SMM-noX}: under~\eqref{eq:SMM}, we have $\mathbb{E}(Y(a = 1) - Y(a = 0) \mid Z=z, A = 1) = \mathbb{E}(r(\boldsymbol{X}) \mid Z = z, A = 1)$, which very well may depend on the level $z \in \mathcal{Z}$ since in general $Z \not \independent \boldsymbol{X} \mid A$; graphically, in Figures~\ref{fig:IVa} and~\ref{fig:IVb}, the path $Z \rightarrow A \leftarrow \boldsymbol{X}$ is open when conditioning on the collider $A$.

One may wonder whether and under what conditions a SMM such as model~\eqref{eq:SMM} may also identify the CATE. This was mostly resolved in Theorem 1 of \citet{Hernan:2006}, in which they showed that for binary instrument and treatment, if we also assume no effect modification by $Z$ on the effect of $A$ on $Y$ among those treated with $A = 0$, then the CATT and CATE coincide---by Proposition~\ref{prop:SMM-ident}, the CATE then equals $\Psi_P(\boldsymbol{X})$. We reproduce the relevant result here.

\begin{proposition}\label{prop:hernan}
    For binary $Z$ and $A$, under the IV Assumptions~\ref{ass:relevance}, \ref{ass:UC}(a), model~\eqref{eq:SMM}, and
    \begin{equation*} 
    \mathbb{E}(Y(a = 1) - Y\mid \boldsymbol{X}, Z, A)  = s(\boldsymbol{X})(1 - A),
    \end{equation*}
    for some function $s: \mathcal{X} \to \mathbb{R}$, it holds that 
    \[r(\boldsymbol{X}) = s(\boldsymbol{X}) = \mathbb{E}(Y(a = 1) - Y(a = 0) \mid \boldsymbol{X}) = \Psi_P(\boldsymbol{X}).\]
\end{proposition}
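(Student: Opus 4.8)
The plan is to reduce all four claimed equalities to the conditional average treatment effect $\tau(\boldsymbol{X}) \coloneqq \mathbb{E}(Y(a=1) - Y(a=0) \mid \boldsymbol{X})$, exploiting the two structural restrictions on the $A=1$ and $A=0$ strata separately, and then to invoke Proposition~\ref{prop:SMM-ident} to land on $\Psi_P$. First I would specialize the two model equations via consistency ($Y = Y(A)$): model~\eqref{eq:SMM} at $A=1$ reads $\mathbb{E}(Y(a=1) - Y(a=0) \mid \boldsymbol{X}, Z, A=1) = r(\boldsymbol{X})$, and the additional condition $\mathbb{E}(Y(a=1) - Y \mid \boldsymbol{X}, Z, A) = s(\boldsymbol{X})(1-A)$ at $A=0$ reads $\mathbb{E}(Y(a=1) - Y(a=0) \mid \boldsymbol{X}, Z, A=0) = s(\boldsymbol{X})$ (at the complementary levels of $A$ both identities collapse to $0=0$).

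Next I would apply the law of total expectation over the binary $A$: writing $\lambda_z(\boldsymbol{X}) = P(A=1 \mid \boldsymbol{X}, Z=z)$, this gives $\mathbb{E}(Y(a=1) - Y(a=0) \mid \boldsymbol{X}, Z=z) = r(\boldsymbol{X})\lambda_z(\boldsymbol{X}) + s(\boldsymbol{X})\{1 - \lambda_z(\boldsymbol{X})\}$ for $z \in \{0,1\}$. On the other hand, Assumption~\ref{ass:UC}(a) together with the exclusion restriction (Assumption~\ref{ass:ER}) gives $Z \independent Y(a) \mid \boldsymbol{X}$ for each $a$, so the left-hand side equals $\tau(\boldsymbol{X})$ regardless of $z$. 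Equating the $z=1$ and $z=0$ expressions and subtracting yields $\{r(\boldsymbol{X}) - s(\boldsymbol{X})\}\{\lambda_1(\boldsymbol{X}) - \lambda_0(\boldsymbol{X})\} = 0$ almost surely.

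It then remains to cancel the factor $\lambda_1(\boldsymbol{X}) - \lambda_0(\boldsymbol{X})$. Since $Z$ is binary and positivity holds (implied by relevance, per the Remark following Assumption~\ref{ass:ER}), one has $\mathrm{Cov}_P(Z, A \mid \boldsymbol{X}) = \pi_1(\boldsymbol{X})\{1 - \pi_1(\boldsymbol{X})\}\{\lambda_1(\boldsymbol{X}) - \lambda_0(\boldsymbol{X})\}$, so Assumption~\ref{ass:relevance} forces $\lambda_1(\boldsymbol{X}) \ne \lambda_0(\boldsymbol{X})$ almost surely; hence $r(\boldsymbol{X}) = s(\boldsymbol{X})$ a.s. Substituting back into the total-expectation identity gives $\tau(\boldsymbol{X}) = r(\boldsymbol{X})$, i.e., the CATT coincides with the CATE. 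Finally, Proposition~\ref{prop:SMM-ident}, whose hypotheses are a subset of those assumed here, gives $r(\boldsymbol{X}) = \Psi_P(\boldsymbol{X})$, chaining together all four equalities.

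The computation is essentially routine; the only point requiring care is bookkeeping of assumptions---in particular, that dropping $Z$ from the conditioning set in $\tau$ is licensed precisely by $Z \independent Y(a) \mid \boldsymbol{X}$ (Assumption~\ref{ass:UC}(a) plus exclusion), and that consistency is what lets one read off $r$ and $s$ from the two structural equations on the two strata of $A$.
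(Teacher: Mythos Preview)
Your argument is correct and is essentially the paper's own proof: read off $r$ and $s$ on the $A=1$ and $A=0$ strata via consistency, average over $A$ within each $Z$-level, equate the $Z=0$ and $Z=1$ expressions using Assumption~\ref{ass:UC}(a), cancel $\lambda_1-\lambda_0$ by relevance to get $r=s$, and finish with Proposition~\ref{prop:SMM-ident}. One minor bookkeeping note: Assumption~\ref{ass:ER} is not among the proposition's listed hypotheses, and the paper (as in the proofs of Propositions~\ref{prop:homogeneity-ident} and~\ref{prop:SMM-ident}) applies Assumption~\ref{ass:UC}(a) directly to $Y(a)$, so your separate appeal to the exclusion restriction is not needed here.
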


Two alternatives to the additive SMM~\eqref{eq:SMM} are also worth mentioning. First, it is possible to formulate a SMM on an alternative (e.g., multiplicative, logistic) scale \citep{robins1994b, robins2004}. In such cases, the CATT or CATE may be identified, but will no longer be equal to $\Psi_P(\boldsymbol{X})$. Second, even on the additive scale, one can replace $r(\boldsymbol{X})A$ in the right-hand side of \eqref{eq:SMM} with an arbitrary parametric model $t(\boldsymbol{X}, Z, A; \boldsymbol{\beta}_0)$, for some unknown $\boldsymbol{\beta}_0 \in \mathbb{R}^p$, such that $t(\boldsymbol{X}, Z, 0; \boldsymbol{\beta}) = 0$ for all $\boldsymbol{\beta}$. Under some estimability conditions (i.e., $\geq p$ estimating equations), one can identify and estimate $\boldsymbol{\beta}_0$ based on the equation
\[0 = \mathrm{Cov}\left(Z, Y - t(\boldsymbol{X}, Z, A; \boldsymbol{\beta}_0) \mid \boldsymbol{X}\right),\]
derived as in the proof of Proposition~\ref{prop:SMM-ident}. Given that identification diverges substantially from the other cases, and how rarely these assumptions are invoked in practice, we will not pursue estimation under either of these alternative schemes.

\subsubsection{The Role of Effect Modifiers in Homogeneity Assumptions}

One key insight that comes from assessing the various homogeneity assumptions is that, in order for these to be plausible, any effect modifiers that exist must be observed and conditioned on appropriately. Inspecting, for instance, model~\eqref{eq:wang-outcome} in Section~\ref{sec:nonconstant-homogeneity}, we see that any unobserved confounders $U$ cannot interact with the causal effect of $A$ on the additive scale, given $\boldsymbol{X}$---only the measured confounders/covariates $\boldsymbol{X}$ may modify the effect of $A$ in this manner. Therefore, including and conditioning on effect modifiers in $\boldsymbol{X}$ is essential. Again, the REFLUX application is illustrative. As we noted, for REFLUX, unconfoundedness (i.e., core IV Assumption~\ref{ass:UC}) does not require conditioning on $\boldsymbol{X}$. This is one strength of the IV framework in the context of randomized trials. However, identification under the various homogeneity assumptions requires an absence of unobserved effect modifiers. In short, investigators should condition on a broad set of covariates in $\boldsymbol{X}$ in case those variables are effect modifiers. Conditioning on a diverse collection of covariates $\boldsymbol{X}$ increases even more so the plausibility of the constant effect model~\eqref{eq:strict-homogeneity}, since we would only require the treatment effect to be constant within each $\boldsymbol{X}$-stratum. In sum, in applied settings like REFLUX, while conditioning on $\boldsymbol{X}$ is unnecessary for Assumption~\ref{ass:UC}, it does make homogeneity assumptions more plausible. Moreover, outside of the context of randomized trials, in choosing and measuring $\boldsymbol{X}$, one should be mindful of including both $Z$-$(A, Y)$ confounders, but also key effect modifiers.

\subsection{Monotonicity} 
\label{sec:monotonicity}

We now turn to the next major class of assumptions used for point identification, based on ``monotonicity''. One feature of these assumptions is that we do not require any assumptions about effect homogeneity. On the other hand, this will come at some cost in terms of the interpretability and relevance of the target causal estimand we can identify. As we will highlight, the usefulness of the estimand is a matter of some debate in the literature.

\subsubsection{Deterministic Monotonicity}

Perhaps the most common structural assumption used in the applied literature for binary instrument and treatment (i.e., $\mathcal{Z} = \mathcal{A} = \{0,1\}$) is one referred to as monotonicity, which has the following form: $A(z = 1) \geq A(z = 0)$, i.e., if $A(z = 0) = 1$ then $A(z = 1) = 1$. The celebrated result of Imbens \& Angrist \citep{imbens1994}, in their Theorem 1, is that under monotonicity, the following causal estimand is identified:
\begin{equation}
\label{eq:LATE}
  \mathbb{E}\left(Y(a = 1) - Y(a = 0) \mid A(z = 1) > A(z = 0), \boldsymbol{X}\right),
\end{equation} 
the so-called conditional local average treatment effect (CLATE).
\begin{proposition} \label{prop:LATE}
    For binary $Z$ and $A$, under the IV Assumptions~\ref{ass:relevance}, \ref{ass:UC}(b), \ref{ass:ER}, monotonicity, i.e., $A(z =1) \geq A(z = 0)$,
    the CLATE function equals $\Psi_P(\boldsymbol{X})$.
\end{proposition}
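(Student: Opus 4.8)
The plan is to pass to the binary-instrument representation of the conditional Wald estimand provided by Lemma~\ref{lemma:binary}, and then evaluate its numerator and denominator separately in terms of the principal strata induced by monotonicity---compliers, always-takers, and never-takers, with defiers ruled out. Writing $\mu_z(\boldsymbol{X}) = \mathbb{E}_P(Y \mid \boldsymbol{X}, Z = z)$ and $\lambda_z(\boldsymbol{X}) = \mathbb{E}_P(A \mid \boldsymbol{X}, Z = z)$ as in that lemma, the goal reduces to showing
\[
\frac{\mu_1(\boldsymbol{X}) - \mu_0(\boldsymbol{X})}{\lambda_1(\boldsymbol{X}) - \lambda_0(\boldsymbol{X})} = \mathbb{E}\big(Y(a = 1) - Y(a = 0) \mid A(z = 1) > A(z = 0),\, \boldsymbol{X}\big).
\]

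First I would compute the denominator. By consistency ($A(Z) = A$) and Assumption~\ref{ass:UC}(b), which in particular yields $Z \independent A(z) \mid \boldsymbol{X}$, we get $\lambda_z(\boldsymbol{X}) = \mathbb{E}(A(z) \mid \boldsymbol{X}, Z = z) = \mathbb{E}(A(z) \mid \boldsymbol{X})$ for $z \in \{0,1\}$, so $\lambda_1(\boldsymbol{X}) - \lambda_0(\boldsymbol{X}) = \mathbb{E}(A(z = 1) - A(z = 0) \mid \boldsymbol{X})$. Monotonicity forces $A(z = 1) - A(z = 0) \in \{0,1\}$, so this equals the conditional complier proportion $P(A(z = 1) > A(z = 0) \mid \boldsymbol{X})$. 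I would then record that this is strictly positive almost surely: by the Remark after Assumption~\ref{ass:ER}, Assumption~\ref{ass:relevance} implies positivity $0 < \pi_1(\boldsymbol{X}) < 1$, and since for binary $Z$ one has $\mathrm{Cov}_P(Z, A \mid \boldsymbol{X}) = \pi_1(\boldsymbol{X})\{1 - \pi_1(\boldsymbol{X})\}\{\lambda_1(\boldsymbol{X}) - \lambda_0(\boldsymbol{X})\}$, relevance also rules out $\lambda_1(\boldsymbol{X}) = \lambda_0(\boldsymbol{X})$; with monotonicity this makes the complier proportion, and hence $\Psi_P(\boldsymbol{X})$, well defined.

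For the numerator, the crux is a chain of reductions on the instrument-intervention counterfactual $Y(z)$. Consistency ($Y(Z) = Y$) gives $\mu_z(\boldsymbol{X}) = \mathbb{E}(Y(z) \mid \boldsymbol{X}, Z = z)$; Assumption~\ref{ass:UC}(b), which includes $Z \independent Y(z) \mid \boldsymbol{X}$, removes the conditioning event to give $\mu_z(\boldsymbol{X}) = \mathbb{E}(Y(z) \mid \boldsymbol{X})$; and the exclusion restriction (Assumption~\ref{ass:ER}), $Y(z, a) = Y(a)$, together with the structural propagation $Y(z) = Y(z, A(z))$, yields $Y(z) = Y(a = A(z)) = A(z)\, Y(a = 1) + \{1 - A(z)\}\, Y(a = 0)$, using $A(z) \in \{0,1\}$. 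Substituting and subtracting the $z = 0$ identity from the $z = 1$ identity, the $Y(a = 0)$ contributions combine and
\[
\mu_1(\boldsymbol{X}) - \mu_0(\boldsymbol{X}) = \mathbb{E}\big(\{A(z = 1) - A(z = 0)\}\{Y(a = 1) - Y(a = 0)\} \mid \boldsymbol{X}\big).
\]
Here monotonicity enters a second time, and essentially so: since $A(z = 1) - A(z = 0)$ is an indicator rather than merely a quantity bounded in $[-1,1]$, the integrand equals $\{Y(a = 1) - Y(a = 0)\}\, \one\{A(z = 1) > A(z = 0)\}$, hence $\mu_1(\boldsymbol{X}) - \mu_0(\boldsymbol{X}) = \mathbb{E}\big(Y(a = 1) - Y(a = 0) \mid A(z = 1) > A(z = 0),\, \boldsymbol{X}\big)\, P\big(A(z = 1) > A(z = 0) \mid \boldsymbol{X}\big)$.

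Taking the ratio, the conditional complier-proportion factors cancel---legitimately, given the positivity established above---and we are left with exactly the CLATE of \eqref{eq:LATE}, proving the claim. The main obstacle is not a single delicate estimate but keeping the counterfactual bookkeeping honest: one must track that the reduction $Y(z) = A(z) Y(a = 1) + \{1 - A(z)\} Y(a = 0)$ genuinely invokes the exclusion restriction, that the steps removing the conditioning on $Z$ use only Assumption~\ref{ass:UC}(b), and that monotonicity is used twice for distinct purposes---to collapse $A(z = 1) - A(z = 0)$ into the complier indicator in the numerator, and to sign (hence guarantee nonvanishing of) the denominator.
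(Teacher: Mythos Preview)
Your proof is correct and follows essentially the same approach as the paper's: both arguments identify $\mu_z(\boldsymbol{X})$ and $\lambda_z(\boldsymbol{X})$ with $\mathbb{E}(Y(z)\mid\boldsymbol{X})$ and $\mathbb{E}(A(z)\mid\boldsymbol{X})$ via consistency and Assumption~\ref{ass:UC}(b), reduce $Y(z)$ to $Y(A(z))$ by the exclusion restriction, and then use monotonicity to turn $A(z=1)-A(z=0)$ into the complier indicator. The only cosmetic differences are that you invoke Lemma~\ref{lemma:binary} up front and compute the denominator first, and you are more explicit about positivity ensuring the ratio is well defined.
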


A marginal version of the CLATE, the local average treatment effect (LATE), is given by
\[\mathbb{E}(Y(a = 1) - Y(a = 0) \mid A(z = 1) > A(z = 0)) = \frac{\mathbb{E}_P(\mu_1(\boldsymbol{X}) - \mu_0(\boldsymbol{X}))}{\mathbb{E}_P(\lambda_1(\boldsymbol{X}) - \lambda_0(\boldsymbol{X}))},\]
with identification following the same argument as used in the proof of Proposition~\ref{prop:LATE}, under the same assumptions. Typically, the CLATE and LATE are interpreted by stratifying the study observations into four groups or principal strata~\citep{Angrist:1996,Frangakis:2002}.  That is, when both the instrument and treatment are binary, we can classify subjects as \textit{compliers} (i.e., $A(z = 1) > A(z = 0)$), \textit{always-takers} (i.e., $A(z = 1) = A(z = 0) = 1$), \textit{never-takers} (i.e., $A(z = 1) = A(z = 0)=0$) or \textit{defiers} (i.e., $A(z = 1) < A(z = 0)$). With these labels, the estimand in  \eqref{eq:LATE} represents the causal effect of exposure $A$ on $Y$ among compliers, i.e., those for whom $A(z = 1) = 1$ and $A(z = 0) = 0$---for this reason, the LATE is also often referred to as the \textit{complier} average causal effect. By Proposition~\ref{prop:LATE}, such complier-specific effects are identified when $Z$ is an instrument and the monotonicity assumption holds. In REFLUX, the LATE represents the average affect of surgery among those patients who receive surgery because they were assigned the surgery arm of the trial. Alternatively, in the langauge of principal strata, the LATE is the effect of surgery among those patients that are compliers.

The LATE is often targeted in trials and observational studies with measured IVs when the monotonicity assumption is thought to hold. However, since the monotonicity assumption rules out the presence of defiers, i.e., those for whom  $A(z = 1) = 0$ and $A(z = 0) = 1$, one must be able to argue that no defiers exist on subject matter knowledge. In many contexts, the monontoncity assumption is plausible. In REFLUX, defiers are patients who are only exposed to surgery if not assigned to the surgery arm or are given medical manage if assigned to the surgery arm. It would seem  unlikely that there are patients who would only receive surgery if randomly assigned to medical management and vice versa. Moreover, in REFLUX, the noncompliance decision is strongly dictated by the attending physician, so defiers would generally only be present when physicians act directly contrary to the trial protocol.

In some cases one can rule out defiers by design. Some randomized trial designs include a ``no-contamination'' restriction, or satisfy ``one-sided non-compliance'', which is defined as the absence of off-protocol use of the intervention among controls, such that $Z = 0$ implies $A = 0$; in counterfactual notation, $A(z = 0) = 0$ \citep{Cuzick:2007}. This restriction thus precludes the possibility of always-takers in addition to defiers. In REFLUX, this would imply that patients in the medical management arm were not allowed to cross over to the surgery arm of the trial. 
In addition, the no-contamination restriction implies the no current treatment interaction assumption for SMMs, since it reduces the number of causal effects to one among those who received the intervention. Mathematically, since $A = 1$ implies $Z = 1$, equation~\eqref{eq:SMM} holds vacuously, as $\mathbb{E}(Y(a = 1) - Y(a = 0) \mid \boldsymbol{X}, Z, A = 1) = \mathbb{E}(Y(a = 1) - Y(a = 0) \mid \boldsymbol{X}, Z = 1, A = 1)$ is a function only of $\boldsymbol{X}$. This indicates that under one-sided non-compliance, by Propositions~\ref{prop:SMM-ident} and~\ref{prop:LATE}, the CLATE coincides with the CATT \citep{frolich2013}.

\begin{proposition} \label{prop:one-sided}
    For binary $Z$ and $A$, under the IV Assumptions~\ref{ass:relevance}, \ref{ass:UC}(b), \ref{ass:ER}, and one-sided non-compliance, i.e., $A(z = 0) = 0$,
    the CATT function equals $\Psi_P(\boldsymbol{X})$.
\end{proposition}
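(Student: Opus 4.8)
The plan is to reduce everything to a direct evaluation of the conditional Wald estimand in the binary form $\Psi_P(\boldsymbol{X}) = \{\mu_1(\boldsymbol{X}) - \mu_0(\boldsymbol{X})\}/\{\lambda_1(\boldsymbol{X}) - \lambda_0(\boldsymbol{X})\}$ from Lemma~\ref{lemma:binary}. First I would record the two structural consequences of one-sided non-compliance combined with consistency $A = A(Z)$: namely $Z = 0 \implies A = 0$ (equivalently $\{A = 1\} \subseteq \{Z = 1\}$), which gives $\lambda_0(\boldsymbol{X}) = 0$; and $Y(z = 0) = Y(a = 0)$, because $A(z=0) = 0$ almost surely, so that $Y(z=0) = Y(z=0, A(z=0)) = Y(a=0)$ by the exclusion restriction (Assumption~\ref{ass:ER}). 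I would also note that, since $\lambda_0 \equiv 0$, relevance (Assumption~\ref{ass:relevance}) forces $\lambda_1(\boldsymbol{X}) = \lambda_1(\boldsymbol{X}) - \lambda_0(\boldsymbol{X}) \ne 0$ almost surely, hence $\lambda_1(\boldsymbol{X}) > 0$ since it is the mean of a $\{0,1\}$-valued variable; together with $\pi_1(\boldsymbol{X}) \in (0,1)$ this guarantees that all conditioning events below have positive probability and that the CATT is well defined.

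Next I would expand $\mu_1(\boldsymbol{X}) - \mu_0(\boldsymbol{X})$. By consistency and the exclusion restriction, $Y = Y(a=1)\one\{A = 1\} + Y(a=0)\one\{A = 0\}$, so that $Y - Y(a=0) = A\{Y(a=1) - Y(a=0)\}$. On $\{Z = 0\}$ one has $A = 0$, so $\mu_0(\boldsymbol{X}) = \mathbb{E}(Y(a=0) \mid \boldsymbol{X}, Z = 0)$; on $\{Z = 1\}$ one gets $\mu_1(\boldsymbol{X}) = \mathbb{E}(Y(a=0) \mid \boldsymbol{X}, Z = 1) + \mathbb{E}(A\{Y(a=1)-Y(a=0)\} \mid \boldsymbol{X}, Z = 1)$. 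The difference of the two $Y(a=0)$-terms vanishes by Assumption~\ref{ass:UC}(b) (with $z = 0$), which via $Y(z=0) = Y(a=0)$ yields $Z \independent Y(a=0) \mid \boldsymbol{X}$. For the remaining term I would apply the tower rule to write $\mathbb{E}(A\{Y(a=1)-Y(a=0)\} \mid \boldsymbol{X}, Z=1) = \lambda_1(\boldsymbol{X})\,\mathbb{E}(Y(a=1) - Y(a=0) \mid \boldsymbol{X}, Z = 1, A = 1)$, and then observe that $\{A = 1\} \subseteq \{Z = 1\}$ makes $\{Z=1\}$ redundant in the conditioning, so this equals $\mathbb{E}(Y(a=1) - Y(a=0) \mid \boldsymbol{X}, A = 1)$, i.e.\ the CATT. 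Combining, $\mu_1(\boldsymbol{X}) - \mu_0(\boldsymbol{X}) = \lambda_1(\boldsymbol{X})\cdot \mathrm{CATT}(\boldsymbol{X})$, and dividing by $\lambda_1(\boldsymbol{X}) - \lambda_0(\boldsymbol{X}) = \lambda_1(\boldsymbol{X}) > 0$ gives $\Psi_P(\boldsymbol{X}) = \mathrm{CATT}(\boldsymbol{X})$.

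An alternative route is to note that one-sided non-compliance trivially implies the monotonicity condition $A(z=1) \ge A(z=0) = 0$, apply Proposition~\ref{prop:LATE} to get $\mathrm{CLATE}(\boldsymbol{X}) = \Psi_P(\boldsymbol{X})$, and then argue that the complier stratum $\{A(z=1) > A(z=0)\} = \{A(z=1) = 1\}$ coincides, given $Z = 1$, with the treated subpopulation, so that $\mathrm{CLATE} = \mathrm{CATT}$. I expect the direct computation above to be cleaner, because equating the CLATE and the CATT in this way appears to require $Z \independent (Y(a=0), A(z=1)) \mid \boldsymbol{X}$ jointly---a cross-world statement not literally supplied by the world-by-world Assumption~\ref{ass:UC}(b)---whereas the direct argument never conditions on $A(z=1)$ and so needs only the marginal $Z \independent Y(a=0)\mid\boldsymbol{X}$. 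Accordingly, the main obstacle in writing the proof carefully is the counterfactual bookkeeping around the exclusion restriction: deducing $Y(z=0) = Y(a=0)$, and hence $Z \independent Y(a=0) \mid \boldsymbol{X}$, from Assumption~\ref{ass:UC}(b) under one-sided non-compliance, together with checking that relevance forces $\lambda_1(\boldsymbol{X}) > 0$ so the ratio is well defined; everything else is elementary conditioning.
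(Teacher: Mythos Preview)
Your proof is correct and follows essentially the same approach as the paper's. Both establish $\mu_0(\boldsymbol{X}) = \mathbb{E}(Y(a=0)\mid\boldsymbol{X})$ via consistency, unconfoundedness~\ref{ass:UC}(b), the exclusion restriction, and $A(z=0)=0$, then use $Y - Y(a=0) = A\{Y(a=1)-Y(a=0)\}$ to factor out the CATT. The only cosmetic difference is in the final algebra: the paper works with the marginal $\mathbb{E}_P(Y\mid\boldsymbol{X}) - \mu_0(\boldsymbol{X})$ and applies the mixture identities $\mathbb{E}_P(Y\mid\boldsymbol{X}) = \mu_0 + \pi_1(\mu_1-\mu_0)$ and $P[A=1\mid\boldsymbol{X}] = \pi_1\lambda_1$, whereas you subtract $\mu_1 - \mu_0$ directly and invoke $\{A=1\}\subseteq\{Z=1\}$ to drop the $Z=1$ conditioning; both routes yield $\mu_1 - \mu_0 = \lambda_1\cdot\mathrm{CATT}$ in one line.
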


Under one-sided non-compliance, compliers correspond to those with $A(z = 1) = 1$. With this in mind, the equality of the CLATE and CATT implied by Proposition~\ref{prop:one-sided} is not too surprising. Intuitively, the treated subpopulation is representative of compliers (in terms of the counterfactual outcome distribution) within levels of $\boldsymbol{X}$, by unconfoundedness.

Under monotonicity, one must in general be content with the fact that complier-specific effects do not provide analysts with any information about the causal effects among the always-taker and never-taker subpopulations. Moreover, the subpopulation of compliers is generally not identified, so that it is often unclear to which subjects the LATE applies. Nonetheless, the \textit{distribution} of covariates $\boldsymbol{X}$ among compliers is identified, and moreover, in some settings it may be possible to accurately predict who compliers are, and obtain tight bounds on causal effects in identifiable subgroups~\citep{kennedy2020b}.

An additional applied example is useful to understand how monotonicity alters the IV estimand. Imagine a randomized trial where study participants are randomly encouraged to exercise for four weeks, i.e., $Z = 1$ if encouraged, and $Z = 0$ if not encouraged, and the outcome $Y$ of the study is lung volume after four weeks. In the study, the exposure $A$ is dichotomized into any or no exercise sessions. Monotonicity, in this case, would assert that there are no subjects that would exercise when not encouraged to do so, and not exercise if encouraged. Under monotonicity, the LATE represents the effect of exercise among the subset of participants that exercised \textit{because} they were encouraged to exercise. The estimated LATE tells the analyst little about the effect of exercise among the always-takers and never-takers---in this study the always-takers are the participants that exercise regardless of encouragement, and the never-takers are the participants that do not exercise even if encouraged. Specifically, we might expect that the effect of exercise is smaller among the always-takers and larger among the never-takers, but this cannot be assessed from the observed data when one only assumes monotonicity.  

This last point may be unsettling from a policy perspective, and the LATE estimand has engendered considerable controversy in the literature, where it is often criticized as an overly narrow and irrelevant target quantity~\citep{Deaton:2010,swanson2014think} (though see~\citet{Imbens:2010,imbens2014instrumental} for defenses of the LATE estimand). In other cases, the monotonicity assumption itself has been criticized as implausible in some applications. In particular, monotonicity violations may arise when the instrument is not delivered in a uniform way to all subjects~\citep{swanson2014think,swanson2017challenging}. This critique led to what is known as the stochastic monotonicity assumption~\citep{small2017instrumental}, which we review next.



\subsubsection{Stochastic Monotonicity}\label{sec:stochastic-mon}

With binary instrument and treatment, the deterministic monotonicity assumption asserts that $A(z = 1) \geq A(z = 0)$, meaning that the probability of any subject being a defier is exactly zero. In many settings, this can be overly strong, and one can imagine situations where there is a positive probability of defiance given that the instrument delivers encouragement to treatment exposure differentially. For example, excess travel time to a specialized health care facility compared to a normal health care facility has been used as an IV for whether a person receives a certain type of care in many health studies e.g., \citet{mcclellan1994does,Lorch:2012}.  That is, patients that are close to a specialized hospital may be more likely to receive care at that hospital. Instruments of this type, however, affect the decision for actual treatment received differentially across patients, and this may result in patients that are defiers with respect to the IV. In the REFLUX example, such violations are unlikely, since the assignment to each treatment arm effectively provides a uniform encouragement to each arm of the study.

In order to deal with potential violations of strict deterministic monotonicity, some authors have proposed a weaker version of this assumption, which has been termed \textit{stochastic monotonicity}~\citep{dinardo2011, ramsahai2012, small2017instrumental}:
\begin{equation}\label{eq:stochastic-mon}
    P[A = 1 \mid \boldsymbol{X}, U, Z = 1] \geq P[A = 1 \mid \boldsymbol{X}, U, Z = 0], \text{ almost surely},
\end{equation}
asserting that within strata of the observed confounders $\boldsymbol{X}$ and unobserved confounders $U$, the probability of treatment with $A = 1$ is at least as high when $Z = 1$ compared to when $Z = 0$.
\citet{small2017instrumental} show that under the stochastic monotonicity assumption~\eqref{eq:stochastic-mon}, the function $\Psi_P(\boldsymbol{X})$ continues to have an interpretation as a conditional treatment-outcome causal effect. Concretely, defining
\[w(\boldsymbol{X}, U) = P[A = 1 \mid \boldsymbol{X}, U, Z = 1] - P[A = 1 \mid \boldsymbol{X}, U, Z = 0],\]
which is non-negative if and only if~\eqref{eq:stochastic-mon} holds (note also that $w \equiv \widetilde{\delta}$ defined in Section~\ref{sec:cui-cov}), and under conditions similar to IV Assumptions~\ref{ass:relevance}--\ref{ass:ER},
\begin{equation}  \label{eq:stochastic-ident} 
\Psi_P(\boldsymbol{X}) = \frac{\mathbb{E}(\{Y(a = 1) - Y(a = 0)\}w(\boldsymbol{X}, U) \mid \boldsymbol{X})}{\mathbb{E}(w(\boldsymbol{X}, U) \mid \boldsymbol{X})}.
\end{equation}
\citet{small2017instrumental} call this quantity the strength-of-IV weighted conditional average treatment effect (SIV-WCATE), as it can be interpreted as marginalizing the fully conditional CATE, $\mathbb{E}(Y(a = 1) - Y(a = 0) \mid \boldsymbol{X}, U)$, over a distribution $Q$ for $U \mid \boldsymbol{X}$ proportional to $w(\boldsymbol{X}, U) p(U \mid \boldsymbol{X})$. In fact, these authors show that this interpretation is valid regardless of whether the instrument $Z$ itself has a well-defined causal effect, so long as the stochastic monotonicity assumption~\eqref{eq:stochastic-mon} holds. This means that the counterfactuals $A(z)$ and $Y(z, a)$ need not even be well-defined, which may occur if: (i) $Z$ is an ``intensity-preserving'' proxy for a manipulable causal IV $Z^*$, or (ii) $Z$ (or $Z^*$ for which $Z$ is a proxy) is not manipulable---see \citet{small2017instrumental} for details.
We summarize their findings in the following result.

\begin{proposition}\label{prop:stochastic-ident}
    Suppose $Z, A \in \{0,1\}$, and counterfactuals $Y(a)$ for $a \in \{0,1\}$ are well-defined such that consistency holds, i.e., $Y(A) = Y$. Assume that unmeasured confounders $U$ satisfy
    $A \independent Y(a) \mid \boldsymbol{X}, U, Z$,
    and also that the effect of $Z$ is unconfounded, in that $Z \independent Y(a) \mid \boldsymbol{X}, U$ and $Z \independent U \mid \boldsymbol{X}$. Then under Assumption~\ref{ass:relevance}, equation~\eqref{eq:stochastic-ident} holds. In particular, under stochastic monotonicity~\eqref{eq:stochastic-mon}, $\Psi_P(\boldsymbol{X})$ has the SIV-WCATE interpretation $\mathbb{E}_{Q}(\mathbb{E}(Y(a = 1) - Y(a = 0) \mid \boldsymbol{X}, U)\mid \boldsymbol{X})$.
\end{proposition}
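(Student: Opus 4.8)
The plan is to establish equation~\eqref{eq:stochastic-ident} directly by manipulating the numerator and denominator of $\Psi_P(\boldsymbol{X})$, using the conditional independence assumptions to strip away the dependence on $U$ in the appropriate places; the SIV-WCATE interpretation then follows by inspection. First I would rewrite the denominator: by the law of total covariance (or simply by iterated expectations, since $\mathrm{Cov}_P(Z, A \mid \boldsymbol{X}) = \mathbb{E}(ZA \mid \boldsymbol{X}) - \mathbb{E}(Z \mid \boldsymbol{X})\mathbb{E}(A \mid \boldsymbol{X})$), condition additionally on $U$ inside each expectation. Using $Z \independent U \mid \boldsymbol{X}$ to factor the relevant terms, one obtains $\mathrm{Cov}_P(Z, A \mid \boldsymbol{X}) = \mathbb{E}\big(\mathrm{Var}(Z \mid \boldsymbol{X})\,w(\boldsymbol{X}, U) \,\big|\, \boldsymbol{X}\big)$, since $\mathbb{E}(A \mid \boldsymbol{X}, U, Z) = \mathbb{E}(A \mid \boldsymbol{X}, U, Z = 0) + Z\, w(\boldsymbol{X}, U)$ for binary $Z$; pulling out the scalar $\mathrm{Var}(Z \mid \boldsymbol{X})$ gives $\mathrm{Var}(Z \mid \boldsymbol{X})\,\mathbb{E}(w(\boldsymbol{X}, U) \mid \boldsymbol{X})$.

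Next I would handle the numerator $\mathrm{Cov}_P(Z, Y \mid \boldsymbol{X})$ analogously. The key is to compute $\mathbb{E}(Y \mid \boldsymbol{X}, U, Z)$. Write $Y = Y(a=0) + A\{Y(a=1) - Y(a=0)\}$ by consistency for binary $A$. Then $\mathbb{E}(Y \mid \boldsymbol{X}, U, Z) = \mathbb{E}(Y(a=0) \mid \boldsymbol{X}, U, Z) + \mathbb{E}(A\{Y(a=1)-Y(a=0)\} \mid \boldsymbol{X}, U, Z)$. For the first term, $Z \independent Y(a) \mid \boldsymbol{X}, U$ removes the $Z$-dependence. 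For the second term, the assumption $A \independent Y(a) \mid \boldsymbol{X}, U, Z$ lets me factor $\mathbb{E}(A \mid \boldsymbol{X}, U, Z)\,\mathbb{E}(Y(a=1)-Y(a=0) \mid \boldsymbol{X}, U, Z)$, and again $Z \independent Y(a) \mid \boldsymbol{X}, U$ simplifies the second factor to $\tau^*(\boldsymbol{X}, U) \coloneqq \mathbb{E}(Y(a=1)-Y(a=0) \mid \boldsymbol{X}, U)$. So $\mathbb{E}(Y \mid \boldsymbol{X}, U, Z) = \{\text{term free of }Z\} + \big(\mathbb{E}(A \mid \boldsymbol{X}, U, Z=0) + Z\,w(\boldsymbol{X},U)\big)\tau^*(\boldsymbol{X}, U)$. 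Plugging into the covariance and using $Z \independent U \mid \boldsymbol{X}$ exactly as before, the $Z$-free pieces drop out and I get $\mathrm{Cov}_P(Z, Y \mid \boldsymbol{X}) = \mathrm{Var}(Z \mid \boldsymbol{X})\,\mathbb{E}(w(\boldsymbol{X}, U)\,\tau^*(\boldsymbol{X}, U) \mid \boldsymbol{X})$.

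Taking the ratio, the common factor $\mathrm{Var}(Z \mid \boldsymbol{X})$ cancels — and this is nonzero almost surely because Assumption~\ref{ass:relevance} forces $\mathrm{Cov}_P(Z, A \mid \boldsymbol{X}) \neq 0$, hence $Z$ is nondegenerate given $\boldsymbol{X}$ — leaving exactly \eqref{eq:stochastic-ident}. For the final claim, if \eqref{eq:stochastic-mon} holds then $w(\boldsymbol{X}, U) \geq 0$, and $\mathbb{E}(w(\boldsymbol{X},U) \mid \boldsymbol{X}) > 0$ a.s.\ (again by relevance, since otherwise $\mathrm{Cov}_P(Z,A\mid\boldsymbol{X})$ would vanish on a positive-probability set), so $q(u \mid \boldsymbol{X}) \coloneqq w(\boldsymbol{X}, u)\,p(u \mid \boldsymbol{X}) / \mathbb{E}(w(\boldsymbol{X},U)\mid\boldsymbol{X})$ is a bona fide conditional density, and \eqref{eq:stochastic-ident} reads $\Psi_P(\boldsymbol{X}) = \mathbb{E}_Q(\tau^*(\boldsymbol{X}, U) \mid \boldsymbol{X})$, which is the asserted SIV-WCATE representation.

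I expect the main obstacle to be bookkeeping in the numerator: making sure each conditional independence is invoked in a valid order — in particular that $A \independent Y(a) \mid \boldsymbol{X}, U, Z$ is applied \emph{before} marginalizing $Z$, and that consistency $Y = Y(A)$ is used correctly to split $Y$ into baseline plus treatment-effect pieces for binary $A$. There is also a minor subtlety that the argument uses $\mathbb{E}(A \mid \boldsymbol{X}, U, Z)$ being affine in $Z$, which holds precisely because $Z$ is binary; this is where binariness of $Z$ (beyond binariness of $A$) enters. None of these steps is deep, but the cancellation of the $Z$-free additive terms in both covariances should be written out carefully to make clear that it relies on $Z \independent U \mid \boldsymbol{X}$ rather than on anything stronger.
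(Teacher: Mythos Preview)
Your proposal is correct and uses essentially the same ingredients as the paper's proof: condition on $U$, invoke $A \independent Y(a) \mid \boldsymbol{X}, U, Z$ to factor, then use $Z \independent Y(a) \mid \boldsymbol{X}, U$ and $Z \independent U \mid \boldsymbol{X}$ to drop the $Z$-dependence in the outer expectation. The packaging differs slightly: the paper decomposes $Y = AY + (1-A)Y = AY(a=1) + (1-A)Y(a=0)$, computes $\mathbb{E}_P(AY \mid \boldsymbol{X}, Z=z)$ and $\mathbb{E}_P((1-A)Y \mid \boldsymbol{X}, Z=z)$ separately, and then subtracts across $z\in\{0,1\}$ to obtain $\mu_1 - \mu_0$ and $\lambda_1 - \lambda_0$ directly; you instead decompose $Y = Y(a=0) + A\{Y(a=1)-Y(a=0)\}$, compute $\mathbb{E}(Y \mid \boldsymbol{X}, U, Z)$ in one pass, and carry the common factor $\mathrm{Var}(Z \mid \boldsymbol{X})$ through the covariance formulation before cancelling. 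Your route is marginally more compact (one decomposition instead of two parallel computations) and makes the role of binariness of $Z$ explicit via the affine representation of $\mathbb{E}(A \mid \boldsymbol{X}, U, Z)$; the paper's route avoids the covariance framing and lands directly on the $\mu_1-\mu_0$, $\lambda_1-\lambda_0$ form that matches Lemma~\ref{lemma:binary}. Both are valid and neither requires anything beyond the stated assumptions.
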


Proposition~\ref{prop:stochastic-ident} is a substantial generalization of Proposition~\ref{prop:LATE}, and assigns a causal interpretation to $\Psi_P(\boldsymbol{X})$ in a variety of scenarios beyond having a well-defined manipulable IV satisfying strict deterministic monotonicity---if indeed $Z$ is manipulable, the conditions of Proposition~\ref{prop:stochastic-ident} would be satisfied under Assumptions~\ref{ass:relevance}, \ref{ass:UC}(c)\&(d), and \ref{ass:ER}. As in the other identification paradigms, a marginal version of the SIV-WCATE, the strength-of-IV weighted average treatment effect (SIV-WATE) is also identified by the appropriate integration. Here,
\[\frac{\mathbb{E}(\{Y(a = 1) - Y(a = 0)\}w(\boldsymbol{X}, U))}{\mathbb{E}(w(\boldsymbol{X}, U))}= \frac{\mathbb{E}_P\left(\mu_1(\boldsymbol{X}) - \mu_0(\boldsymbol{X})\right)}{\mathbb{E}_P\left(\lambda_1(\boldsymbol{X}) - \lambda_0(\boldsymbol{X})\right)}.\]
The left-hand side of this display can be written as $\mathbb{E}_{\mathcal{Q}}(\mathbb{E}(Y(a = 1) - Y(a = 0) \mid \boldsymbol{X}, U))$, i.e., an integral of the fully conditional CATE (given $\boldsymbol{X}$ and $U$) over a distribution $\mathcal{Q}$ for $(\boldsymbol{X}, U)$ proportional to $w(\boldsymbol{X}, U)p(\boldsymbol{X}, U)$.

Interpretation of the SIV-WCATE (and SIV-WATE) is considerably more complex than that of the CLATE (and LATE). The SIV-WCATE is a weighted average of treatment effects, where the weight is based on groups or strata defined by $\boldsymbol{X}$ and $U$. Specifically, the conditional treatment effect within a $(\boldsymbol{X}, U)$-stratum is weighted according to the actual size of that group times the effect of the instrument on the exposure in that stratum, i.e., in that group, the gap between the probability of taking treatment when encouraged by the IV to take treatment (i.e., $Z = 1$) compared to when not encouraged to take treatment (i.e., $Z = 0$). In other words, $Q$ weights each subject by how strongly the IV is associated with treatment in the subgroups defined by $(\boldsymbol{X}, U)$, hence the term \textit{strength-of-IV weighting} in the naming of these estimands. As a means of understanding the hypothetical population to which this effect ``generalizes'', \citet{small2017instrumental} recommend characterizing how the distribution of the observed covariates for $Q$ compares to the distribution of covariates for the unweighted population.

An applied example is useful at this point to better explain the interpretation of the SIV-WCATE. In comparative effectiveness research, analysts often use a physician's preference for a specific course of treatment as an IV. Here, a physician's preference is a factor that causes some patients to be exposed to a specific treatment, but arguably has no direct effect on the outcome \citep{brookhart2007preference,brookhart2006evaluating}.  One study, for example, uses a surgeon's preference for operative management for emergency conditions as an IV for receipt of surgery \citep{keeleegsiv2018}.  They call this IV \textit{tendency to operate} (TTO). In this example, $Z=1$ for patients that receive care from a high TTO surgeon, and $Z=0$ for patients that receive care from a low TTO surgeon. Likewise, $A=1$ indicates receiving emergency surgery, and $A=0$ indicates not receiving surgery. Under deterministic monotonicity, the LATE represents the effect of surgery among those patients that receive surgery because they receive care from a high TTO surgeon. \citet{swanson2015definition} argue, however, that deterministic monotoncity is unlikely to hold in this type of application. 

\citet{small2017instrumental} outline that stochastic monotonicity is formulated precisely for this type of application. To interpret this assumption and the SIV-WCATE in the TTO example, the analyst must first conceptualize a set of unmeasured confounders $U$. For this type of application, \citet{small2017instrumental} suggest defining $U$ to be the vector of all patient characteristics that systematically affect treatment or potential outcomes, beyond those already measured and included in $\boldsymbol{X}$. This way, patients who share the same $\boldsymbol{X}$ and $U$ can be considered patients of the same ``type''. Stochastic monotonicity asserts that, for patients of the same type, the probability of surgery is higher for those patients with surgeons with a high TTO compared to those patients with surgeons with a low TTO. Finally, the SIV-WCATE is a weighted average of treatment effects that puts more weight on patient types whose treatment choices are more influenced by TTO.  To get a better understanding of the applicability of the SIV-WCATE, researchers would want to compare the distribution of the weighted population to the overall study population.

This more detailed discussion is, we think, revealing. While stochastic monotonicity is more plausible than deterministic monotonicity for a wide range of applications, interpretation of the SIV-WCATE is considerably more difficult. See Section 6 of \citet{small2017instrumental} for additional discussion of the interpretation of this estimand.

\subsection{Other Identification Schemes}

We have aimed to provide a comprehensive review of the most widely used point identification schemes in the IV literature. In the results of Sections~\ref{sec:homogeneity} and~\ref{sec:monotonicity}, we have seen that the function $\Psi_P(\boldsymbol{X})$ identifies a particular kind of treatment-outcome causal effect, depending on the structural assumptions invoked. This sets the stage for a relatively all-encompassing statistical estimation framework, which we will present in Section~\ref{sec:estimation}. Nonetheless, there are alternative identification schemes that have been proposed, for which the resulting identified conditional causal contrast does not equal $\Psi_P(\boldsymbol{X})$. We briefly highlight two such proposals here (with no guarantee of comprehensiveness), though we will not pursue estimation under these paradigms in Section~\ref{sec:estimation}.

First, \citet{tchetgen2013} put forth a ``homogeneous selection bias'' assumption as an alternative to the additive SMM~\eqref{eq:SMM}. Namely, they show in their Lemma 1 that the CATT is identified under Assumptions~\ref{ass:relevance}, \ref{ass:UC}(a), ~\ref{ass:ER}, and assuming
\[\mathbb{E}(Y(a = 0) \mid \boldsymbol{X}, A = 1, Z) - \mathbb{E}(Y(a = 0) \mid \boldsymbol{X}, A = 0, Z) = u(\boldsymbol{X}),\]
for some function $u$, which is to say that the amount of unmeasured $A$-$Y$ confounding on the additive scale (given $\boldsymbol{X}$ and $Z$) is balanced with respect to $Z$. Identification of the CATT is not as straightforward under homogeneous selection bias, and does not correspond to $\Psi_P(\boldsymbol{X})$. For further details, we refer the reader to \citet{tchetgen2013}.

Second, \citet{liu2020} were concerned with estimation of the marginal ATT parameter $\mathbb{E}(Y(a = 1) - Y(a = 0) \mid A = 1)$. Under Assumptions~\ref{ass:relevance}, \ref{ass:UC}(a), and \ref{ass:ER}, they developed necessary and sufficient conditions for identifying the distribution of $Y(a = 0)$ given $(\boldsymbol{X}, Z, A)$, which in turn is enough to identify the ATT by consistency. Again, the identification formula no longer directly involves $\Psi_P(\boldsymbol{X})$, and invokes the ``extended propensity score'' $P[A = 1 \mid \boldsymbol{X}, Z, Y(a = 0)]$. As a consequence, the doubly robust-style estimator proposed in \citet{liu2020} relies on a parametric model for this nuisance function.

\section{Efficient Estimation and Inference} 
\label{sec:estimation}

In this section, we present a unifying estimation framework for targeting causal contrasts identified by the schemes laid out in Sections~\ref{sec:homogeneity} and~\ref{sec:monotonicity}. We will focus our analysis on the case of a binary instrument $Z \in \mathcal{Z} = \{0,1\}$ (see \citet{kennedy2019} for related methods for continuous instruments). Our goal will be to exploit nonparametric efficiency theory to construct efficient estimators of target functionals---based on their influence functions---that are robust and amenable to flexible estimation of component nuisance functions; see \citet{kennedy2022} for a broader review of this approach. Throughout, we will assume we observe an iid sample $(O_1, \ldots, O_n)$, where an arbitrary observation is $O = (\boldsymbol{X}, Z, A, Y) \sim P$.

To summarize the relevant identification results, the function $\Psi_P(\boldsymbol{X}) = \frac{\mu_1(\boldsymbol{X}) - \mu_0(\boldsymbol{X})}{\lambda_1(\boldsymbol{X}) - \lambda_0(\boldsymbol{X})}$ (recalling Lemma~\ref{lemma:binary}) represents the CATE under the assumptions of Propositions~\ref{prop:homogeneity-ident}, \ref{prop:nonconstant-ident}, \ref{prop:wang-outcome-ident}, \ref{prop:wang-exposure-ident}, \ref{prop:cui-cov-ident}, or \ref{prop:hernan}; the CATT the assumptions of Proposition~\ref{prop:SMM-ident} or~\ref{prop:one-sided}; the CLATE under the assumptions of Proposition~\ref{prop:LATE}; and the SIV-WCATE under the assumptions of Proposition~\ref{prop:stochastic-ident}. Our target parameters will be marginal and conditional effects that result in each of these identification paradigms. Concretely, let $V = \mathcal{C}(\boldsymbol{X})$ represent some coarsening of $\boldsymbol{X}$, e.g., $V$ a subset of the covariates, and let $\overline{V}$ be such that $\boldsymbol{X}$ is in one-to-one correspondence with $(V, \overline{V})$. Moreover, let $\gamma(\boldsymbol{X}) = \mu_{1}(\boldsymbol{X}) - \mu_0(\boldsymbol{X})$ and $\delta(\boldsymbol{X}) = \lambda_{1}(\boldsymbol{X}) - \lambda_0(\boldsymbol{X})$, such that $\Psi_P(\boldsymbol{X}) = \frac{\gamma(\boldsymbol{X})}{\delta(\boldsymbol{X})}$. Then under the assumptions of Propositions~\ref{prop:homogeneity-ident}, \ref{prop:nonconstant-ident}, \ref{prop:wang-outcome-ident}, \ref{prop:wang-exposure-ident}, \ref{prop:cui-cov-ident}, or \ref{prop:hernan}, the average treatment effect within levels of $V$, $\mathbb{E}(Y(a = 1) - Y(a = 0) \mid V)$, is given by
\begin{equation}\label{eq:CATE}
\xi_P(V) \coloneqq \mathbb{E}_P(\Psi_P(\boldsymbol{X}) \mid V);
\end{equation}
under the assumptions of Proposition~\ref{prop:SMM-ident} or~\ref{prop:one-sided}, the average treatment effect among the treated within levels of $V$, $\mathbb{E}(Y(a = 1) - Y(a = 0) \mid V, A = 1)$, is given by
\begin{equation} \label{eq:CATT}
\zeta_P(V) \coloneqq \mathbb{E}_P(\Psi_P(\boldsymbol{X}) \mid V, A = 1);
\end{equation}
under the assumptions of Proposition~\ref{prop:LATE}, the local average treatment effect within levels of $V$, $\mathbb{E}(Y(a = 1) - Y(a = 0) \mid V, A(z = 1) > A(z = 0))$, is given by
\begin{equation}\label{eq:CLATE}
    \chi_P(V) \coloneqq \frac{\mathbb{E}_P(\gamma(\boldsymbol{X})\mid V)}{\mathbb{E}_P(\delta(\boldsymbol{X})\mid V)};
\end{equation}
and under the assumptions of Proposition~\ref{prop:stochastic-ident}, the strength-of-IV weighted average treatment effect within levels of $V$, i.e., the mean of $\mathbb{E}(Y(a = 1) - Y(a = 0) \mid \boldsymbol{X}, U)$ over a distribution $Q$ for $U, \overline{V} \mid V$ proportional to $w(\boldsymbol{X}, U) p(U, \overline{V} \mid V)$, is also given by \eqref{eq:CLATE}.

Taking $V = \emptyset$ yields the marginal ATE, ATT, LATE, and SIV-WATE in the above paradigms, whereas more refined choices of $V$ enable assessment of heterogeneous treatment effects. In Section~\ref{sec:est-marginal} below, we focus on estimation of marginal quantities by taking $V = \emptyset$---estimation of subgroup effects in the case that $V$ is discrete is entirely analogous.
In Section~\ref{sec:est-project}, we propose a projection approach for estimating heterogeneous effects in the case that $V$ is high-dimensional and/or contains continuous components. Proofs for all results can be found in the Appendices.

\subsection{Marginal Parameters} \label{sec:est-marginal}
Define $\xi(P) = \mathbb{E}_P(\Psi_P(\boldsymbol{X}))$, $\zeta(P) = \mathbb{E}_P(\Psi_P(\boldsymbol{X}) \mid A = 1)$, and $\chi(P) = \frac{\mathbb{E}_P(\gamma(\boldsymbol{X}))}{\mathbb{E}_P(\delta(\boldsymbol{X}))}$, specializing definitions~\eqref{eq:CATE}, \eqref{eq:CATT}, and~\eqref{eq:CLATE}, respectively, to the case where $V = \emptyset$. In the following result, we present von Mises expansions for these three marginal quantities of interest, which at once will (a) yield their nonparametric influence functions and corresponding nonparametric efficiency bounds, and (b) suggest robust estimators that can achieve their efficiency bounds under relatively weak conditions on component nuisance function estimators \citep{bkrw1993, vandervaart2000, kennedy2022}. Going forward, define the composite nuisance function $\rho(\boldsymbol{X}) = P[A = 1 \mid \boldsymbol{X}] = \pi_0(\boldsymbol{X})\lambda_0(\boldsymbol{X}) + \pi_1(\boldsymbol{X})\lambda_1(\boldsymbol{X})$, and the functionals $\Gamma(P) = \mathbb{E}_P(\gamma(\boldsymbol{X}))$ and $\Delta(P) = \mathbb{E}_P(\delta(\boldsymbol{X}))$.

\begin{lemma}\label{lemma:VM-marginal}
    The marginal ATE, ATT, and LATE/SIV-WATE parameters satisfy the von Mises expansions
\[\xi(\overline{P}) - \xi(P) = - \int \dot{\xi}(o; \overline{P})\, dP(o)+ R_{\xi}(\overline{P}, P),\]
\[\zeta(\overline{P}) - \zeta(P) = - \int \dot{\zeta}(o; \overline{P})\, dP(o)+ R_{\zeta}(\overline{P}, P),\]
\[\chi(\overline{P}) - \chi(P) = - \int \dot{\chi}(o; \overline{P})\, dP(o)+ R_{\chi}(\overline{P}, P),\]
respectively, where
\[\dot{\xi}(O; P) =\frac{2Z - 1}{\delta(\boldsymbol{X})
\pi_Z(\boldsymbol{X})}\left\{Y - \mu_Z(\boldsymbol{X}) -
\Psi_P(\boldsymbol{X})(A - \lambda_Z(\boldsymbol{X}))\right\} +
\Psi_P(\boldsymbol{X}) - \xi(P),\]

\begin{align*}
    \dot{\zeta}(O; P) 
    &= \frac{\rho(\boldsymbol{X})}{P[A = 1]}\cdot
\frac{2Z - 1}{\delta(\boldsymbol{X}) \pi_Z(\boldsymbol{X})}\left\{Y -
\mu_Z(\boldsymbol{X}) - \Psi_P(\boldsymbol{X})(A -
\lambda_Z(\boldsymbol{X}))\right\} \\
& \quad \quad \quad + \frac{A}{P[A =
1]}\left(\Psi_P(\boldsymbol{X}) - \zeta(P)\right),
\end{align*}
\begin{align*}
    \dot{\chi}(O; P) 
    &= \frac{2Z - 1}{\Delta(P)
\pi_Z(\boldsymbol{X})}\left\{Y - \mu_Z(\boldsymbol{X}) - \chi(P)(A -
\lambda_Z(\boldsymbol{X}))\right\} \\
& \quad \quad \quad \quad \quad +
\frac{1}{\Delta(P)}\left(\gamma(\boldsymbol{X}) -
\chi(P)\delta(\boldsymbol{X})\right),
\end{align*}

and the remainder terms, omitting inputs for brevity, are given by
\begin{align*}
&R_{\xi}(\overline{P}; P) \\
&= \mathbb{E}_P\bigg(\frac{1}{\overline{\delta}}\bigg\{
(\overline{\pi}_1 - \pi_1)\bigg\{\frac{\overline{\mu}_1 - \mu_1 -
\frac{\overline{\gamma}}{\overline{\delta}}(\overline{\lambda}_1 - \lambda_1)}
{\overline{\pi}_1} +
\frac{\overline{\mu}_0 - \mu_0 -
\frac{\overline{\gamma}}{\overline{\delta}}(\overline{\lambda}_0 - \lambda_0)}
{1 - \overline{\pi}_1}\bigg\} \\
& \quad \quad \quad \quad \quad + (\overline{\delta} - \delta)
\left(\frac{\overline{\gamma}}{\overline{\delta}} -
\frac{\gamma}{\delta}\right)
\bigg\}\bigg), \\
&R_{\zeta}(\overline{P}; P) \\
&= \frac{1}{\overline{P}[A=1]}\mathbb{E}_P\bigg(\frac{\overline{\rho}}{\overline{\delta}}\bigg\{
(\overline{\pi}_1 - \pi_1)\bigg\{\frac{\overline{\mu}_1 - \mu_1 -
\frac{\overline{\gamma}}{\overline{\delta}}(\overline{\lambda}_1 - \lambda_1)}
{\overline{\pi}_1} +
\frac{\overline{\mu}_0 - \mu_0 -
\frac{\overline{\gamma}}{\overline{\delta}}(\overline{\lambda}_0 - \lambda_0)}
{1 - \overline{\pi}_1}\bigg\}\bigg\} \\
& \quad \quad \quad \quad \quad + \bigg(\frac{\overline{\gamma}}{\overline{\delta}}
- \frac{\gamma}{\delta}\bigg)\bigg[(\rho - \overline{\rho})+
\frac{\overline{\rho}}{\overline{\delta}}(\overline{\delta} - \delta)\bigg] \bigg) \\
& \quad \quad + \left(\zeta(\overline{P}) - \zeta(P)\right)\left(1 - \frac{P[A = 1]}{\overline{P}[A = 1]}\right), \\
& R_{\chi}(\overline{P}; P) \\
&= \left(\chi(\overline{P}) - \chi(P)\right)\left(1 - \frac{\Delta(P)}{\Delta(\overline{P})}\right)
- \frac{\chi(\overline{P})}{\Delta(\overline{P})}
\mathbb{E}_P\left((\overline{\pi}_1 - \pi_1) \left\{\frac{\overline{\lambda}_1 - \lambda_1}{\overline{\pi}_1} + \frac{\overline{\lambda}_0 - \lambda_0}{1 - \overline{\pi}_1}\right\}\right) \\
& \quad \quad \quad \quad \quad + \frac{1}{\Delta(\overline{P})} \mathbb{E}_P\left((\overline{\pi}_1 - \pi_1) \left\{\frac{\overline{\mu}_1 - \mu_1}{\overline{\pi}_1} + \frac{\overline{\mu}_0 - \mu_0}{1 - \overline{\pi}_1}\right\}\right).
\end{align*}
\end{lemma}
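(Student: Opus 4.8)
The plan is to establish all three von Mises expansions by the same two-step recipe, applied in turn to $\psi \in \{\xi, \zeta, \chi\}$. Step (i): obtain the stated influence function $\dot\psi(\cdot; P)$ as a pathwise derivative along a generic one-dimensional submodel $\{P_t\}$ through $P$. Step (ii): verify the displayed remainder by evaluating $R_\psi(\overline{P}, P) := \psi(\overline{P}) - \psi(P) + \E_P[\dot\psi(O; \overline{P})]$ directly and simplifying it into a sum of products of nuisance errors. The two steps interlock: setting $\overline{P} = P$ in any such expansion forces $\E_P[\dot\psi(O; P)] = -R_\psi(P, P)$, and each displayed remainder vanishes at $\overline{P} = P$, so step (ii) simultaneously confirms that $\dot\psi(\cdot; P)$ is mean-zero, hence a valid influence function (indeed the canonical gradient, thus efficient, in the nonparametric model). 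Consequently the substance of the lemma is the algebraic identity in step (ii).

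For step (i) I would build each functional from the elementary ingredients $\mu_z, \lambda_z, \pi_z$, whose influence-function contributions are standard, and apply the quotient rule to $\Psi_P = \gamma/\delta$ together with the product/quotient rules for the outer operations: averaging $\Psi_P$ against the law of $\boldsymbol{X}$ for $\xi$; multiplying by $A$ and dividing by $P[A=1]$ for $\zeta$, using $\E_P[A\,h(\boldsymbol{X})] = \E_P[\rho(\boldsymbol{X}) h(\boldsymbol{X})]$; and dividing $\Gamma(P)$ by $\Delta(P)$ for $\chi$. The one manipulation worth isolating is re-expressing conditional-mean score terms in inverse-probability-weighted residual form via $\E_P[\one(Z=z) f(O) \mid \boldsymbol{X}] = \pi_z(\boldsymbol{X})\,\E_P[f(O) \mid \boldsymbol{X}, Z=z]$; this is what turns the $\mu_z$- and $\lambda_z$-derivatives into the $\frac{2Z-1}{\pi_Z(\boldsymbol{X})}\{Y - \mu_Z(\boldsymbol{X}) - \Psi_P(\boldsymbol{X})(A - \lambda_Z(\boldsymbol{X}))\}$ structure common to all three influence functions.

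For step (ii), the workhorse is iterated expectation applied to $\E_P[\dot\psi(O; \overline{P})]$: conditioning first on $(\boldsymbol{X}, Z)$ collapses the residuals ($\E_P[Y - \overline\mu_z \mid \boldsymbol{X}, Z=z] = \mu_z - \overline\mu_z$, and similarly for $A - \overline\lambda_z$), and conditioning next on $\boldsymbol{X}$ converts the weights $\one(Z=z)/\overline\pi_z(\boldsymbol{X})$ into the ratios $\pi_z(\boldsymbol{X})/\overline\pi_z(\boldsymbol{X})$. Combining the resulting expression with $\psi(\overline{P}) - \psi(P)$ and repeatedly adding and subtracting cross terms --- writing $\frac{\overline\gamma}{\overline\delta} - \frac{\gamma}{\delta} = \frac{\overline\gamma - \gamma}{\overline\delta} - \frac{\gamma}{\delta}\cdot\frac{\overline\delta - \delta}{\overline\delta}$, and $1 - \pi_z/\overline\pi_z = (\overline\pi_z - \pi_z)/\overline\pi_z$ --- makes every surviving term a product of two differences of distinct nuisance components (or a parameter difference times a relative mean difference), which is precisely the claimed second-order form. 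For $\chi$ this is the shortest computation, since only the marginal functionals $\Gamma, \Delta$ enter. The hard part will be $\zeta$: there one must additionally track the composite nuisance $\rho = \pi_0\lambda_0 + \pi_1\lambda_1$ and the normalization by $P[A=1]$, and the add-and-subtract decomposition must be chosen so that the $\rho$-error, the $\delta$-error, and the $P[A=1]$-error each appear only multiplied against another error and never in isolation --- the term $(\zeta(\overline{P}) - \zeta(P))(1 - P[A=1]/\overline{P}[A=1])$ is the residue of this bookkeeping. Once these identities are in hand, reading off the influence functions and efficiency bounds and noting the robustness implications of the product-form remainders is immediate.
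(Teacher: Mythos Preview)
Your proposal is correct and matches the paper's approach essentially exactly: the paper's proof is precisely your step~(ii) --- it writes out $\psi(\overline{P}) - \psi(P) + \mathbb{E}_P[\dot\psi(O;\overline{P})]$, conditions on $(\boldsymbol{X},Z)$ to collapse the residuals, and then uses the same add-and-subtract manipulations you describe (in particular an identity for $\frac{\overline\gamma}{\overline\delta}-\frac{\gamma}{\delta}$) to isolate the product-of-errors form. One small remark: your step~(i) is not needed to prove the lemma as stated, since the lemma merely asserts an algebraic von Mises identity with $\dot\psi$ given; the claim that $\dot\psi$ is the nonparametric efficient influence function is the content of the subsequent corollary, which the paper deduces from the second-order nature of the remainder you verify in step~(ii).
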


As mentioned, Lemma~\ref{lemma:VM-marginal} yields the nonparametric influence functions of the parameters $\xi(P)$, $\zeta(P)$, $\chi(P)$. The key observation is that the remainder terms $R_{\xi}$, $R_{\zeta}$, and $R_{\chi}$ are sums of products of errors between $\overline{P}$ and $P$, i.e., second order.
We summarize in the following result.

\begin{corollary}\label{cor:IF-marginal}
The nonparametric efficient influence functions of $\xi$, $\zeta$, and $\chi$, at $P$ are $\dot{\xi}(O; P)$, $\dot{\zeta}(O;P)$, and $\dot{\chi}(O; P)$, respectively.
\end{corollary}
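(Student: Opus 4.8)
The plan is to derive Corollary~\ref{cor:IF-marginal} directly from the von Mises expansions in Lemma~\ref{lemma:VM-marginal}, using the standard principle that a first-order functional expansion with a genuinely second-order remainder pins down an influence function, combined with the fact that in a nonparametric model the tangent space at $P$ equals $L^2_0(P)$, so that influence functions are unique and hence efficient \citep{bkrw1993, vandervaart2000, kennedy2022}. Before invoking this principle I would record two preliminary facts, each by direct inspection of the formulas for $\dot\xi$, $\dot\zeta$, $\dot\chi$. First, for every $P$ in the model these functions have $P$-mean zero: the weighted-residual pieces of the form $\frac{2Z-1}{\delta(\boldsymbol{X})\pi_Z(\boldsymbol{X})}\{Y - \mu_Z(\boldsymbol{X}) - \Psi_P(\boldsymbol{X})(A - \lambda_Z(\boldsymbol{X}))\}$ have conditional mean zero given $\boldsymbol{X}$ by the very definitions of $\mu_z$ and $\lambda_z$, while the remaining ``plug-in minus parameter'' terms are centered by construction. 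Second, they lie in $L^2(P)$, which follows from the positivity condition~\eqref{eq:pos}, relevance (so $\delta(\boldsymbol{X})$ is a.s.\ nonzero, with $|\delta(\boldsymbol{X})|$ bounded away from zero taken as a maintained regularity condition), and a mild moment condition on $Y$; concretely we assume $\mathbb{E}_P[\dot\xi(O;P)^2] < \infty$ and likewise for $\dot\zeta,\dot\chi$. Hence each $\dot\psi(\cdot;P)$ is a candidate element of $L^2_0(P)$.

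Next I would run the usual parametric-submodel calculation. Fix a regular one-dimensional submodel $\{P_t\}_{|t| < \epsilon}$ with $P_0 = P$ and score $s(O) = \frac{\partial}{\partial t}\log p_t(O)\big|_{t=0} \in L^2_0(P)$. Substituting $\overline P = P_t$ into, e.g., $\xi(\overline P) - \xi(P) = -\int \dot\xi(o;\overline P)\,dP(o) + R_\xi(\overline P, P)$ and using $\mathbb{E}_{P_t}[\dot\xi(O;P_t)] = 0$ for all $t$, one rewrites the leading term as $\int \dot\xi(o;P)\,dP_t(o)$ up to an $O(t^2)$ correction arising from $\int \{\dot\xi(o;P_t) - \dot\xi(o;P)\}\,d(P_t - P)(o)$. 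Differentiating at $t = 0$ then yields $\frac{d}{dt}\xi(P_t)\big|_{t=0} = \mathbb{E}_P[\dot\xi(O;P)\,s(O)]$, provided the remainder contributes nothing to the derivative --- and this is exactly where Lemma~\ref{lemma:VM-marginal} does the work: its remainders $R_\xi, R_\zeta, R_\chi$ are $P$-expectations of sums of \emph{products} of pairs of nuisance discrepancies ($\overline\pi_1 - \pi_1$, $\overline\delta - \delta$, $\overline\mu_z - \mu_z$, $\overline\lambda_z - \lambda_z$, $\overline\rho - \rho$), together with drift terms such as $(\zeta(\overline P) - \zeta(P))(1 - P[A=1]/\overline P[A=1])$ and $(\chi(\overline P) - \chi(P))(1 - \Delta(P)/\Delta(\overline P))$ that are likewise products of two vanishing quantities; hence $R_\psi(P_t, P) = O(t^2) = o(t)$ and $\frac{d}{dt}R_\psi(P_t, P)\big|_{t=0} = 0$. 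The identical computation applies to $\zeta$ and to $\chi$ --- the latter simultaneously handling the CLATE and SIV-WATE paradigms, since both correspond to the \emph{same} statistical functional $\chi(P)$. This establishes that $\dot\xi(\cdot;P)$, $\dot\zeta(\cdot;P)$, $\dot\chi(\cdot;P)$ are gradients of the respective target functionals at $P$.

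Finally, to upgrade ``a gradient'' to ``the efficient influence function'', I would invoke that the model is nonparametric, so its tangent space at $P$ is all of $L^2_0(P)$; any gradient of a pathwise-differentiable functional differs from the canonical gradient by an element orthogonal to the tangent space, which here forces equality and uniqueness. Since each of $\dot\xi(\cdot;P), \dot\zeta(\cdot;P), \dot\chi(\cdot;P)$ was shown to lie in $L^2_0(P)$ and to represent the pathwise derivative against every score, each is the unique nonparametric efficient influence function of its functional, proving the corollary. Given Lemma~\ref{lemma:VM-marginal}, I do not expect a genuine obstacle; the only steps needing care are the finite-variance verification (relying on positivity, relevance, and the moment condition absorbed into the regularity assumptions) and confirming that every remainder term is a sum of products of two small factors so that it is $o(t)$ along submodels --- both routine inspections rather than substantive difficulties.
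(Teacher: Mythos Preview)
Your proposal is correct and is essentially the same approach as the paper's: the paper's proof simply says the result ``follows from Lemma~\ref{lemma:VM-marginal} and a direct application of Lemma~2 in \citet{kennedy2023}'', and what you have written is precisely the content of that cited lemma spelled out in full (mean-zero and $L^2$ checks, pathwise-derivative calculation along a regular submodel, second-order remainder killing the $o(t)$ term, uniqueness via the nonparametric tangent space).
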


As alluded to above, Lemma~\ref{lemma:VM-marginal} and Corollary~\ref{cor:IF-marginal} will provide the basis for flexible and efficient estimators of the quantities of interest. We remark first that the influence functions for $\xi(P)$ and $\chi(P)$ have appeared elsewhere. Namely, nonparametric efficiency theory and influence function-based estimators of the ATE parameter $\xi(P)$ were developed in \citet{wang2018}. These authors, however, only considered parametric models for component nuisance function estimators, whereas we will allow flexible machine learning-based estimators. Moreover, influence function-based estimation of the LATE parameter $\chi(P)$ has been well studied---see~\citet{ogburn2015, belloni2017, takatsu2023drmliv}. Second, we note that the subtle difference in the mean-of-ratio type parameter $\xi(P) = \mathbb{E}_P\left(\frac{\gamma(\boldsymbol{X})}{\delta(\boldsymbol{X})}\right)$ for the ATE, and ratio-of-mean type parameter $\chi(P) = \frac{\mathbb{E}_P\left(\gamma(\boldsymbol{X})\right)}{\mathbb{E}_P\left(\delta(\boldsymbol{X})\right)}$ for the LATE, results in substantial differences in their influence functions. This illustrates how, in general, the estimation strategy must be closely tied to the identification strategy that one pursues in a given setting.

We now propose robust estimators of the marginal causal contrasts, making use of the above results. Throughout, we will assume for simplicity that we have nuisance estimates $\widehat{P} = (\widehat{\pi}_1, \widehat{\mu}_0, \widehat{\mu}_1, \widehat{\lambda}_0, \widehat{\lambda}_1)$ trained on a separate independent sample. In practice, given one random sample, one can randomly split the data in folds to achieve the same separation, and perform cross-fitting (i.e., swap the role of the folds, repeat the procedure and average the resulting estimates) to regain efficiency \citep{bickel1988, robins2008, zheng2010, chernozhukov2018}. To simplify notation, we will go forth with just one split of the data, noting that analysis of the cross-fit procedure follows easily.

The von Mises expansions laid out in Lemma~\ref{lemma:VM-marginal} each motivate a one-step procedure to correct the bias of a plug-in estimator. For example, the first-order bias of the plug-in ATE estimator $\xi(\widehat{P})$ can be mitigated by subtracting off an estimate of this bias: \[\widehat{\xi} = \xi(\widehat{P}) + \mathbb{P}_n\left(\dot{\xi}(O; \widehat{P})\right),\]
where $\mathbb{P}_n(f) = \frac{1}{n}\sum_{i=1}^n f(O_i)$ is the empirical mean of the function $f(O)$. Indeed, we propose such a one-step estimator for both the ATE estimand $\xi(P)$ and the ATT estimand $\zeta(P)$:
\[\widehat{\xi} =
\mathbb{P}_n\left(\frac{2Z - 1}{\widehat{\delta}(\boldsymbol{X})
\widehat{\pi}_Z(\boldsymbol{X})}\left\{Y -
\widehat{\mu}_Z(\boldsymbol{X}) -
\frac{\widehat{\gamma}(\boldsymbol{X})}{\widehat{\delta}(\boldsymbol{X})}(A -
\widehat{\lambda}_Z(\boldsymbol{X}))\right\} +
\frac{\widehat{\gamma}(\boldsymbol{X})}{\widehat{\delta}(\boldsymbol{X})}\right),\]
\begin{align*}
\widehat{\zeta}
&=
\mathbb{P}_n\bigg(\frac{\widehat{\rho}(\boldsymbol{X})}
{\widehat{P}[A = 1]}\frac{2Z - 1}{\widehat{\delta}(\boldsymbol{X})
\widehat{\pi}_Z(\boldsymbol{X})}\bigg\{Y -
\widehat{\mu}_Z(\boldsymbol{X}) -
\frac{\widehat{\gamma}(\boldsymbol{X})}{\widehat{\delta}(\boldsymbol{X})}(A -
\widehat{\lambda}_Z(\boldsymbol{X}))\bigg\} \\
& \quad \quad \quad \quad +
\frac{A}{\widehat{P}[A = 1]}\frac{\widehat{\gamma}(\boldsymbol{X})}
{\widehat{\delta}(\boldsymbol{X})}\bigg) +
\zeta(\widehat{P})\left(1 - \frac{\mathbb{P}_n[A]}{\widehat{P}[A = 1]}\right).
\end{align*}
For the LATE estimand $\chi(P)$, we instead follow \citet{takatsu2023drmliv} and take a ratio of one-step estimators for its numerator $\Gamma(P)$ and its denominator $\Delta(P)$, respectively:
\[\widehat{\chi} =
\frac{\mathbb{P}_n\left(\frac{2Z-1}
{\widehat{\pi}_Z(\boldsymbol{X})}\{Y -
\widehat{\mu}_Z(\boldsymbol{X})\} +
\widehat{\gamma}(\boldsymbol{X})\right)}
{\mathbb{P}_n\left(\frac{2Z-1}{\widehat{\pi}_Z(\boldsymbol{X})}\{A -
\widehat{\lambda}_Z(\boldsymbol{X})\} +
\widehat{\delta}(\boldsymbol{X})\right)}.\]
In other words, $\widehat{\chi} = \frac{\mathbb{P}_n(\phi_y(O; \widehat{P}))}{\mathbb{P}_n(\phi_a(O; \widehat{P}))}$, where $\phi_y(O; P) = \frac{2Z-1}{\pi_Z(\boldsymbol{X})}(Y - \mu_Z(\boldsymbol{X})) + \gamma(\boldsymbol{X})$, and $\phi_a(O; P) = \frac{2Z-1}{\pi_Z(\boldsymbol{X})}(A - \lambda_Z(\boldsymbol{X})) + \delta(\boldsymbol{X})$ are the uncentered influence functions of $\Gamma(P)$ and $\Delta(P)$, respectively.

Finally, we give the main result for this section, detailing the asymptotic behavior of the proposed estimators. The rates of convergence are related to the second-order remainder terms from the von Mises expansions in Lemma~\ref{lemma:VM-marginal}. As such, these convergence rates depend on products of errors in nuisance function estimation, permitting nominal $\sqrt{n}$-convergence and asymptotic normality for the parameters of interest, even under flexible specifications for the various component nuisance functions. For any function $f \in L_2(P)$, we will write $\lVert f \rVert^2 = \mathbb{E}_P(f(O)^2)$ for the squared $L_2(P)$ norm of $f$.

\begin{theorem}\label{thm:est-marginal}
    If $\lVert \xi(\, \cdot \, ; \widehat{P}) - \xi(\, \cdot \, ; P)\rVert = o_P(1)$, then
    \[\widehat{\xi} - \xi(P) = \mathbb{P}_n\left(\dot{\xi}(O; P)\right) + R_{\xi}(\widehat{P}, P) + o_P\left(n^{-1/2}\right).\]
    In particular, if $R_{\xi}(\widehat{P}, P) = o_P(n^{-1/2})$, then $\sqrt{n}(\widehat{\xi} - \xi(P)) \overset{d}{\to} \mathcal{N}(0, \mathrm{Var}_P(\dot{\xi}(O; P)))$, i.e., $\widehat{\xi}$ attains the nonparametric efficiency bound.
    
    Similarly, if $\lVert \zeta(\, \cdot \, ; \widehat{P}) - \zeta(\, \cdot \, ; P)\rVert = o_P(1)$, then
    \[\widehat{\zeta} - \zeta(P) = \mathbb{P}_n\left(\dot{\zeta}(O; P)\right) + R_{\zeta}(\widehat{P}, P) + o_P\left(n^{-1/2}\right),\]
    and if $R_{\zeta}(\widehat{P}, P) = o_P(n^{-1/2})$, then $\sqrt{n}(\widehat{\zeta} - \zeta(P)) \overset{d}{\to} \mathcal{N}(0, \mathrm{Var}_P(\dot{\zeta}(O; P)))$.
    
    If $\lVert \phi_y(\, \cdot \, ; \widehat{P}) - \phi_y(\, \cdot \, ; P)\rVert = o_P(1)$, $\lVert \phi_a(\, \cdot \, ; \widehat{P}) - \phi_a(\, \cdot \, ; P)\rVert = o_P(1)$, $\Delta(P) \neq 0$, and for some $\epsilon > 0$, $P[|\mathbb{P}_n(\phi_a(O; \widehat{P}))| \geq \epsilon] = 1$, then
    \[\widehat{\chi} - \chi(P) = \mathbb{P}_n\left(\dot{\chi}(O; P)\right) + O_P\left(R_y + R_a\right) + o_P\left(n^{-1/2}\right),\]
    where $R_y = \mathbb{E}_P\left((\widehat{\pi}_1 - \pi_1)\left\{\frac{\widehat{\mu}_1 - \mu_1}{\widehat{\pi}_1} + \frac{\widehat{\mu}_0 - \mu_0}{1 - \widehat{\pi}_1}\right\}\right)$, $R_a = \mathbb{E}_P\left((\widehat{\pi}_1 - \pi_1)\left\{\frac{\widehat{\lambda}_1 - \lambda_1}{\widehat{\pi}_1} + \frac{\widehat{\lambda}_0 - \lambda_0}{1 - \widehat{\pi}_1}\right\}\right)$. If further $R_a + R_y = o_P(n^{-1/2})$ then $\sqrt{n}(\widehat{\chi} - \chi(P)) \overset{d}{\to} \mathcal{N}(0, \mathrm{Var}_P(\dot{\chi}(O; P)))$.
\end{theorem}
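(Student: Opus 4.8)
The plan is to treat $\widehat{\xi}$ and $\widehat{\zeta}$ as one-step (bias-corrected plug-in) estimators, apply the von Mises expansions already recorded in Lemma~\ref{lemma:VM-marginal}, and control the resulting empirical-process remainder using sample splitting; for $\widehat{\chi}$ I would run the same machinery separately on its numerator and denominator and then linearize the ratio.

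\textbf{Estimators $\widehat{\xi}$ and $\widehat{\zeta}$.} First I would verify by direct algebra that $\widehat{\xi} = \xi(\widehat{P}) + \mathbb{P}_n(\dot{\xi}(O;\widehat{P}))$ and $\widehat{\zeta} = \zeta(\widehat{P}) + \mathbb{P}_n(\dot{\zeta}(O;\widehat{P}))$; in the latter, the term $\zeta(\widehat{P})(1 - \mathbb{P}_n[A]/\widehat{P}[A=1])$ appearing in the displayed estimator is exactly what is generated by the $-\zeta(P)$ piece of $\dot{\zeta}$. Writing $\psi$ for either $\xi$ or $\zeta$, subtract the truth and insert the von Mises expansion of Lemma~\ref{lemma:VM-marginal} at $\overline{P} = \widehat{P}$:
\[
\widehat{\psi} - \psi(P) = \big[\psi(\widehat{P}) - \psi(P)\big] + \mathbb{P}_n\big(\dot{\psi}(O;\widehat{P})\big) = (\mathbb{P}_n - P)\big(\dot{\psi}(O;\widehat{P})\big) + R_\psi(\widehat{P}, P).
\]
Now decompose $(\mathbb{P}_n - P)(\dot{\psi}(\cdot;\widehat{P})) = \mathbb{P}_n(\dot{\psi}(\cdot;P)) + (\mathbb{P}_n - P)(\dot{\psi}(\cdot;\widehat{P}) - \dot{\psi}(\cdot;P))$, using $\mathbb{E}_P\{\dot{\psi}(O;P)\} = 0$. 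Since $\widehat{P}$ is trained on an independent split, conditionally on $\widehat{P}$ the second summand is a centered average of a \emph{fixed} function whose $L_2(P)$ norm is $o_P(1)$ by the stated hypothesis, so a conditional Chebyshev bound (as in \citet{chernozhukov2018, kennedy2022}) makes it $o_P(n^{-1/2})$ --- a constant shift in the centering is harmless since $(\mathbb{P}_n - P)$ vanishes on constants. This yields the two claimed expansions, and the asymptotic normality/efficiency statements follow from the ordinary CLT for the iid mean-zero average $\mathbb{P}_n(\dot{\psi}(\cdot;P))$ once $R_\psi(\widehat{P}, P) = o_P(n^{-1/2})$, together with the fact (Corollary~\ref{cor:IF-marginal}) that $\dot{\psi}$ is the nonparametric efficient influence function.

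\textbf{Ratio estimator $\widehat{\chi}$.} Write $\widehat{\chi} = \widehat{\Gamma}/\widehat{\Delta}$ with $\widehat{\Gamma} = \mathbb{P}_n(\phi_y(O;\widehat{P}))$, $\widehat{\Delta} = \mathbb{P}_n(\phi_a(O;\widehat{P}))$, where $\phi_y(\cdot;P)$, $\phi_a(\cdot;P)$ are the uncentered influence functions of $\Gamma(P)$ and $\Delta(P)$; note each of $\widehat{\Gamma}, \widehat{\Delta}$ is itself a one-step estimator. Applying the preceding paragraph verbatim gives $\widehat{\Gamma} - \Gamma(P) = \mathbb{P}_n(\phi_y(\cdot;P) - \Gamma(P)) + R_y + o_P(n^{-1/2})$ and $\widehat{\Delta} - \Delta(P) = \mathbb{P}_n(\phi_a(\cdot;P) - \Delta(P)) + R_a + o_P(n^{-1/2})$, where a short computation of the conditional bias $\mathbb{E}_P\{\phi_y(O;\widehat{P})\} - \Gamma(P)$ (resp.\ for $\phi_a$) identifies the second-order remainders as precisely $R_y$ and $R_a$; in particular $\widehat{\Delta} = \Delta(P) + o_P(1)$. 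Using $\chi(P) = \Gamma(P)/\Delta(P)$, write
\[
\widehat{\chi} - \chi(P) = \frac{1}{\widehat{\Delta}}\Big\{\big(\widehat{\Gamma} - \Gamma(P)\big) - \chi(P)\big(\widehat{\Delta} - \Delta(P)\big)\Big\},
\]
and split $1/\widehat{\Delta} = 1/\Delta(P) + (1/\widehat{\Delta} - 1/\Delta(P))$. The $1/\Delta(P)$ piece contributes $\frac{1}{\Delta(P)}\mathbb{P}_n\big(\phi_y(\cdot;P) - \chi(P)\phi_a(\cdot;P)\big) + \frac{1}{\Delta(P)}(R_y - \chi(P)R_a) + o_P(n^{-1/2})$; I would then verify by direct substitution the identity $\frac{1}{\Delta(P)}\{\phi_y(\cdot;P) - \chi(P)\phi_a(\cdot;P)\} = \dot{\chi}(\cdot;P)$, so this reduces to $\mathbb{P}_n(\dot{\chi}(O;P)) + O_P(R_y + R_a) + o_P(n^{-1/2})$. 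For the remaining cross term, the almost-sure bound $|1/\widehat{\Delta}| \le 1/\epsilon$ together with $1/\widehat{\Delta} - 1/\Delta(P) = O_P(n^{-1/2} + R_a)$ and $(\widehat{\Gamma} - \Gamma(P)) - \chi(P)(\widehat{\Delta} - \Delta(P)) = O_P(n^{-1/2} + R_y + R_a)$ bound it by $O_P\big((n^{-1/2}+R_a)(n^{-1/2}+R_y+R_a)\big)$, which --- since the stated $L_2$-consistency also forces $R_y = o_P(1)$ and $R_a = o_P(1)$ --- is absorbed into $O_P(R_y + R_a) + o_P(n^{-1/2})$. Collecting terms gives the expansion for $\widehat{\chi}$, and the CLT follows as before once $R_y + R_a = o_P(n^{-1/2})$.

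\textbf{Main obstacle.} The algebra identifying the estimators with one-step corrections, and the influence-function identities, are routine, and the empirical-process control is the standard sample-splitting argument. The one place needing genuine care is the ratio linearization for $\widehat{\chi}$: one must keep the stochastic denominator $\widehat{\Delta}$ under control and check that \emph{every} cross-product of nuisance-estimation errors is either truly $o_P(n^{-1/2})$ or foldable into $O_P(R_y + R_a)$. The hypothesis $P[|\mathbb{P}_n(\phi_a(O;\widehat{P}))| \ge \epsilon] = 1$ is exactly what makes this clean, as it lets $1/\widehat{\Delta}$ be treated as a uniformly bounded multiplier throughout.
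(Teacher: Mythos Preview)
Your proposal is correct and matches the paper's proof essentially line for line: the same one-step decomposition $\widehat{\psi} - \psi(P) = \mathbb{P}_n(\dot{\psi}(\cdot;P)) + (\mathbb{P}_n - P)(\dot{\psi}(\cdot;\widehat{P}) - \dot{\psi}(\cdot;P)) + R_{\psi}(\widehat{P},P)$ for $\psi \in \{\xi,\zeta\}$ with the sample-splitting empirical-process bound, and the same ratio linearization for $\widehat{\chi}$ using the identity $\dot{\chi}(\cdot;P) = \frac{1}{\Delta(P)}\{\phi_y(\cdot;P) - \chi(P)\phi_a(\cdot;P)\}$. The only cosmetic difference is that the paper keeps the full $1/\widehat{\Delta}$ multiplier on the $R_y - \chi(P)R_a$ and $o_P(n^{-1/2})$ pieces (bounding them directly via $|1/\widehat{\Delta}| \le 1/\epsilon$) and applies the $1/\widehat{\Delta} \to 1/\Delta(P)$ replacement only to the CLT term $(\mathbb{P}_n - P)(\phi_y - \chi(P)\phi_a)$, whereas you split $1/\widehat{\Delta}$ uniformly and then bound the resulting product; both routes are valid and lead to the same conclusion.
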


\begin{remark}\label{rem:robust-marginal}
    Under the conditions of Theorem~\ref{thm:est-marginal}, and assuming $P[\epsilon \leq \widehat{\pi}_1 \leq 1 - \epsilon] = 1$,
    \[R_a + R_y = O_P\left(\left\lVert \widehat{\pi}_1 - \pi_1\right\rVert \left\{\left\lVert \widehat{\mu}_0 - \mu_0\right\rVert + \left\lVert \widehat{\mu}_1 - \mu_1\right\rVert + \left\lVert \widehat{\lambda}_0 - \lambda_0\right\rVert + \left\lVert \widehat{\lambda}_1 - \lambda_1\right\rVert\right\}\right).\]
    If also $P[\epsilon \leq \widehat{\delta}] = 1$, and $P[|\widehat{\gamma}| \leq 1/\epsilon] = 1$,
    \begin{align*}
        R_{\xi}(\widehat{P}, P) 
        &= O_P\bigg(\left\lVert \widehat{\pi}_1 - \pi_1\right\rVert \left\{\left\lVert \widehat{\mu}_0 - \mu_0\right\rVert + \left\lVert \widehat{\mu}_1 - \mu_1\right\rVert + \left\lVert \widehat{\lambda}_0 - \lambda_0\right\rVert + \left\lVert \widehat{\lambda}_1 - \lambda_1\right\rVert \right\}\\
        & \quad \quad \quad \quad  + \left\lVert\widehat{\delta} - \delta\right\rVert \left\lVert\frac{\widehat{\gamma}}{\widehat{\delta}} - \frac{\gamma}{\delta}\right\rVert\bigg).
    \end{align*}
    If further $\zeta(\widehat{P})\overset{P}{\to} \zeta(P)$, $\widehat{P}[A = 1] - P[A = 1] = O_P(n^{-1/2})$, and $P[|\rho| \leq 1/\epsilon] = 1$, then
    \begin{align*}
        R_{\zeta}(\widehat{P}, P) 
        &= O_P\bigg(\left\lVert \widehat{\pi}_1 - \pi_1\right\rVert \left\{\left\lVert \widehat{\mu}_0 - \mu_0\right\rVert + \left\lVert \widehat{\mu}_1 - \mu_1\right\rVert + \left\lVert \widehat{\lambda}_0 - \lambda_0\right\rVert + \left\lVert \widehat{\lambda}_1 - \lambda_1\right\rVert\right\} \\
        & \quad \quad \quad \quad + \left\{\left\lVert\widehat{\delta} - \delta\right\rVert + \left\lVert\widehat{\rho} - \rho\right\rVert\right\}\left\lVert\frac{\widehat{\gamma}}{\widehat{\delta}} - \frac{\gamma}{\delta}\right\rVert\bigg) + o_P\left(n^{-1/2}\right).
    \end{align*}
\end{remark}

Critically, Theorem~\ref{thm:est-marginal} shows that the proposed one-step estimators attain faster rates of convergence than the component nuisance function estimators. Specifically, in view of Remark~\ref{rem:robust-marginal}, convergence rates are on the order of products of nuisance function error rates. Moreover, in cases where these products are $o_P(n^{-1/2})$ (e.g., if all relevant component nuisance errors are $o_P(n^{-1/4})$), then the proposed estimators are $\sqrt{n}$-consistent, asymptotically normal and nonparametrically efficient; their asymptotic variances are equal to the variances of the influence functions of the corresponding functionals, which represent local asymptotic minimax lower bounds (see Corollary 2.6 of~\citet{vandervaart2002}). The necessary nuisance error rates for such asymptotic efficiency are attainable under structural conditions, such as smoothness or sparsity. Finally, we note that under the conditions that yield $\sqrt{n}$-consistency, Wald confidence intervals are asymptotically valid, e.g., a $(1-\alpha)$-level interval is given by
\[\widehat{\xi} \pm z_{1-\alpha/2} \sqrt{\frac{1}{n} \mathbb{P}_n\left(\dot{\xi}(O; \widehat{P})^2\right)},\]
where $z_{\alpha}$ is the $\alpha$-th quantile of the standard normal distribution---analogous confidence intervals can be constructed for $\zeta(P)$ and $\chi(P)$.


\subsection{Conditional Effects via Projection Parameters}\label{sec:est-project}
We now consider, for $v \in \mathrm{supp}(V)$, estimation of the conditional effects $\xi_P(v)$, $\zeta_P(v)$, $\chi_P(v)$, defined in~\eqref{eq:CATE}, \eqref{eq:CATT}, \eqref{eq:CLATE}, when $V$ contains continuous components and/or is high-dimensional. In such cases, these functionals are not, in general, pathwise differentiable. One might proceed by specifying parametric models, e.g., \citet{Tan:2006} proposed a two-stage regression, weighting, and doubly robust estimator of $\chi_P(v)$ for $V \equiv \boldsymbol{X}$ under specific parametric models for $\chi_P(V)$ and/or $\pi_1(\boldsymbol{X})$. Alternatively, as the parameters of interest involve structured combinations of regression functions, a more flexible approach would be to adapt the smooth functional-based methods from Sections~\ref{sec:est-marginal} to this infinite-dimensional setting---see the ``doubly robust learners'' of $\chi_P(v)$ in \citet{frauen2022, takatsu2023drmliv}, for instance. We leave such a direct approach for estimating $\xi_P(v)$ and $\zeta_P(v)$ for future work. In this paper, we will instead consider a working model approach, in line with \citet{neugebauer2007, rosenblum2010}, in which we target the projection of the true functions $\xi_P(\cdot)$, $\zeta_P(\cdot)$, $\chi_P(\cdot)$ onto working models. Namely, for $\theta \in \{\xi, \zeta, \chi\}$, define
\begin{equation}\label{eq:projection}
    \boldsymbol{\beta}_{\theta}(P) \coloneqq \mathrm{arg\,min}_{\boldsymbol{\beta} \in
        \mathbb{R}^q} \mathbb{E}_P\left[w(V) \left\{\theta_P(V) -
          m_{\theta}(V; \boldsymbol{\beta})\right\}^2\right],
\end{equation} where
    $\left\{m_{\theta}(V; \boldsymbol{\beta}) : \boldsymbol{\beta} \in \mathbb{R}^q\right\}$ is some working model (i.e., not necessarily assumed to contain the true function) for
    $\theta_P(V)$, and $w(V)$ a fixed
    non-negative weighting function. Note that while we adopt the $L_2(P)$-norm for projection, other choices are certainly possible. \citet{Abadie:2003a} took a similar approach for estimating the CLATE function $\chi_P(v)$.
    
    The proposed projection approach represents a middle ground between a fully parametric model-based specification and a direct model-free approach. From the definition~\eqref{eq:projection}, $\boldsymbol{\beta}_\theta(P)$ is the parameter corresponding to the best fitting (i.e., closest in distance) model of the form $m_{\theta}(V; \boldsymbol{\beta})$. Practically, the choice between considering $m_{\theta}(V; \boldsymbol{\beta})$ as a working model, or as a correctly specified model, amounts to a trade-off between bias and variance. In particular, if (a) the working model at hand happens to be correctly specified, then either the projection-based approach or one that assumes the working model is correct will yield valid inference, though potentially at some efficiency cost for the former; and if (b) the working model is not correctly specified, then the projection-based approach will yield valid inference for a well-defined summary measure, while the model-based approach may be biased. For further relevant commentary regarding such trade-offs, see the discussions in~\citet{kennedy2019, kennedy2023} where analogous projection parameters are proposed.

    As with the marginal parameters studied in Section~\ref{sec:est-marginal}, we will proceed with an influence function-based approach, beginning with the next result.
    
\begin{proposition}\label{prop:IF-projection}
The nonparametric efficient influence functions of the projection parameters $\boldsymbol{\beta}_{\theta}(P)$ are characterized up to proportionality as follows:

    \begin{align*}
      \dot{\boldsymbol{\beta}}_{\xi}(O; P)
      &\propto M_{\xi}(V; \boldsymbol{\beta}_{\xi}(P)) w(V) \bigg[
        \Psi_P(\boldsymbol{X}) - m_{\xi}(V; \boldsymbol{\beta}_{\xi}(P)) \\
      & \quad +
        \frac{2Z - 1}{\delta(\boldsymbol{X})\pi_Z(\boldsymbol{X})}\left\{
        Y - \mu_Z(\boldsymbol{X}) -
        \Psi_P(\boldsymbol{X})(A - \lambda_Z(\boldsymbol{X}))\right\}\bigg],
    \end{align*}

    \begin{align*}
      \dot{\boldsymbol{\beta}}_{\zeta}(O; P)
      &\propto M_{\zeta}(V; \boldsymbol{\beta}_{\zeta}(P)) w(V) \bigg[
        \zeta_P(V) - m_{\zeta}(V; \boldsymbol{\beta}_{\zeta}(P)) \\
      & \quad + \frac{1}{P[A = 1 \mid V]}\bigg\{A\left\{\Psi_P(\boldsymbol{X}) - \zeta_P(V)
        \right\}  \\
      & \quad \quad \quad + \frac{\rho(\boldsymbol{X})}{\delta(\boldsymbol{X})}
        \frac{2Z - 1}{\pi_Z(\boldsymbol{X})}\left\{Y - \mu_Z(\boldsymbol{X}) -
        \Psi_P(\boldsymbol{X})(A - \lambda_Z(\boldsymbol{X}))\right\}\bigg\}\bigg],
    \end{align*}
        \begin{align*}
      \dot{\boldsymbol{\beta}}_{\chi}(O; P)
      &\propto M_{\chi}(V; \boldsymbol{\beta}_{\chi}(P)) w(V) \bigg[
        \chi_P(V) - m_{\chi}(V; \boldsymbol{\beta}_{\chi}(P)) \\
      & \quad + \frac{1}{\mathbb{E}_P(\delta(\boldsymbol{X}) \mid V)}\bigg\{
        \gamma(\boldsymbol{X}) - \chi_P(V) \delta(\boldsymbol{X}) \\
      & \quad \quad \quad +
        \frac{2Z - 1}{\pi_Z(\boldsymbol{X})}\left\{ Y - \mu_Z(\boldsymbol{X}) -
        \chi_P(V)(A - \lambda_Z(\boldsymbol{X}))\right\}\bigg\}\bigg],
    \end{align*}
    where
    $M_{\theta}(V; \boldsymbol{\beta}) = \nabla_{\boldsymbol{\beta}}\,
    m_{\theta}(V; \boldsymbol{\beta})$, for $\theta \in \{\xi, \zeta, \chi\}$.
\end{proposition}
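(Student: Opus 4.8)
The plan is to treat each projection parameter as the root of a smooth population estimating equation and apply the standard ``implicit function theorem along a submodel'' recipe for efficient influence functions. Under the usual regularity condition that the working-model Hessian $H_\theta(P) := \mathbb{E}_P[w(V)M_\theta(V;\boldsymbol{\beta}_\theta(P))M_\theta(V;\boldsymbol{\beta}_\theta(P))^{\T}]$ is nonsingular, the minimizer in \eqref{eq:projection} is the unique solution of the normal equation
\[
0 = \mathbb{E}_P\big[w(V)\,M_\theta(V;\boldsymbol{\beta}_\theta(P))\,\{\theta_P(V) - m_\theta(V;\boldsymbol{\beta}_\theta(P))\}\big] =: \mathbb{E}_P[g_\theta(O;\boldsymbol{\beta}_\theta(P),P)].
\]
I would fix a regular one-dimensional submodel $\{P_t\}$ through $P$ with score $s$, differentiate the identity $\mathbb{E}_{P_t}[g_\theta(O;\boldsymbol{\beta}_\theta(P_t),P_t)]=0$ at $t=0$, and solve for $\dot{\boldsymbol{\beta}}_\theta := \partial_t\boldsymbol{\beta}_\theta(P_t)|_{t=0}$. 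The chain rule produces three contributions: (i) a term $-H_\theta(P)\dot{\boldsymbol{\beta}}_\theta$ from the $\boldsymbol{\beta}$-slot (for a linear working model; this is where the nonsingular matrix $H_\theta(P)^{-1}$ enters, and is exactly why the influence functions are pinned down only up to proportionality); (ii) the derivative through the explicit law in the outer expectation, whose representer is $g_\theta(O;\boldsymbol{\beta}_\theta(P),P)$ itself (already $P$-centered, since the normal equation vanishes); and (iii) the derivative through the nuisance functions sitting inside $g_\theta$, namely $\theta_P(\cdot)$ and, for $\zeta,\chi$, also the conditional denominators $P[A=1\mid V]$ and $\mathbb{E}_P(\delta(\boldsymbol{X})\mid V)$. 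Matching the resulting expression for $\dot{\boldsymbol{\beta}}_\theta$ to $\mathbb{E}_P[\dot{\boldsymbol{\beta}}_\theta(O;P)s(O)]$ then identifies the efficient influence function as $H_\theta(P)^{-1}$ times the sum of the representers of (ii) and (iii).

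Only step (iii) requires real work, and there my strategy is to first collapse $\theta_P(V)$ onto the primitive nuisances $\mu_z,\lambda_z,\pi_z$ via iterated expectation: $\mathbb{E}_P[w(V)M_\xi(V;\boldsymbol{\beta})\xi_P(V)] = \mathbb{E}_P[w(V)M_\xi(V;\boldsymbol{\beta})\Psi_P(\boldsymbol{X})]$; $\zeta_P(V) = \mathbb{E}_P[\rho(\boldsymbol{X})\Psi_P(\boldsymbol{X})\mid V]/P[A=1\mid V]$; and $\chi_P(V) - m_\chi(V;\boldsymbol{\beta}) = \mathbb{E}_P[\gamma(\boldsymbol{X}) - m_\chi(V;\boldsymbol{\beta})\delta(\boldsymbol{X})\mid V]/\mathbb{E}_P(\delta(\boldsymbol{X})\mid V)$. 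After this rewriting the $P$-dependence inside $g_\theta$ runs only through $\Psi_P = \gamma/\delta$ (for $\xi,\zeta$), through $\gamma$ and $\delta$ directly (for $\chi$), and through the two conditional denominators. For the first group the needed pathwise derivatives are precisely the ``debiasing'' terms already extracted in Lemma~\ref{lemma:VM-marginal} and Corollary~\ref{cor:IF-marginal}: by the standard score projections for conditional means and probabilities, for any $P$-free $f(\boldsymbol{X})$,
\[
\partial_t\mathbb{E}_P[f(\boldsymbol{X})\Psi_{P_t}(\boldsymbol{X})]\big|_{t=0} = \mathbb{E}_P\!\left[f(\boldsymbol{X})\frac{2Z-1}{\delta(\boldsymbol{X})\pi_Z(\boldsymbol{X})}\{Y - \mu_Z(\boldsymbol{X}) - \Psi_P(\boldsymbol{X})(A-\lambda_Z(\boldsymbol{X}))\}\,s(O)\right],
\]
and analogously $\partial_t\mathbb{E}_P[f(\boldsymbol{X})\gamma_{P_t}(\boldsymbol{X})]|_{t=0}$ has representer $f(\boldsymbol{X})\{\tfrac{2Z-1}{\pi_Z(\boldsymbol{X})}(Y-\mu_Z(\boldsymbol{X})) + \gamma(\boldsymbol{X})\} - \mathbb{E}_P[f\gamma]$, with the $A,\lambda_Z,\delta$ analogue for $\delta$. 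The conditional denominators I would handle by the quotient rule for pathwise derivatives together with the elementary representers of $P[A=1\mid V]$ (namely $A - P[A=1\mid V]$, up to the $V$-marginal Radon--Nikodym factor) and of $\mathbb{E}_P(\delta(\boldsymbol{X})\mid V)$. Assembling all representers into (iii), adding (ii), and simplifying --- recombining the $P[A=1\mid V]$-denominator correction for $\zeta$ into the term $A\{\Psi_P(\boldsymbol{X}) - \zeta_P(V)\}$, and the $\mathbb{E}_P(\delta\mid V)$-denominator correction for $\chi$ so that $m_\chi$ becomes $\chi_P(V)$ inside the debiasing bracket --- should reproduce exactly the three displayed formulae.

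I expect the main obstacle to be the bookkeeping in (iii) for $\zeta$ and $\chi$: since their targets are ratios whose numerators and denominators both depend on $P$ (and, for the numerators, also on $\mu_z,\lambda_z$), the quotient-rule terms and the several debiasing pieces will first appear in a scattered form that must be carefully telescoped into the compact centered-given-$V$ expressions claimed. As a sanity check along the way I would verify directly, using the same iterated-expectation identities, that each proposed $\dot{\boldsymbol{\beta}}_\theta(O;P)$ has mean zero under $P$; since a pathwise-differentiable parameter in a nonparametric model has a unique efficient influence function, this --- together with the submodel computation --- confirms that the displayed functions are the EIFs up to the stated proportionality.
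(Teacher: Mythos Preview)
Your proposal is correct and follows essentially the same approach as the paper: differentiate the normal equation for $\boldsymbol{\beta}_\theta(P)$ along a regular submodel, split the total derivative into the $\boldsymbol{\beta}$-slot (giving the proportionality matrix), the outer-measure slot, and the nuisance slot, and then compute the pathwise derivative of $\theta_{P_t}(V)$ by reducing to the known debiasing representers of $\Psi_P$, $\gamma$, $\delta$ together with conditional-score/tower-law manipulations. The only minor difference is that the paper carries out the $\boldsymbol{\beta}$-slot calculation for a general working model, obtaining the full matrix $V_\theta = \mathbb{E}_P[w(V)\{\Lambda_\theta(\theta_P - m_\theta) - M_\theta^{\otimes 2}\}]$ rather than just your $H_\theta$ for the linear case, but since the statement is only up to proportionality this does not affect the result.
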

To estimate $\boldsymbol{\beta}_{\theta}(P)$, for $\theta = \xi, \zeta, \chi$, we will construct $\widehat{\boldsymbol{\beta}}_{\theta}$ by solving influence function-based estimating equations. Specifically, for fixed $\overline{P}$, let $\overline{\eta}_{\theta}$ collect the nuisance functions found in $\dot{\boldsymbol{\beta}}_{\theta}(O; \overline{P})$, i.e., 
\[\overline{\eta}_{\xi} \equiv (\overline{\pi}_1, \overline{\lambda}_0, \overline{\lambda}_1, \overline{\mu}_0, \overline{\mu}_1), \,\overline{\eta}_{\zeta} = (\overline{\eta}_\xi, \overline{P}[A = 1\mid V], \overline{\zeta}(V)), \, \overline{\eta}_{\chi} = (\overline{\eta}_\xi, \overline{E}(\delta \mid V),  \overline{\chi}(V)),\]
and define $\widehat{\boldsymbol{\beta}}_{\theta}$ such that
\[\widehat{\boldsymbol{\beta}}_{\theta} \text{ solves }
    \mathbb{P}_n\left[\phi_{\theta}(O; \boldsymbol{\beta},
      \widehat{\eta}_{\theta})\right] = o_P(n^{-1/2}),\] where
  $\phi_{\theta}(O; \boldsymbol{\beta}; \overline{\eta}_{\theta})$ is the scaled influence function
  $\dot{\boldsymbol{\beta}}_{\theta}(O; \overline{P})$ written out in Proposition \ref{prop:IF-projection},
  replacing any appearances of
  $\boldsymbol{\beta}_{\theta}(\overline{P})$ with
  $\boldsymbol{\beta}$. Given that we are solving for $\boldsymbol{\beta}$ that sets the empirical mean of the estimating functions to zero (up to $o_P(n^{-1/2})$ error), we are justified in ignoring proportionality constants in Proposition~\ref{prop:IF-projection}.

  Now, we characterize the asymptotic behavior of the proposed estimators, giving the analog of Theorem~\ref{thm:est-marginal} for the estimators of the parameters of interest in this section. 

  \begin{theorem}\label{thm:est-projection}
    For $\theta \in \{\xi, \zeta, \chi\}$, suppose the following hold:
    \begin{enumerate}[label=(\arabic*)]
    \item The function class
      $\left\{\phi_{\theta}(o; \boldsymbol{\beta}, \overline{\eta}_{\theta}):
        \boldsymbol{\beta} \in \mathbb{R}^q\right\}$ is Donsker (in
      $\boldsymbol{\beta}$) for each fixed $\overline{\eta}_{\theta}$,
    \item
      $\widehat{\boldsymbol{\beta}}_{\theta} - \boldsymbol{\beta}_{\theta}(P) = o_P(1)$
      and
      $\left\lVert \widehat{\eta}_{\theta} - \eta_{\theta}\right\rVert
      = o_P(1)$, with $\left\lVert \cdot \right\rVert$ interpreted component-wise,
    \item The map
      $\boldsymbol{\beta} \mapsto \mathbb{E}_P\left(\phi_{\theta}(O;
        \boldsymbol{\beta}, \overline{\eta}_{\theta})\right)$ is differentiable
      at $\boldsymbol{\beta}_{\theta}(P)$, uniformly in
      $\overline{\eta}_{\theta}$, with invertible derivative matrix
      $V_{\theta}(\boldsymbol{\beta}_{\theta}(P), \overline{\eta}_{\theta}) =
      \left. \nabla_{\boldsymbol{\beta}}\mathbb{E}_P\left(\phi_{\theta}(O;
          \boldsymbol{\beta}, \overline{\eta}_{\theta})\right)
      \right|_{\boldsymbol{\beta} = \boldsymbol{\beta}_{\theta}(P)}$,
      such that
      \[V_{\theta}(\boldsymbol{\beta}_{\theta}(P), \widehat{\eta}_{\theta})
      \overset{P}{\to} V_{\theta}(\boldsymbol{\beta}_{\theta}(P), \eta_{\theta}).\]
    \end{enumerate}
    Then
    \[\widehat{\boldsymbol{\beta}}_{\theta} -
      \boldsymbol{\beta}_{\theta}(P) =
      -V_{\theta}(\boldsymbol{\beta}_{\theta}(P), \eta_{\theta})^{-1}
      \mathbb{P}_n\left(\phi_{\theta}(O;
        \boldsymbol{\beta}_{\theta}(P), \eta_{\theta})\right) + O_P(R_{n, \theta})
      + o_P\left(n^{-1/2}\right),\] where
    $R_{n, \theta} = \mathbb{E}_P\left(\phi_{\theta}(O;
      \boldsymbol{\beta}_{\theta}(P), \widehat{\eta}_{\theta}) - \phi_{\theta}(O;
      \boldsymbol{\beta}_{\theta}(P), \eta_{\theta})\right)$ are given by
    \begin{align*}
      R_{n, \xi}
      &= \mathbb{E}_P\bigg(M_{\xi}(V; \boldsymbol{\beta}_{\xi}(P))w(V)
        \frac{1}{\widehat{\delta}}
        \bigg\{\left(\frac{\widehat{\gamma}}{\widehat{\delta}}-
        \frac{\gamma}{\delta}\right)
        \left(\widehat{\delta} - \delta\right) \\
      & \quad \quad + \left(\widehat{\pi}_1 - \pi_1\right)
        \bigg(\frac{\widehat{\mu}_1 - \mu_1 -
        \frac{\widehat{\gamma}}{\widehat{\delta}}(\widehat{\lambda}_1 -
        \lambda_1)}{\widehat{\pi}_1} + \frac{\widehat{\mu}_0 - \mu_0 -
        \frac{\widehat{\gamma}}{\widehat{\delta}}(\widehat{\lambda}_0 -
        \lambda_0)}{1 - \widehat{\pi}_1}\bigg)
        \bigg\}\bigg),
    \end{align*}
    \begin{align*}
      R_{n, \zeta}
      &= \mathbb{E}_P\bigg(M_{\zeta}(V; \boldsymbol{\beta}_{\zeta}(P))w(V)
        \frac{1}{\widehat{P}[A = 1 \mid V]}
        \bigg\{\left(\frac{\widehat{\gamma}}{\widehat{\delta}} -
        \frac{\gamma}{\delta}\right)
        \left((\rho - \widehat{\rho}) + \frac{\widehat{\rho}}{\widehat{\delta}}(\widehat{\delta} - \delta)\right) \\
        & \quad \quad +\left(\widehat{\zeta}(V) - \zeta_P(V)\right)\left(\widehat{P}[A = 1 \mid V] - P[A = 1 \mid V]\right) \\
      & \quad \quad + \frac{\widehat{\rho}}{\widehat{\delta}}\left(\widehat{\pi}_1 - \pi_1\right)
        \bigg(\frac{\widehat{\mu}_1 - \mu_1 -
        \frac{\widehat{\gamma}}{\widehat{\delta}}(\widehat{\lambda}_1 -
        \lambda_1)}{\widehat{\pi}_1} + \frac{\widehat{\mu}_0 - \mu_0 -
        \frac{\widehat{\gamma}}{\widehat{\delta}}(\widehat{\lambda}_0 -
        \lambda_0)}{1 - \widehat{\pi}_1}\bigg)
        \bigg\}\bigg),
    \end{align*}
        \begin{align*}
      R_{n, \chi}
      &= \mathbb{E}_P\bigg(M_{\chi}(V; \boldsymbol{\beta}_{\chi}(P))w(V)
        \frac{1}{\widehat{\mathbb{E}}(\delta \mid V)}
        \bigg\{\left(\widehat{\chi}(V) - \chi_P(V)\right)
        \left(\widehat{\mathbb{E}}(\delta \mid V) -
        \mathbb{E}_P(\delta \mid V)
        \right) \\
      & \quad \quad + \left(\widehat{\pi}_1 - \pi_1\right)
        \bigg(\frac{\widehat{\mu}_1 - \mu_1 -
        \widehat{\chi}(V)(\widehat{\lambda}_1 -
        \lambda_1)}{\widehat{\pi}_1} + \frac{\widehat{\mu}_0 - \mu_0 -
        \widehat{\chi}(V)(\widehat{\lambda}_0 -
        \lambda_0)}{1 - \widehat{\pi}_1}\bigg)
        \bigg\}\bigg).
    \end{align*}
  \end{theorem}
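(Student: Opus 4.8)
The claim is an asymptotic linearization of an (approximate) $Z$-estimator whose estimating function $\phi_\theta$ carries estimated nuisances, so the plan is to run the standard $Z$-estimator expansion (as in, e.g., \citet{vandervaart2000}, Ch.~5), augmented to allow $\sqrt n$-order drift from nuisance estimation, and then to make the drift term $R_{n,\theta}$ explicit by exploiting the mixed-bias structure of the influence functions in Proposition~\ref{prop:IF-projection}. The key preliminary fact is that $\boldsymbol{\beta}_\theta(P)$ solves the population estimating equation $\boldsymbol{\beta}\mapsto\mathbb{E}_P[\phi_\theta(O;\boldsymbol{\beta},\eta_\theta)]=0$ at $\boldsymbol{\beta}=\boldsymbol{\beta}_\theta(P)$: indeed $\phi_\theta(O;\boldsymbol{\beta}_\theta(P),\eta_\theta)\propto\dot{\boldsymbol{\beta}}_\theta(O;P)$, which is mean-zero as an influence function, and this is exactly the first-order condition of the least-squares problem~\eqref{eq:projection} (at the true nuisances, all the ``correction'' pieces in Proposition~\ref{prop:IF-projection} have conditional mean zero given $(\boldsymbol{X},Z)$ or given $V$, using $\mathbb{E}_P(Y\mid\boldsymbol{X},Z)=\mu_Z$, $\mathbb{E}_P(A\mid\boldsymbol{X},Z)=\lambda_Z$, and the defining relations $\xi_P(V)=\mathbb{E}_P(\Psi_P\mid V)$, $\zeta_P(V)=\mathbb{E}_P(\Psi_P\mid V,A=1)$, $\chi_P(V)=\mathbb{E}_P(\gamma\mid V)/\mathbb{E}_P(\delta\mid V)$).

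\textbf{Step 1: the $Z$-estimator expansion.} Starting from $\mathbb{P}_n[\phi_\theta(O;\widehat{\boldsymbol{\beta}}_\theta,\widehat{\eta}_\theta)]=o_P(n^{-1/2})$, I would use the decomposition
\begin{align*}
o_P(n^{-1/2}) &= (\mathbb{P}_n-P)\big[\phi_\theta(O;\widehat{\boldsymbol{\beta}}_\theta,\widehat{\eta}_\theta)-\phi_\theta(O;\boldsymbol{\beta}_\theta(P),\eta_\theta)\big] + \mathbb{P}_n\big[\phi_\theta(O;\boldsymbol{\beta}_\theta(P),\eta_\theta)\big] \\
&\quad + P\big[\phi_\theta(O;\widehat{\boldsymbol{\beta}}_\theta,\widehat{\eta}_\theta)-\phi_\theta(O;\boldsymbol{\beta}_\theta(P),\widehat{\eta}_\theta)\big] + P\big[\phi_\theta(O;\boldsymbol{\beta}_\theta(P),\widehat{\eta}_\theta)-\phi_\theta(O;\boldsymbol{\beta}_\theta(P),\eta_\theta)\big].
\end{align*}
The last line is precisely $R_{n,\theta}$; the second term is the leading stochastic term $\mathbb{P}_n(\phi_\theta(O;\boldsymbol{\beta}_\theta(P),\eta_\theta))$. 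The first (empirical-process) term is $o_P(n^{-1/2})$: conditioning on the independent nuisance-training sample (so $\widehat{\eta}_\theta$ is fixed), assumption~(1) gives asymptotic equicontinuity of the class, while $\lVert\phi_\theta(O;\widehat{\boldsymbol{\beta}}_\theta,\widehat{\eta}_\theta)-\phi_\theta(O;\boldsymbol{\beta}_\theta(P),\widehat{\eta}_\theta)\rVert=o_P(1)$ by~(2) and continuity of $\phi_\theta$ in $\boldsymbol{\beta}$, and $\lVert\phi_\theta(O;\boldsymbol{\beta}_\theta(P),\widehat{\eta}_\theta)-\phi_\theta(O;\boldsymbol{\beta}_\theta(P),\eta_\theta)\rVert=o_P(1)$ by~(2); standard arguments (cf.\ \citet{kennedy2022}) then deliver the bound. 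By the differentiability in assumption~(3) and a Taylor expansion, the third term equals $V_\theta(\boldsymbol{\beta}_\theta(P),\widehat{\eta}_\theta)(\widehat{\boldsymbol{\beta}}_\theta-\boldsymbol{\beta}_\theta(P))+o_P(\lVert\widehat{\boldsymbol{\beta}}_\theta-\boldsymbol{\beta}_\theta(P)\rVert)$ with $V_\theta(\boldsymbol{\beta}_\theta(P),\widehat{\eta}_\theta)\overset{P}{\to}V_\theta(\boldsymbol{\beta}_\theta(P),\eta_\theta)$ invertible. Solving for $\widehat{\boldsymbol{\beta}}_\theta-\boldsymbol{\beta}_\theta(P)$---first to get the crude rate $O_P(n^{-1/2}+R_{n,\theta})$, then substituting back to absorb the $o_P(\lVert\cdot\rVert)$ term---yields the displayed expansion with leading term $-V_\theta(\boldsymbol{\beta}_\theta(P),\eta_\theta)^{-1}\mathbb{P}_n(\phi_\theta(O;\boldsymbol{\beta}_\theta(P),\eta_\theta))$.

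\textbf{Step 2: the explicit remainders.} For each $\theta$ I would compute $R_{n,\theta}=P[\phi_\theta(O;\boldsymbol{\beta}_\theta(P),\widehat{\eta}_\theta)-\phi_\theta(O;\boldsymbol{\beta}_\theta(P),\eta_\theta)]$ by iterated expectations, in essential parallel to the computation of $R_\xi,R_\zeta,R_\chi$ in Lemma~\ref{lemma:VM-marginal}. The nuisance-free terms $m_\theta(V;\boldsymbol{\beta}_\theta(P))$ cancel. Conditioning on $(\boldsymbol{X},Z)$, the ``$\tfrac{2Z-1}{\pi_Z}(\cdots)$''-type corrections vanish at the true nuisances and, at the estimated ones, leave contributions $\propto\tfrac{\pi_z}{\widehat{\pi}_z}(\widehat{\mu}_z-\mu_z-\widehat{\Psi}(\widehat{\lambda}_z-\lambda_z))$; writing $\tfrac{\pi_z}{\widehat{\pi}_z}=1+\tfrac{\pi_z-\widehat{\pi}_z}{\widehat{\pi}_z}$ and summing over $z$, the ``$1$'' parts recombine using $\widehat{\gamma}-\gamma=(\widehat{\mu}_1-\mu_1)-(\widehat{\mu}_0-\mu_0)$ (and the analogous identity for $\widehat{\delta}-\delta$) together with the defining relations listed above, so that the net first-order contribution cancels the ``$\Psi_P$''/``$\gamma-\chi\delta$'' part up to the error-product $(\widehat{\gamma}/\widehat{\delta}-\gamma/\delta)(\widehat{\delta}-\delta)$, leaving only the stated second-order products; one final iterated expectation pulls the $V$-measurable factors $M_\theta(V;\boldsymbol{\beta}_\theta(P))w(V)$ and $1/\widehat{(\cdot)}$ outside. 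For $\zeta$ and $\chi$ the auxiliary $V$-measurable nuisances $\widehat{P}[A=1\mid V],\widehat{\zeta}(V)$ and $\widehat{\mathbb{E}}(\delta\mid V),\widehat{\chi}(V)$ generate the extra cross terms $(\widehat{\zeta}(V)-\zeta_P(V))(\widehat{P}[A=1\mid V]-P[A=1\mid V])$ and $(\widehat{\chi}(V)-\chi_P(V))(\widehat{\mathbb{E}}(\delta\mid V)-\mathbb{E}_P(\delta\mid V))$, exactly as the terms $(\zeta(\overline{P})-\zeta(P))(1-P[A=1]/\overline{P}[A=1])$ and $(\chi(\overline{P})-\chi(P))(1-\Delta(P)/\Delta(\overline{P}))$ arise in Lemma~\ref{lemma:VM-marginal}. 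In short, $R_{n,\theta}$ is the marginal remainder of Lemma~\ref{lemma:VM-marginal} with the factor $M_\theta(V;\boldsymbol{\beta}_\theta(P))w(V)$ inserted and the relevant expectations localized to conditionals given $V$.

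\textbf{Main obstacle.} The empirical-process and $Z$-estimator machinery of Step~1 is routine. The real work---and the step I expect to be most delicate---is the bookkeeping in Step~2: checking that, across all three functionals, every term linear in a single nuisance error cancels by iterated expectations and only the listed products survive, and, for $\zeta$ and $\chi$ (where $\zeta_P(V)$, $\chi_P(V)$ and the conditional denominators $P[A=1\mid V]$, $\mathbb{E}_P(\delta\mid V)$ are themselves estimated), carefully tracking the additional cross terms these generate. This is a careful repetition, with the extra multiplicative factor $M_\theta(V;\boldsymbol{\beta}_\theta(P))w(V)$, of the algebra already done for Lemma~\ref{lemma:VM-marginal}, and---as in Theorem~\ref{thm:est-marginal}---it is what pins down the precise forms of $R_{n,\xi},R_{n,\zeta},R_{n,\chi}$.
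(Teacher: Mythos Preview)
Your proposal is correct and matches the paper's approach essentially line for line: the paper's proof simply invokes a ``master lemma'' (Lemma~3 of \citet{kennedy2023}) for the $Z$-estimator expansion you spell out in Step~1, and then states that the explicit remainder formulas follow the same iterated-expectation algebra as in Lemma~\ref{lemma:VM-marginal}, which is exactly your Step~2. The only difference is level of detail---you unpack the Donsker/Taylor/sample-splitting argument that the cited lemma encapsulates.
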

In view of Theorem \ref{thm:est-projection}, as for the marginal parameters considered in Section~\ref{sec:est-marginal}, the proposed estimators converge (under weak conditions) at rates proportional to the product of nuisance estimation errors. The three conditions used in the result deserve some discussion. The first condition requires that the influence function is a relatively simple function of $\boldsymbol{\beta}$, though note that we do not impose any restriction on the complexity with respect to $\overline{\eta}_{\theta}$. This will hold, for instance, if the working model is a generalized linear model. The second condition states that $(\widehat{\boldsymbol{\beta}}, \widehat{\eta}_{\theta})$ must be consistent at any rate of convergence, which we also view as a relatively weak requirement. Finally, the third condition calls for a degree of smoothness in the influence function with respect to $\boldsymbol{\beta}$, in order for a delta method-type argument to be invoked.

Similar to the case of the marginal parameters, Theorem~\ref{thm:est-projection} shows that when the second order remainder terms converge to zero fast enough (i.e., when $R_{n, \xi}, R_{n, \zeta}, R_{n, \chi} = o_P(n^{-1/2})$), the estimator is $\sqrt{n}$-consistent, asymptotically normal, and attains the nonparametric efficiency bound. Further, in this case, simple Wald confidence intervals are asymptotically valid. To be concrete, note that
\small
    \[V_{\theta}(\boldsymbol{\beta}_{\theta}(P), \eta_{\theta}) =
      \mathbb{E}_P\left(w(V)\left\{\Lambda_{\theta}(V;
          \boldsymbol{\beta}_{\theta}(P))\left[\theta_P(V) - m_{
              \theta}(V; \boldsymbol{\beta}_{\theta}(P))\right] -
          M_{\theta}^{\otimes 2}(V;
          \boldsymbol{\beta}_{\theta}(P))\right\}\right),\]
\normalsize
where
  $\Lambda_{\theta}(V; \boldsymbol{\beta}) =
  \nabla_{\boldsymbol{\beta}}^2 m_{ \theta}(V;
  \boldsymbol{\beta})$ and $\boldsymbol{u}^{\otimes 2} = \boldsymbol{u} \boldsymbol{u}^T$ for any $\boldsymbol{u} \in \mathbb{R}^q$. Thus, defining
    \[\widehat{V}_{\theta} =
      \mathbb{P}_n\left(w(V)\left\{\Lambda_{\theta}(V;
          \widehat{\boldsymbol{\beta}}_{\theta})\left[\widehat{\theta}(V)
            - m_{ \theta}(V; \widehat{\boldsymbol{\beta}}_{\theta})\right] -
          M_{\theta}^{\otimes 2}(V;
          \widehat{\boldsymbol{\beta}}_{\theta})\right\}\right),\]
a $(1 - \alpha)$-level confidence interval is given by $\widehat{\boldsymbol{\beta}}_{\theta} \pm z_{1 - \alpha / 2} \widehat{V}_{\theta}^{-1}\sqrt{ \mathbb{P}_n\left(\phi_{\theta}(O; \widehat{\boldsymbol{\beta}}_{\theta}, \widehat{\eta}_{\theta})^2\right) / n}$.

\section{Applications}

Next, we present two different IV applications. The first application is from the political science literature, and the IV arises due to noncompliance in a randomized experiment. The second application is from the comparative effectiveness research (CER) literature. In this application, the IV is naturally occurring. For each application, we highlight relevant differences between the possible marginal IV estimands.

\subsection{Markets}
In experiments with non-compliance, randomized trial arm assignment is a natural instrumental variable. \citet{margalit2021markets} conducted a field experiment to test the effect of participation in financial markets (through investing) on political views. In particular, a sample of the English adult population in 2016 was randomly assigned to an ``asset treatment'' ($Z = 1$) in which participants made weekly investments in a portfolio with real financial assets, using either real or ``fantasy money'', or assets indexed to team performance in American baseball---or to a control condition ($Z = 0$) which involved no investment decisions. Some participants that were assigned to the investment arm did not actually invest, and thus crossed over to the control condition.  However, subjects assigned to the control condition were unable to access the treatment condition. As such, in this example, one-sided non-compliance (see Section~\ref{sec:monotonicity}) holds by design in this study; such that if $A$ is an indicator of active investment in line with the study protocol, $A = 1$ only if $Z = 1$.  The outcome ($Y$) in this study was a socioeconomic values score, computed on the basis of survey responses related to personal responsibility, economic fairness, inequality, and redistribution. Higher values on this score indicate more traditionally right-wing views. In total, $n = 2,223$ participants completed all pre- and post-treatment surveys, and were included in this analysis. The baseline covariates $\boldsymbol{X}$ used in all models included baseline socioeconomic value score, an indicator of having ever invested prior to the study, party voted for in 2015, income, age, gender, educational attainment, and a risk tolerance score.

For our analysis, we estimate the marginl LATE, ATE, and ATT estimands as laid out in Section~\ref{sec:est-marginal}, using ten-fold cross-fitting. Here, we fit each of the nuisance functions $\widehat{\lambda}_0, \widehat{\lambda}_1, \widehat{\mu}_0, \widehat{\mu}_1$, using an ensemble \texttt{SuperLearner} \citep{polley2019}, with generalized linear model, regression tree, random forest, and polynomial spline regression as the library of learners. On the other hand, we invoke randomization and use the marginal empirical distribution of $Z$ for $\widehat{\pi}_0, \widehat{\pi}_1$. The estimated LATE was 0.125 (95\% CI: -0.007, 0.257), the estimated ATE was 0.121 (95\% CI: -0.011, 0.252), and the estimated ATT was 0.122 (95\% CI: -0.010, 0.254). Assuming sufficient identification conditions hold to give each of these estimands a causal interpretation, our results indicate that there is not much difference between the effect of participation in markets on political views, in the whole population versus in the treated or among compliers. In this study, it is easiest to justify the assumptions underpinning the LATE (via Proposition~\ref{prop:LATE}) and the ATT (via Proposition~\ref{prop:one-sided}), since randomization guarantees unconfoundedness of the IV, and monotonicity in the form of one-sided non-compliance holds by design. The results for the ATE depend on the whether the measured covariates are able to capture effect modification, as required for any of the various identification results.

\subsection{Tendency to Operate}

In CER, researchers often use a physician's preference for a specific treatment as an IV \citep{brookhart2007preference,keeleegsiv2018}. For this type of IV, each physician preference serves as a ``nudge" toward a specific mode of care; where the analyst assumes that the physician's preference has no direct effect on patient outcomes. IVs of this type were first used for applications in pharmacoepidemiology \citep{brookhart2006evaluating}. More recently, IVs of this type have been used to study the effectiveness of operative care for a range of emergency conditions \citep{keeleegsiv2018,kaufman2023operative,takatsu2023drmliv,hutchings2022effectiveness}. Here, we re-analyze data from \citet{takatsu2023drmliv} who use an IV of this type to study the effectiveness of surgery for cholecystitis patients.

In this application, the outcome $Y$ is a binary indicator for an adverse event (a complication or prolonged length of stay). The treatment $A$ is binary, where $A=1$ indicates that a patient received operative care after their emergency admission, while $A = 0$ indicates that a patient received non-operative care.  The instrument $Z$ is an indicator that takes the value of $1$ when a surgeon has a high tendency to operate (TTO) and 0 otherwise. We measure TTO using the percentage of times a surgeon operates when presented with an emergency surgery condition, and record $Z=1$ when a surgeon has TTO value higher than the sample median. In other words, an IV value of 1 indicates that a patient has been assigned to a high TTO surgeon and thus are more likely to receive operative care. Finally, baseline covariates $\boldsymbol{X}$ include race, sex, age, insurance type, baseline frailty measures, comorbidities, and hospital. In this data, there are 123,420 patients with 78,337 assigned to a high TTO surgeon and 45,083 assigned to a low TTO surgeon. 

The difference between the various IV estimands is of particular interest in this study. Most studies of this type invoke some form of a monotonicity assumption~\citep{keeleegsiv2018,kaufman2023operative,takatsu2023drmliv,hutchings2022effectiveness}. In this context, the LATE represents the effect of surgery among those patients who receive surgery because they receive care from a high TTO surgeon. As such, the LATE provides information about the effect of surgery for those patients that are marginal for treatment. However, it seems unlikely that the LATE coincides with a more common target causal estimand like the ATE. Specifically, results in past analyses have found that the always-taker population tends to be comprised of the healthiest patients, while never-takers tend to be the most frail patients \citep{keeleegsiv2018,kaufman2023operative,takatsu2023drmliv}. Compliers tend to fall between aways-takers and never-takers in terms of health at baseline. The ATE, by contrast, measures the average difference in outcomes when all individuals in the study population are assigned to the surgery group versus when all individuals are assigned to the non-operative management group. Finally, the ATT measures the average difference in outcomes among those individuals in the population that actually had surgery. We might expect that both of these estimands may differ markedly from the LATE, which only represents the effect among compliers. That is, the effect of surgery on never-takers and always-takers may be quite different than it is for the compliers.

In our analysis, we use DRML IV methods to estimate the LATE, ATE, and ATT estimands, invoking any of the sufficient identifying assumptions laid out in Section~\ref{sec:identification}. We have a large sample size, so we only use two splits of the data for cross-validation. We use an ensemble of methods for the estimation of the nuisance functions. Specifically, we estimate nuisance function quantities using an ensemble \texttt{SuperLearner} \citep{polley2019} with four different learners: a generalized linear model, a generalized additive model, a random forest, and the lasso. First, we report results for the LATE estimand. Here, we find that surgery reduces the risk of an adverse event by an estimated 3.8\% (95\% CI: -0.031, -0.045) among compliers. Next, for the ATE estimand, we find that surgery reduces the risk of an adverse event by as estimated 4.9\% (95\% CI: -0.044, -0.055) over the population of cholecystitis patients.  As such, the estimated effect for the ATE estimand is notably larger in magnitude than that for the LATE, though the 95\% confidence intervals do narrowly overlap. Lastly, the estimated ATT corresponds to a reduction in risk of 5.6\% (95\% CI: -0.050, -0.061) among treated patients. Absent concerns about sampling variability, we find that the more local estimand tends to understate the effectiveness of operative treatment for cholecystitis patients. In this application, given that the IV is naturally occurring, none of the assumptions are satisfied by design. \citet{keeleegsiv2018} and \citet{takatsu2023drmliv} provide detailed information about the credibility of the assumptions for TTO as an IV. Again, the results for the ATE depend on the whether the measured covariates are rich enough to be the full set of effect modifiers.

\section{Discussion}

The focus on the concept of the \textit{estimand} is a foundational component of modern causal analysis. Often in the literature, it is noted that the choice of estimand should be based on a scientific judgement and should reflect the specific research question. However, it is often the case that there is some interplay between the estimand and the plausibility of the identification assumptions. Nowhere is this more true than in an IV analysis. This may account for the popularity of the LATE estimand. That is, in many applications, the monotonicity assumption is highly plausible---the likelihood of defiers being present is unlikely, or may even be guaranteed by the study design. 

One goal of this paper is to fully outline that both the ATE (or CATE) and ATT (or CATT) estimands can also be targeted in an IV analysis. In our reading of the literature, these two estimands are rarely the target of interest in applied IV studies. Here, we reviewed the panoply of assumptions that can be invoked to target these two estimands. In the decades of IV methodology research, the assumptions needed to identify the CATE and ATE have evolved considerably. Originally, targeting the ATE required invoking often implausibly simple structural equations. Recent work has outlined much weaker assumptions that focus on conditioning on key effect modifiers. In addition, we also outlined a robust and efficient estimation framework for targeting all these different estimands in real data. Here, we focused on modern methods based on nonparametric efficiency theory and influence functions. These methods are doubly-robust and allow analysts to use a variety of flexible estimation methods to reduce the likelihood of bias from model misspecification. We demonstrated these estimation methods and compared these estimands in two very different empirical applications. This should serve as a useful reference to analysts, so they can better use IV methods in applied research.

\section*{Acknowledgements}
AWL would like to thank Professor Andrea Rotnitzky for her valuable comments on the manuscript. Professor Rotnitzky's course notes on instrumental variables strongly influenced our review of identification schemes.

\section*{Funding}
Research in this article was supported by the Patient-Centered Outcomes Research Institute (PCORI Awards ME-2021C1-22355) and by the National Library of Medicine, \#1R01LM013361-01A1. All statements in this report, including its findings and conclusions, are solely those of the authors and do not necessarily represent the views of PCORI or its Methodology Committee.

\clearpage

\section*{References}
\vspace{-1cm}
\bibliographystyle{asa}
\bibliography{bibliography.bib}

\pagebreak

\begin{appendices}

\section{Proofs of Results in Section~\ref{sec:identification}}

\begin{proof}[Proof of Proposition~\ref{prop:homogeneity-ident}]

    Letting $a \in \mathcal{A}$ be arbitrary, \eqref{eq:strict-homogeneity} implies $Y - Y(a) = r(\boldsymbol{X})(A - a)$, so that $Y(a) = Y - r(\boldsymbol{X})(A - a)$. Since $Z \independent Y(a) \mid \boldsymbol{X}$, 
    \begin{align*}
        0 = \mathrm{Cov}(Z, Y(a) \mid \boldsymbol{X}) &= \mathrm{Cov}(Z, Y - r(\boldsymbol{X})(A - a) \mid \boldsymbol{X})  \\
        &= \mathrm{Cov}_P(Z, Y \mid \boldsymbol{X}) - r(\boldsymbol{X})\mathrm{Cov}_P(Z, A \mid \boldsymbol{X}),
    \end{align*}
    and rearranging yields $r(\boldsymbol{X}) = \Psi_P(\boldsymbol{X})$. We may safely divide by $\mathrm{Cov}_P(Z, A \mid \boldsymbol{X})$ by Assumption~\ref{ass:relevance}.
\end{proof}

\begin{proof}[Proof of Lemma~\ref{lemma:binary}]
    For $W$ taking the place of either $A$ or $Y$,
    \begin{align*}
    & \mathrm{Cov}(Z, W \mid \boldsymbol{X}) \\
    &= \mathbb{E}(W\{Z - \mathbb{E}(Z \mid \boldsymbol{X}) \} \mid \boldsymbol{X}) \\
    &= \mathbb{E}_P(W \mid \boldsymbol{X}, Z = 1) \pi_1(\boldsymbol{X}) \pi_0(\boldsymbol{X}) + \mathbb{E}_P(W \mid \boldsymbol{X}, Z = 0) \pi_0(\boldsymbol{X})\left\{-\pi_1(\boldsymbol{X})\right\} \\
    &= \pi_1(\boldsymbol{X}) \pi_0(\boldsymbol{X}) \left\{\mathbb{E}_P(W \mid \boldsymbol{X}, Z = 1) - \mathbb{E}_P(W \mid \boldsymbol{X}, Z = 0)\right\}.
    \end{align*}
    The result then follows by cancellation from the definition of $\Psi_P(\boldsymbol{X})$.
\end{proof}

\begin{proof}[Proof of Proposition~\ref{prop:nonconstant-ident}]
    As in the proof of Proposition \ref{prop:homogeneity-ident}, with $a \in \mathcal{A}$ arbitrary, Assumption~\ref{ass:UC}(a) implies
    \begin{align*}
        0 = \mathrm{Cov}(Z, Y(a) \mid \boldsymbol{X}) &= \mathrm{Cov}(Z, Y - h(\boldsymbol{X}, \epsilon_Y)(A - a) \mid \boldsymbol{X})  \\
        &= \mathrm{Cov}_P(Z, Y \mid \boldsymbol{X}) - \mathrm{Cov}_P(Z, h(\boldsymbol{X}, \epsilon_Y)A \mid \boldsymbol{X}),
    \end{align*}
    but note that
    \begin{align*}
        \mathrm{Cov}_P(Z, h(\boldsymbol{X}, \epsilon_Y)A \mid \boldsymbol{X}) 
        &= \mathbb{E}(\{Z - \mathbb{E}_P(Z \mid \boldsymbol{X})\}h(\boldsymbol{X}, \epsilon_Y)A) \mid \boldsymbol{X}) \\
        &= \mathbb{E}(\{Z - \mathbb{E}_P(Z \mid \boldsymbol{X})\}\mathbb{E}(h(\boldsymbol{X}, \epsilon_Y) \mid \boldsymbol{X}, Z, A)A \mid \boldsymbol{X}) \\
        &= \mathbb{E}(h(\boldsymbol{X}, \epsilon_Y) \mid \boldsymbol{X})\mathrm{Cov}_P(Z, A \mid \boldsymbol{X}),
    \end{align*}
    where we conditioned on $(\boldsymbol{X}, Z, A)$ in the second equality, and used \eqref{eq:nonconstant-ass} in the third equality. Thus, the result is obtained by rearranging.
\end{proof}

\begin{proof}[Proof of Propositions~\ref{prop:wang-outcome-ident}, \ref{prop:wang-exposure-ident}, and \ref{prop:cui-cov-ident}]

Let $\tau(\boldsymbol{X}) = \mathbb{E}(Y(a = 1) - Y(a = 0) \mid \boldsymbol{X})$ be the CATE. It is sufficient to show that
  \begin{equation}\label{eq:ind}
    \mathbb{E}(Y - \tau(\boldsymbol{X}) A \mid \boldsymbol{X}, Z)
    \text{ does not depend on } Z.
  \end{equation}
  To see why, notice that \eqref{eq:ind} implies that
  \[\mathbb{E}(Y - \tau(\boldsymbol{X}) A \mid \boldsymbol{X}, Z) =
  \mathbb{E}\left\{\mathbb{E}(Y - \tau(\boldsymbol{X}) A \mid Z, \boldsymbol{X}) \mid
    \boldsymbol{X}\right\} = \mathbb{E}(Y - \tau(\boldsymbol{X}) A \mid \boldsymbol{X}),\] hence
  \begin{align*}
    \mathbb{E}(\{Z - \mathbb{E}_P(Z \mid  \boldsymbol{X})\}(Y - \tau(\boldsymbol{X}) A) \mid \boldsymbol{X})
    &= \mathbb{E}(\{Z -
      \mathbb{E}(Z \mid \boldsymbol{X})\}
      \mathbb{E}((Y - \tau(\boldsymbol{X}) A) \mid \boldsymbol{X}, Z) \mid \boldsymbol{X}) \\
    &=
      \mathbb{E}(Z - \mathbb{E}(Z \mid \boldsymbol{X}) \mid \boldsymbol{X}) \cdot
      \mathbb{E}(Y - \tau(\boldsymbol{X}) A \mid \boldsymbol{X}) = 0,
  \end{align*}
  meaning that
  $\mathrm{Cov}(Z, Y - \tau(\boldsymbol{X}) A \mid \boldsymbol{X}) = 0$. Using
  bilinearity of conditional covariance and solving for $\tau(\boldsymbol{X})$,
  this implies that
  $\tau(\boldsymbol{X}) = \frac{\mathrm{Cov}_P(Z, Y \mid \boldsymbol{X})}{\mathrm{Cov}_P(Z,
    A \mid \boldsymbol{X})} = \Psi_P(\boldsymbol{X})$.

\vspace{3mm}
 It remains to prove \eqref{eq:ind} under the conditions of Propositions~\ref{prop:wang-outcome-ident}, \ref{prop:wang-exposure-ident}, and \ref{prop:cui-cov-ident}. Since $A$ is binary, consistency yields $Y - \tau(\boldsymbol{X})A = A(Y(a = 1) - Y(a = 0) - \tau(\boldsymbol{X})) + Y(a = 0)$. Assumption~\ref{ass:UC}(a) implies that $\mathbb{E}(Y(a = 0) \mid \boldsymbol{X}, Z) = \mathbb{E}(Y(a = 0) \mid \boldsymbol{X})$; recall also that Assumption~\ref{ass:UC}(a) is implied by Assumptions~\ref{ass:UC}(c)\&(d). To establish~\eqref{eq:ind}, we now need only show that $\mathbb{E}(A\{Y(a = 1) - Y(a = 0) - \tau(\boldsymbol{X})\} \mid \boldsymbol{X}, Z)$ does not depend on $Z$. Observe that
 \begin{align*}
     &\mathbb{E}(A\{Y(a = 1) - Y(a = 0) - \tau(\boldsymbol{X})\} \mid \boldsymbol{X}, Z) \\
     &= \mathbb{E}\left[\mathbb{E}(A\{Y(a = 1) - Y(a = 0) - \tau(\boldsymbol{X})\} \mid \boldsymbol{X}, Z, U) \mid \boldsymbol{X}, Z\right] \\
     &= \mathbb{E}\left[\mathbb{E}\{A \mid \boldsymbol{X}, Z, U\} \mathbb{E}\{Y(a = 1) - Y(a = 0) - \tau(\boldsymbol{X}) \mid \boldsymbol{X}, Z, U \} \mid \boldsymbol{X}, Z\right] \\
     &= \mathbb{E}\left[\mathbb{E}\{A \mid \boldsymbol{X}, Z, U\} \mathbb{E}\{Y(a = 1) - Y(a = 0) - \tau(\boldsymbol{X}) \mid \boldsymbol{X}, U \} \mid \boldsymbol{X}, Z\right] \\
     &= \mathbb{E}\left[\left\{Z \widetilde{\delta}(\boldsymbol{X}, U) + \mathbb{E}(A \mid \boldsymbol{X}, Z = 0, U)\right\}\left\{\widetilde{\tau}(\boldsymbol{X}, U) - \tau(\boldsymbol{X})\right\} \mid \boldsymbol{X}, Z\right] \\
     &= Z \cdot \mathbb{E}\left[\widetilde{\delta}(\boldsymbol{X}, U)\left\{\widetilde{\tau}(\boldsymbol{X}, U) - \tau(\boldsymbol{X})\right\} \mid \boldsymbol{X}, Z\right] \\
     & \quad \quad + \mathbb{E}\left[\mathbb{E}(A \mid \boldsymbol{X}, Z = 0, U)\left\{\widetilde{\tau}(\boldsymbol{X}, U) - \tau(\boldsymbol{X})\right\} \mid \boldsymbol{X}, Z\right],
 \end{align*}
 where the first equality follows by the tower law, the second and third from Assumption~\ref{ass:UC}(c), the fourth by definition of $\widetilde{\delta}(\boldsymbol{X}, U)$ and $\widetilde{\tau}(\boldsymbol{X}, U)$, and the last by rearranging. We see that the result of Proposition~\ref{prop:wang-outcome-ident} is obtained immediately from the above expression, since~\eqref{eq:wang-outcome} states that $\widetilde{\tau}(\boldsymbol{X}, U) = \tau(\boldsymbol{X})$. For Propositions~\ref{prop:wang-exposure-ident} and~\ref{prop:cui-cov-ident}, we further assume $Z \independent U \mid \boldsymbol{X}$ (i.e., Assumption~\ref{ass:UC}(d)), which simplifies the above equality:
 \begin{align*}
      &\mathbb{E}(A\{Y(a = 1) - Y(a = 0) - \tau(\boldsymbol{X})\} \mid \boldsymbol{X}, Z) \\
      &= Z \cdot \mathrm{Cov}\left(\widetilde{\delta}(\boldsymbol{X}, U), \widetilde{\tau}(\boldsymbol{X}, U) \mid \boldsymbol{X}\right) + \mathbb{E}\left[\mathbb{E}(A \mid \boldsymbol{X}, Z = 0, U)\left\{\widetilde{\tau}(\boldsymbol{X}, U) - \tau(\boldsymbol{X})\right\} \mid \boldsymbol{X} \right].
 \end{align*}
 Clearly, when $\mathrm{Cov}\left(\widetilde{\delta}(\boldsymbol{X}, U), \widetilde{\tau}(\boldsymbol{X}, U) \mid \boldsymbol{X}\right) = 0$ (i.e., when equation~\eqref{eq:cui-cov} assumed in Proposition~\ref{prop:cui-cov-ident} holds), the right-hand side of the above display will not depend on $Z$. Moreover, this null conditional covariance is guaranteed by \eqref{eq:wang-exposure}, assumed in Proposition~\ref{prop:wang-exposure-ident}.
\end{proof}

\begin{proof}[Proof of Proposition~\ref{prop:SMM-ident}]
    Note that, $\mathbb{E}(Y - r(\boldsymbol{X})A \mid \boldsymbol{X}, Z, A) = \mathbb{E}(Y(a = 0) \mid \boldsymbol{X}, Z, A)$, by rearranging model \eqref{eq:SMM}. This implies that
    \[\mathbb{E}(Y - r(\boldsymbol{X})A \mid \boldsymbol{X}, Z) = \mathbb{E}(Y(a = 0) \mid \boldsymbol{X}, Z), \]
    integrating out $A$. But by Assumption~\ref{ass:UC}(a), $\mathbb{E}(Y(a = 0) \mid \boldsymbol{X}, Z) = \mathbb{E}(Y(a = 0) \mid \boldsymbol{X})$, which does not depend on $Z$, i.e., $\mathbb{E}(Y - r(\boldsymbol{X})A \mid \boldsymbol{X}, Z) = \mathbb{E}(Y - r(\boldsymbol{X})A \mid \boldsymbol{X})$. Thus, we have
    \begin{align*}
        \mathrm{Cov}(Z, Y - r(\boldsymbol{X})A \mid \boldsymbol{X}) 
        &=
        \mathbb{E}\left(\left\{Z-\mathbb{E}_P(Z \mid \boldsymbol{X})\right\} \left\{Y - r(\boldsymbol{X})A\right\} \mid \boldsymbol{X}\right) \\
        &=\mathbb{E}\left(\left\{Z-\mathbb{E}_P(Z \mid \boldsymbol{X})\right\} \mathbb{E}\left(Y - r(\boldsymbol{X})A \mid \boldsymbol{X}, Z\right) \mid \boldsymbol{X}\right) \\
        &= \mathbb{E}(Y - r(\boldsymbol{X})A \mid \boldsymbol{X})\mathbb{E}_P\left(Z-\mathbb{E}_P(Z \mid \boldsymbol{X})\mid \boldsymbol{X}\right) \\
        &= 0,
    \end{align*}
    by conditioning on $(\boldsymbol{X}, Z)$. Thus, rearranging, $r(\boldsymbol{X}) = \Psi_P(\boldsymbol{X})$.
\end{proof}

\begin{proof}[Proof of Proposition~\ref{prop:hernan}]
    It suffices, by Proposition~\ref{prop:SMM-ident}, to show that $r(\boldsymbol{X}) = s(\boldsymbol{X})$. To ease notation, let $D = Y(a = 1) - Y(a = 0)$, and note that
    \[r(\boldsymbol{X}) = \mathbb{E}(D \mid \boldsymbol{X}, Z = 0, A = 1) = \mathbb{E}(D \mid \boldsymbol{X}, Z = 1, A = 1),\]
    and similarly,
    \[s(\boldsymbol{X}) = \mathbb{E}(D \mid \boldsymbol{X}, Z = 0, A = 0) = \mathbb{E}(D \mid \boldsymbol{X}, Z = 1, A = 0).\]
    By Assumption~\ref{ass:UC}(a), we have
    \[\mathbb{E}(D \mid \boldsymbol{X}, Z = 0) = \mathbb{E}(D \mid \boldsymbol{X}, Z = 1).\]
    Thus, by the law of total probability,
    \begin{align*} 
    & P[A = 0 \mid \boldsymbol{X}, Z = 0] s(\boldsymbol{X}) + 
    P[A = 1 \mid \boldsymbol{X}, Z = 0] r(\boldsymbol{X}) \\
    =& P[A = 0 \mid \boldsymbol{X}, Z = 1] s(\boldsymbol{X}) + 
    P[A = 1 \mid \boldsymbol{X}, Z = 1] r(\boldsymbol{X}).
    \end{align*}
    Rearranging, we obtain,
    \[\{r(\boldsymbol{X}) - s(\boldsymbol{X})\}\{P[A = 1 \mid \boldsymbol{X}, Z = 1] - P[A = 1 \mid \boldsymbol{X}, Z = 0]\} = 0,\]
    which by Assumption~\ref{ass:relevance} yields $r(\boldsymbol{X}) = s(\boldsymbol{X})$.
\end{proof}

\begin{proof}[Proof of Proposition~\ref{prop:LATE}]
    Observe that
    \begin{align*}
        \mu_1(\boldsymbol{X}) - \mu_0(\boldsymbol{X})
        &= \mathbb{E}_P(Y(z = 1) - Y(z = 0) \mid \boldsymbol{X}) \\
        &= \mathbb{E}_P(Y(z = 1, A(z = 1)) - Y(z = 0, A(z = 0)) \mid \boldsymbol{X}) \\
        &= \mathbb{E}_P(Y(A(z = 1)) - Y(A(z = 0)) \mid \boldsymbol{X}) \\
        &= \mathbb{E}_P\left(\left\{Y(a = 1) - Y(a = 0)\right\} \left\{A(z = 1) - A(z = 0)\right\} \mid \boldsymbol{X}\right),
    \end{align*}
    where the first equality follows by unconfoundedness (Assumption~\ref{ass:UC}(b)) and positivity, the second by consistency, the third by exclusion restriction (Assumption~\ref{ass:ER}), and the fourth by consistency and algebra. By monotonicity, we know \[A(z = 1) - A(z = 0) = \mathds{1}(A(z = 1) > A(z = 0)) \in \{0,1\},\]
    so we have that
    \begin{align*}
        &\mu_1(\boldsymbol{X}) - \mu_0(\boldsymbol{X}) \\
        &= \mathbb{E}_P\left(Y(a = 1) - Y(a = 0) \mid A(z = 1) > A(z = 0), \boldsymbol{X} \right) P[A(z = 1) > A(z = 0) \mid \boldsymbol{X}] \\
        &= \mathbb{E}_P\left(Y(a = 1) - Y(a = 0) \mid A(z = 1) > A(z = 0), \boldsymbol{X} \right) \mathbb{E}\left(A(z = 1) - A(z = 0) \mid \boldsymbol{X}\right) \\
        &= \mathbb{E}_P\left(Y(a = 1) - Y(a = 0) \mid A(z = 1) > A(z = 0), \boldsymbol{X} \right) \left\{\lambda_1(\boldsymbol{X}) - \lambda_0(\boldsymbol{X})\right\},
    \end{align*}
    where the last line again follows by unconfoundedness and positivity. Rearranging, we conclude that the CLATE equals $\frac{\mu_1(\boldsymbol{X}) - \mu_0(\boldsymbol{X})}{\lambda_1(\boldsymbol{X}) - \lambda_0(\boldsymbol{X})} = \Psi_P(\boldsymbol{X})$, by Lemma~\ref{lemma:binary}.
\end{proof}

\begin{proof}[Proof of Proposition~\ref{prop:one-sided}]
    Observe that $\mu_0(\boldsymbol{X})$ satisfies
    \[\mathbb{E}_P(Y \mid \boldsymbol{X}, Z = 0) = \mathbb{E}(Y(z = 0) \mid \boldsymbol{X}) = \mathbb{E}(Y(A(z = 0)) \mid \boldsymbol{X}) = \mathbb{E}(Y(a = 0) \mid \boldsymbol{X}),\]
    where the first equality holds by consistency and unconfoundedness (Assumption~\ref{ass:UC}(b)), the second by consistency and exclusion restriction (Assumption~\ref{ass:ER}), and the third by one-sided non-compliance. Thus,
    \[\mathbb{E}_P(Y \mid \boldsymbol{X}) - \mu_0(\boldsymbol{X}) = \mathbb{E}(Y - Y(a = 0) \mid \boldsymbol{X}) = \mathbb{E}(Y - Y(a = 0) \mid \boldsymbol{X}, A = 1)P[A = 1 \mid \boldsymbol{X}],\]
    by consistency. Rearranging, using $\mathbb{E}_P(Y \mid \boldsymbol{X}) = \mu_0(\boldsymbol{X}) + \pi_1(\boldsymbol{X}) \{\mu_1(\boldsymbol{X}) - \mu_0(\boldsymbol{X})\}$, and $P[A = 1 \mid \boldsymbol{X}] = \lambda_0(\boldsymbol{X}) + \pi_1(\boldsymbol{X}) \{\lambda_1(\boldsymbol{X}) - \lambda_0(\boldsymbol{X})\}$, and using one-sided non-compliance (i.e., $\lambda_0(\boldsymbol{X}) \equiv 0$), we obtain
    \[\mathbb{E}(Y(a = 1) - Y(a = 0) \mid \boldsymbol{X}, A = 1) = \frac{\mu_1(\boldsymbol{X}) - \mu_0(\boldsymbol{X})}{\lambda_1(\boldsymbol{X}) - \lambda_0(\boldsymbol{X})} = \Psi_P(\boldsymbol{X}),\]
    as claimed.
\end{proof}

\begin{proof}[Proof of Proposition~\ref{prop:stochastic-ident}]
    Observe that, for $z \in \{0,1\}$,
    \begin{align*}
        \mathbb{E}_P(AY \mid \boldsymbol{X}, Z = z)
        &= \mathbb{E}(A Y(a = 1) \mid \boldsymbol{X}, Z = z) \\
        &= \mathbb{E}(\mathbb{E}(AY(a = 1) \mid \boldsymbol{X}, U, Z = z) \mid \boldsymbol{X}, Z = z) \\
        &= \mathbb{E}(P[A = 1 \mid \boldsymbol{X}, U, Z = z]\mathbb{E}(Y(a = 1) \mid \boldsymbol{X}, U, Z = z) \mid \boldsymbol{X}, Z = z) \\
        &= \mathbb{E}(P[A = 1 \mid \boldsymbol{X}, U, Z = z]\mathbb{E}(Y(a = 1) \mid \boldsymbol{X}, U) \mid \boldsymbol{X}) \\
        &= \mathbb{E}(P[A = 1 \mid \boldsymbol{X}, U, Z = z]Y(a = 1) \mid \boldsymbol{X}).
    \end{align*}
    In the first equality we use consistency, in the second we use the tower law, in the third we use $A \independent Y(a) \mid \boldsymbol{X}, U, Z$, in the fourth we use unconfoundedness of $Z$, and in the fifth we use the tower law. Subtracting the above equality for $z = 1$ and $z = 0$, we obtain
    \[\mathbb{E}_P(AY \mid \boldsymbol{X}, Z =1) - \mathbb{E}_P(AY \mid \boldsymbol{X}, Z =0) = \mathbb{E}(w(\boldsymbol{X}, U) Y(a = 1) \mid \boldsymbol{X}).\]
    In the same way, we can show $\lambda_1(\boldsymbol{X}) - \lambda_0(\boldsymbol{X}) = \mathbb{E}(w(\boldsymbol{X}, U) \mid \boldsymbol{X})$ and
    \[\mathbb{E}_P((1 - A)Y \mid \boldsymbol{X}, Z =1) - \mathbb{E}_P((1 - A)Y \mid \boldsymbol{X}, Z =0) = -\mathbb{E}(w(\boldsymbol{X}, U) Y(a = 0) \mid \boldsymbol{X}).\]
    Adding these equalities gives $\mu_1(\boldsymbol{X}) - \mu_0(\boldsymbol{X}) = \mathbb{E}(\{Y(a = 1) - Y(a = 0)\}w(\boldsymbol{X}, U) \mid \boldsymbol{X})$, and dividing by $\lambda_1(\boldsymbol{X}) - \lambda_0(\boldsymbol{X}) = \mathbb{E}(w(\boldsymbol{X}, U) \mid \boldsymbol{X})$ yields~\eqref{eq:stochastic-ident}.
\end{proof}

\section{Proofs of Results in Section~\ref{sec:est-marginal}}

\begin{proof}[Proof of Lemma~\ref{lemma:VM-marginal}]
For brevity, we omit input $\boldsymbol{X}$ in the various nuisance functions. Beginning with the ATE parameter, observe that
\begin{align*}
    &\xi(\overline{P}) - \xi(P) + \mathbb{E}_P\left(\dot{\xi}(O;\overline{P})\right)
    \\
    &= \mathbb{E}_P\left(\frac{2Z - 1}{\overline{\delta}
\overline{\pi}_Z}\left\{Y - \overline{\mu}_Z -
\overline{\Psi}(A - \overline{\lambda}_Z)\right\} +
\overline{\Psi} - \Psi_P\right) \\
&= \mathbb{E}_P\left(\frac{1}{\overline{\delta}} \left\{\frac{\pi_1}{\overline{\pi}_1}\left(\mu_1 - \overline{\mu}_1 - \frac{\overline{\gamma}}{\overline{\delta}}(\lambda_1 - \overline{\lambda}_1)\right) - \frac{1 - \pi_1}{1 - \overline{\pi}_1}\left(\mu_0 - \overline{\mu}_0 - \frac{\overline{\gamma}}{\overline{\delta}}(\lambda_0 - \overline{\lambda}_0)\right)\right\} + \frac{\overline{\gamma}}{\overline{\delta}} - \frac{\gamma}{\delta}\right),
\end{align*}
where we conditioned on $(\boldsymbol{X}, Z)$ in the second equality. Since 
\begin{equation}\label{eq:bias-calc}
    \frac{\overline{\gamma}}{\overline{\delta}} - \frac{\gamma}{\delta} = \frac{1}{\overline{\delta}}\left\{(\overline{\delta} - \delta)\left(\frac{\overline{\gamma}}{\overline{\delta}} - \frac{\gamma}{\delta}\right) +  (\overline{\gamma} - \gamma) - \frac{\overline{\gamma}}{\overline{\delta}}(\overline{\delta} - \delta)\right\},
\end{equation}
we can combine terms to conclude that 
\begin{align*}
    &\xi(\overline{P}) - \xi(P) + \mathbb{E}_P\left(\dot{\xi}(O;\overline{P})\right)
    \\
    &= \mathbb{E}_P\bigg(\frac{1}{\overline{\delta}} \bigg\{\left(1 - \frac{\pi_1}{\overline{\pi}_1}\right)\left(\overline{\mu}_1 - \mu_1 - \frac{\overline{\gamma}}{\overline{\delta}}(\overline{\lambda}_1 - \lambda_1)\right) - \left(1 - \frac{1 - \pi_1}{1 - \overline{\pi}_1}\right)\left(\overline{\mu}_0 - \mu_0 - \frac{\overline{\gamma}}{\overline{\delta}}(\overline{\lambda}_0 - \lambda_0)\right) \\
    & \quad \quad \quad \quad \quad + (\overline{\delta} - \delta)\left(\frac{\overline{\gamma}}{\overline{\delta}} - \frac{\gamma}{\delta}\right)\bigg\}\bigg) \\
    &= \mathbb{E}_P\bigg(\frac{1}{\overline{\delta}}\bigg\{
(\overline{\pi}_1 - \pi_1)\bigg\{\frac{\overline{\mu}_1 - \mu_1 -
\frac{\overline{\gamma}}{\overline{\delta}}(\overline{\lambda}_1 - \lambda_1)}
{\overline{\pi}_1} +
\frac{\overline{\mu}_0 - \mu_0 -
\frac{\overline{\gamma}}{\overline{\delta}}(\overline{\lambda}_0 - \lambda_0)}
{1 - \overline{\pi}_1}\bigg\} \\
& \quad \quad \quad \quad \quad + (\overline{\delta} - \delta)
\left(\frac{\overline{\gamma}}{\overline{\delta}} -
\frac{\gamma}{\delta}\right)
\bigg\}\bigg),
\end{align*}
as claimed. Next, for the ATT parameter, observe that
\begin{align*}
    &\zeta(\overline{P}) - \zeta(P) + \mathbb{E}_P\left(\dot{\zeta}(O;\overline{P})\right)
    \\
    &= \left(\zeta(\overline{P}) - \zeta(P)\right)\left(1 - \frac{P[A = 1]}{\overline{P}[A = 1]}\right) \\
    & \quad \quad + \mathbb{E}_P\bigg(\frac{\overline{\rho}}{\overline{P}[A = 1]}\cdot
\frac{2Z - 1}{\overline{\delta}\overline{\pi}_Z}\left\{Y -
\overline{\mu}_Z - \overline{\Psi}(A -
\overline{\lambda}_Z)\right\} + \frac{A}{\overline{P}[A =
1]}\left(\overline{\Psi} - \Psi_P\right)
    \bigg) \\
&= \left(\zeta(\overline{P}) - \zeta(P)\right)\left(1 - \frac{P[A = 1]}{\overline{P}[A = 1]}\right) \\
    & \quad \quad + \frac{1}{\overline{P}[A = 1]}\mathbb{E}_P\bigg(\frac{\overline{\rho}}{\overline{\delta}}\left\{
\frac{\pi_1}{\overline{\pi}_1}\left(\mu_1 -
\overline{\mu}_1 - \frac{\overline{\gamma}}{\overline{\delta}}(\lambda_1 -
\overline{\lambda}_1)\right) - \frac{1 - \pi_1}{1 - \overline{\pi}_1}\left(\mu_0 -
\overline{\mu}_0 - \frac{\overline{\gamma}}{\overline{\delta}}(\lambda_0 -
\overline{\lambda}_0)\right)\right\} \\
& \quad \quad \quad \quad \quad + \rho \left(\frac{\overline{\gamma}}{\overline{\delta}} - \frac{\gamma}{\delta}\right)
    \bigg)
\end{align*}
By~\eqref{eq:bias-calc}, we have
\[\rho\left(\frac{\overline{\gamma}}{\overline{\delta}} - \frac{\gamma}{\delta}\right) = (\rho - \overline{\rho})\left(\frac{\overline{\gamma}}{\overline{\delta}} - \frac{\gamma}{\delta}\right) + \frac{\overline{\rho}}{\overline{\delta}}\left\{(\overline{\delta} - \delta)\left(\frac{\overline{\gamma}}{\overline{\delta}} - \frac{\gamma}{\delta}\right) +  (\overline{\gamma} - \gamma) - \frac{\overline{\gamma}}{\overline{\delta}}(\overline{\delta} - \delta)\right\},\]
so combining terms we obtain
\begin{align*}
&\zeta(\overline{P}) - \zeta(P) + \mathbb{E}_P\left(\dot{\zeta}(O;\overline{P})\right)
    \\
    &= \frac{1}{\overline{P}[A=1]}\mathbb{E}_P\bigg(\frac{\overline{\rho}}{\overline{\delta}}\bigg\{
(\overline{\pi}_1 - \pi_1)\bigg\{\frac{\overline{\mu}_1 - \mu_1 -
\frac{\overline{\gamma}}{\overline{\delta}}(\overline{\lambda}_1 - \lambda_1)}
{\overline{\pi}_1} +
\frac{\overline{\mu}_0 - \mu_0 -
\frac{\overline{\gamma}}{\overline{\delta}}(\overline{\lambda}_0 - \lambda_0)}
{1 - \overline{\pi}_1}\bigg\}\bigg\} \\
& \quad \quad \quad \quad \quad + \bigg(\frac{\overline{\gamma}}{\overline{\delta}}
- \frac{\gamma}{\delta}\bigg)\bigg[(\rho - \overline{\rho})+
\frac{\overline{\rho}}{\overline{\delta}}(\overline{\delta} - \delta)\bigg] \bigg) \\
& \quad \quad + \left(\zeta(\overline{P}) - \zeta(P)\right)\left(1 - \frac{P[A = 1]}{\overline{P}[A = 1]}\right),
\end{align*}
as claimed. Finally, for the LATE/SIW-WATE parameter, observe that
\begin{align*}
    &\chi(\overline{P}) - \chi(P) + \mathbb{E}_P\left(\dot{\chi}(O;\overline{P})\right)
    \\
    &= \left(\chi(\overline{P}) - \chi(P)\right)\left(1 - \frac{\Delta(P)}{\Delta(\overline{P})}\right) + \frac{\chi(\overline{P})}{\Delta(\overline{P})}\mathbb{E}_P(\lambda_1 - \lambda_0) - \frac{1}{\Delta(\overline{P})}\mathbb{E}_P(\mu_1 - \mu_0)\\
    & \quad \quad + \mathbb{E}_P\bigg(\frac{2Z - 1}{\Delta(\overline{P})
\overline{\pi}_Z}\left\{Y - \overline{\mu}_Z - \chi(\overline{P})(A -
\overline{\lambda}_Z)\right\} +
\frac{1}{\Delta(\overline{P})}\left(\overline{\gamma} -
\chi(\overline{P})\overline{\delta}\right)\bigg) \\
&= \left(\chi(\overline{P}) - \chi(P)\right)\left(1 - \frac{\Delta(P)}{\Delta(\overline{P})}\right) + \frac{\chi(\overline{P})}{\Delta(\overline{P})}\mathbb{E}_P(\lambda_1 - \lambda_0) - \frac{1}{\Delta(\overline{P})}\mathbb{E}_P(\mu_1 - \mu_0)\\
    & \quad \quad + 
    \frac{1}{\Delta(\overline{P})}\mathbb{E}_P\bigg(\frac{\pi_1}{\overline{\pi}_1}\left\{\mu_1 - \overline{\mu}_1 - \chi(\overline{P})(\lambda_1 -
\overline{\lambda}_1)\right\} - \frac{1 - \pi_1}{1 - \overline{\pi}_1}\left\{\mu_0 - \overline{\mu}_0 - \chi(\overline{P})(\lambda_0 -
\overline{\lambda}_0)\right\}\\
& \quad \quad \quad \quad \quad \quad \quad \quad \quad + \overline{\gamma} -
\chi(\overline{P})\overline{\delta}\bigg) \\
&= \left(\chi(\overline{P}) - \chi(P)\right)\left(1 - \frac{\Delta(P)}{\Delta(\overline{P})}\right)
- \frac{\chi(\overline{P})}{\Delta(\overline{P})}
\mathbb{E}_P\left((\overline{\pi}_1 - \pi_1) \left\{\frac{\overline{\lambda}_1 - \lambda_1}{\overline{\pi}_1} + \frac{\overline{\lambda}_0 - \lambda_0}{1 - \overline{\pi}_1}\right\}\right) \\
& \quad \quad \quad \quad \quad + \frac{1}{\Delta(\overline{P})} \mathbb{E}_P\left((\overline{\pi}_1 - \pi_1) \left\{\frac{\overline{\mu}_1 - \mu_1}{\overline{\pi}_1} + \frac{\overline{\mu}_0 - \mu_0}{1 - \overline{\pi}_1}\right\}\right),
\end{align*}
as claimed.
\end{proof}

\begin{proof}[Proof of Corollary~\ref{cor:IF-marginal}]
This follows from Lemma \ref{lemma:VM-marginal} and a direct application of Lemma 2 in \citet{kennedy2023}
\end{proof}

\begin{proof}[Proof of Theorem~\ref{thm:est-marginal}]
    For brevity, we write $P(f) = \mathbb{E}_P(f(O))$ for the mean of any function $f$ under $P$. Starting with the ATE parameter $\xi(P)$, observe that, by the definition of the one-step estimator $\widehat{\xi} = \xi(\widehat{P}) + \mathbb{P}_n\left(\dot{\xi}(O; \widehat{P})\right)$, we can decompose error as follows:
    \[\widehat{\xi} - \xi(P) = \mathbb{P}_n\left(\dot{\xi}(O;P)\right) +(\mathbb{P}_n - P)\left(\dot{\xi}(O;\widehat{P}) - \dot{\xi}(O;P)\right) + R_{\xi}(\widehat{P}, P),\]
    as by definition, $R_{\xi}(\widehat{P}, P) = \xi(\widehat{P}) - \xi(P) + P(\dot{\xi}(O;\widehat{P}))$, and $P(\dot{\xi}(O;P)) = 0$. Since $\widehat{P}$ is trained on data independent of $(O_1, \ldots, O_n)$, and by the assumption that $\lVert \xi(\, \cdot \, ; \widehat{P}) - \xi(\, \cdot \, ; P)\rVert = o_P(1)$, it follows by Lemma 2 of~\citet{kennedy2020b} that $(\mathbb{P}_n - P)\left(\dot{\xi}(O;\widehat{P}) - \dot{\xi}(O;P)\right) = o_P(n^{-1/2})$, which proves the claim for $\widehat{\xi}$. The argument for the one-step estimator $\widehat{\zeta}$ is identical.
    
    For $\widehat{\chi}$, we follow the argument in Theorem 3 of~\citet{kennedy2020a}. Note that $\mathbb{E}_P(\phi_a(O;P)) = \Delta(P) \neq 0$ by assumption. Observe that
    \begin{align*}
        &\widehat{\chi} - \chi(P) \\
        &= \frac{\mathbb{P}_n(\phi_y(O; \widehat{P}))}{\mathbb{P}_n(\phi_a(O; \widehat{P}))} - \frac{P(\phi_y(O; P))}{P(\phi_a(O; P))} \\
        &= \frac{1}{\mathbb{P}_n(\phi_a(O; \widehat{P}))}\left\{\mathbb{P}_n(\phi_y(O; \widehat{P})) - P(\phi_y(O; P)) - \chi(P)\left(\mathbb{P}_n(\phi_a(O; \widehat{P})) - P(\phi_a(O; P))\right)\right\}.
    \end{align*}
    Following the calculations in the  proof of Lemma~\ref{lemma:VM-marginal}, we can decompose
    \begin{align*}
        &\mathbb{P}_n(\phi_b(O; \widehat{P})) - P(\phi_b(O; P))  \\
        &= (\mathbb{P}_n - P)\left(\phi_b(O;P)\right) +(\mathbb{P}_n - P)\left(\phi_b(O;\widehat{P}) - \phi_b(O;P)\right) + R_{b}(\widehat{P}, P),
    \end{align*}
    for $b \in \{a, y\}$. The second terms in these expansions are $o_P(n^{-1/2})$ by Lemma 2 of \citet{kennedy2020a}. Thus, noting that $\dot{\chi}(O; P) = \frac{1}{\Delta(P)}\left\{\phi_y(O; P) - \chi(P) \phi_a(O; P)\right\}$, and using the fact $\mathbb{P}_n(\phi_a(O; \widehat{P}))$ is bounded away from zero, it follows that
    \begin{align*}
        \widehat{\chi} - \chi(P)
        = \mathbb{P}_n\left(\dot{\chi}(O; P)\right) + O_P\left(R_y + R_a\right) + o_P\left(n^{-1/2}\right),
    \end{align*}
    since
    \begin{align*}
        &\left\{\frac{1}{\mathbb{P}_n(\phi_a(O; \widehat{P}))} - \frac{1}{\Delta(P)}\right\}(\mathbb{P}_n - P)\left(\phi_y(O;P) - \chi(P) \phi_a(O;P)\right) \\
        &= \frac{\Delta(P) - \mathbb{P}_n(\phi_a(O; \widehat{P}))}{\mathbb{P}_n(\phi_a(O; \widehat{P}))\Delta(P)} \cdot O_P(n^{-1/2}) = o_P(1)O_P(n^{-1/2}) = o_P(n^{-1/2}),
    \end{align*}
    where the first equality follows by the central limit theorem, and the second equality as $\mathbb{P}_n(\phi_a(O; \widehat{P}))$ is bounded away from zero and
    \begin{align*}
        &\mathbb{P}_n(\phi_a(O; \widehat{P})) - P(\phi_a(O;P)) \\
        & = (\mathbb{P}_n - P)(\phi_a(O; P)) + (\mathbb{P}_n - P)(\phi_a(O; \widehat{P}) - \phi_a(O; P)) + P(\phi_a(O; \widehat{P}) - \phi_a(O; P)) \\
        &= O_P(n^{-1/2}) + O_P(n^{-1/2}) + o_P(1) \\
        &= o_P(1),
    \end{align*}
    again using the central limit theorem, Lemma 2 of \citet{kennedy2020a}, and noting that by assumption, $\left|P(\phi_a(O; \widehat{P}) - \phi_a(O; P))\right| \lesssim \lVert \phi_a(\, \cdot \, ; \widehat{P}) - \phi_a(\, \cdot \, ; P)\rVert = o_P(1)$.
\end{proof}

\section{Proofs of Results in Section~\ref{sec:est-project}}

\begin{proof}[Proof of Proposition~\ref{prop:IF-projection}]
    For each $\theta \in \{\xi, \zeta, \chi\}$, we implicitly assume that $m_\theta(V;\boldsymbol{\beta})$ is differentiable in $\boldsymbol{\beta}$, and that the minimizer in~\eqref{eq:projection} is unique. It follows that $\boldsymbol{\beta}_{\theta}(P)$ uniquely satisfies the estimating equations
    \begin{equation}\label{eq:projection-est-eq}
        0 = \mathbb{E}_{P}\left(M_{\theta}(V; \boldsymbol{\beta}_{\theta}(P)) w(V)\left\{\theta_P(V) -
          m_{\theta}(V; \boldsymbol{\beta}_{\theta}(P))\right\}\right),
    \end{equation}
    where $M_{\theta}(V; \boldsymbol{\beta}) = \nabla_{\boldsymbol{\beta}}\,
    m_{\theta}(V; \boldsymbol{\beta})$. We will use this characterizing equation to deduce the influence function of $\boldsymbol{\beta}_{\theta}(P)$. Recall from \citet{bkrw1993} and \citet{vandervaart2002} that the nonparametric influence function $\dot{\boldsymbol{\beta}}_{\theta}(O;P)$ is the unique mean-zero finite variance function satisfying pathwise differentiability:
    \[\left. \frac{d}{d\epsilon}\boldsymbol{\beta}_{\theta}(P_{\epsilon})\right|_{\epsilon = 0} = \mathbb{E}_P\left(\dot{\boldsymbol{\beta}}_{\theta}(O;P) u(O)\right),\]
    for any regular parametric submodel $\{P_{\epsilon}: \epsilon \in [0,1)\}$ such that $P_0 \equiv P$ with score function $u(O) = \left. \frac{d}{d\epsilon} \log{d P_{\epsilon}} \right|_{\epsilon = 0}$. Choosing such a regular parametric submodel and differentiating~\eqref{eq:projection-est-eq},
    \begin{align*}
        0 
        &= \left. \frac{d}{d\epsilon} \mathbb{E}_{P_{\epsilon}}\left(M_{\theta}(V; \boldsymbol{\beta}_{\theta}(P)) w(V)\left\{\theta_{P}(V) -
          m_{\theta}(V; \boldsymbol{\beta}_{\theta}(P))\right\}\right) \right|_{\epsilon = 0} \\
        & \quad \quad + \left. \frac{d}{d\epsilon} \mathbb{E}_{P}\left(M_{\theta}(V; \boldsymbol{\beta}_{\theta}(P_{\epsilon})) w(V)\left\{\theta_{P}(V) -
          m_{\theta}(V; \boldsymbol{\beta}_{\theta}(P))\right\}\right) \right|_{\epsilon = 0} \\
        & \quad \quad + \left. \frac{d}{d\epsilon} \mathbb{E}_{P}\left(M_{\theta}(V; \boldsymbol{\beta}_{\theta}(P)) w(V)\left\{\theta_{P_{\epsilon}}(V) -
          m_{\theta}(V; \boldsymbol{\beta}_{\theta}(P))\right\}\right) \right|_{\epsilon = 0} \\
        & \quad \quad + \left. \frac{d}{d\epsilon} \mathbb{E}_{P}\left(M_{\theta}(V; \boldsymbol{\beta}_{\theta}(P)) w(V)\left\{\theta_{P}(V) -
          m_{\theta}(V; \boldsymbol{\beta}_{\theta}(P_{\epsilon}))\right\}\right) \right|_{\epsilon = 0},
    \end{align*}
    invoking the total derivative. The first summand above is simply
    \[\mathbb{E}_{P}\left(M_{\theta}(V; \boldsymbol{\beta}_{\theta}(P)) w(V)\left\{\theta_{P}(V) -
          m_{\theta}(V; \boldsymbol{\beta}_{\theta}(P))\right\}u(O)\right),\]
    by interchanging integral and derivative, noting that $\left. \frac{d}{d\epsilon} \log{d P_{\epsilon}} \right|_{\epsilon = 0} = \left. \frac{d}{d\epsilon}dP_{\epsilon}\right|_{\epsilon = 0}/dP$. The second and fourth summands are
    \[\mathbb{E}_{P}\left(\Lambda_{\theta}(V; \boldsymbol{\beta}_{\theta}(P)) w(V)\left\{\theta_{P}(V) -
          m_{\theta}(V; \boldsymbol{\beta}_{\theta}(P))\right\}\right) \left. \frac{d}{d\epsilon}\boldsymbol{\beta}_{\theta}(P_{\epsilon})\right|_{\epsilon = 0},\]
    \[-\mathbb{E}_{P}\left(M_{\theta}^{\otimes 2}(V; \boldsymbol{\beta}_{\theta}(P)) w(V)\right) \left. \frac{d}{d\epsilon}\boldsymbol{\beta}_{\theta}(P_{\epsilon})\right|_{\epsilon = 0},\]
    respectively, by the chain rule, where
  $\Lambda_{\theta}(V; \boldsymbol{\beta}) =
  \nabla_{\boldsymbol{\beta}}^2 m_{ \theta}(V;
  \boldsymbol{\beta})$ and $\boldsymbol{u}^{\otimes 2} = \boldsymbol{u} \boldsymbol{u}^T$ for any $\boldsymbol{u} \in \mathbb{R}^q$. Rewriting the third summand and rearranging the above equality, we therefore obtain
  \begin{align*}
      & \left. \frac{d}{d\epsilon}\boldsymbol{\beta}_{\theta}(P_{\epsilon})\right|_{\epsilon = 0} \\
      &= -V_{\theta}^{-1}\mathbb{E}_{P}\bigg(M_{\theta}(V; \boldsymbol{\beta}_{\theta}(P)) w(V)\left[\left\{\theta_{P}(V) -
          m_{\theta}(V; \boldsymbol{\beta}_{\theta}(P))\right\}u(O) + \left.\frac{d}{d\epsilon} \theta_{P_{\epsilon}}(V) \right|_{\epsilon=0}\right]\bigg),
  \end{align*}
  where $V_{\theta} = \mathbb{E}_P\left(w(V)\left\{\Lambda_{\theta}(V;
          \boldsymbol{\beta}_{\theta}(P))\left[\theta_P(V) - m_{
              \theta}(V; \boldsymbol{\beta}_{\theta}(P))\right] -
          M_{\theta}^{\otimes 2}(V;
          \boldsymbol{\beta}_{\theta}(P))\right\}\right)$; note that this normalizing matrix equals $V_{\theta}(\boldsymbol{\beta}_{\theta}(P), \eta_{\theta})$ defined in Theorem~\ref{thm:est-projection}.
    
    Now, it remains to compute $\left.\frac{d}{d\epsilon} \theta_{P_{\epsilon}}(V) \right|_{\epsilon=0}$ for each $\theta \in \{\xi, \zeta, \chi\}$ to obtain the formulae given in the main paper. First,
    \begin{align*}
        \left.\frac{d}{d\epsilon} \xi_{P_{\epsilon}}(V) \right|_{\epsilon=0}
        &= \left.\frac{d}{d\epsilon} \mathbb{E}_{P_{\epsilon}}\left(\Psi_{\epsilon}(\boldsymbol{X}) \mid V\right)\right|_{\epsilon=0} \\
        &= \mathbb{E}_P\left(\left\{\Psi_P(\boldsymbol{X}) - \xi_P(V)\right\}u_{\boldsymbol{X} \mid V} \mid V\right) + \mathbb{E}_P\left(\left.\frac{d}{d\epsilon} \Psi_{\epsilon}(\boldsymbol{X}) \right|_{\epsilon = 0} \mid V\right),
    \end{align*}
    where $u_{B \mid C}$ represents the conditional score function for $B$ given $C$, for arbitrary $B$ and $C$. Since $\Psi_\epsilon = \gamma_{\epsilon}/\delta_{\epsilon}$, the quotient rule yields
    \begin{align*}
        \left.\frac{d}{d\epsilon} \Psi_{\epsilon}(\boldsymbol{X}) \right|_{\epsilon = 0} 
        &= \frac{1}{\delta(\boldsymbol{X})}\left( \left.\frac{d}{d\epsilon} \gamma_{\epsilon}(\boldsymbol{X}) \right|_{\epsilon = 0} - \Psi_P(\boldsymbol{X}) \left.\frac{d}{d\epsilon} \delta_{\epsilon}(\boldsymbol{X}) \right|_{\epsilon = 0}\right).
    \end{align*}
    Continuing the calculation, we observe that
    \begin{align*}
        \left.\frac{d}{d\epsilon} \mu_{z,\epsilon}(\boldsymbol{X}) \right|_{\epsilon = 0}
        &= \mathbb{E}_P(\left\{Y - \mu_z(\boldsymbol{X})\right\}u_{Y \mid \boldsymbol{X}, Z = z} \mid \boldsymbol{X}, Z = z) \\
        &= \mathbb{E}_P\left(\frac{\mathds{1}(Z = z)}{\pi_z(\boldsymbol{X})}\left\{Y - \mu_z(\boldsymbol{X})\right\}u_{Y \mid \boldsymbol{X}, Z} \, \middle| \,\boldsymbol{X}\right),
    \end{align*}
    and similarly
    \[\left.\frac{d}{d\epsilon} \lambda_{z,\epsilon}(\boldsymbol{X}) \right|_{\epsilon = 0}
        = \mathbb{E}_P\left(\frac{\mathds{1}(Z = z)}{\pi_z(\boldsymbol{X})}\left\{A - \lambda_z(\boldsymbol{X})\right\}u_{A \mid \boldsymbol{X}, Z} \, \middle| \,\boldsymbol{X}\right).\]
    Repeatedly using the tower law, as well as properties of (conditional) score functions (i.e., they have conditional mean zero and $u_{B \mid C} + u_C = u_{B, C}$ for any $B$ and $C$), we obtain
    \begin{align*}
      & \left. \frac{d}{d\epsilon}\boldsymbol{\beta}_{\xi}(P_{\epsilon})\right|_{\epsilon = 0} \\
      &= -V_{\xi}^{-1}\mathbb{E}_{P}\bigg(M_{\xi}(V; \boldsymbol{\beta}_{\xi}(P)) w(V)\bigg[\left\{\xi_{P}(V) -
          m_{\xi}(V; \boldsymbol{\beta}_{\xi}(P))\right\}u(O) \\
    & \quad \quad \quad
    + \left\{\Psi_P(\boldsymbol{X}) - \xi_P(V) + \frac{2Z-1}{\delta(\boldsymbol{X})\pi_Z(\boldsymbol{X})}\left\{Y - \mu_Z(\boldsymbol{X}) -\Psi_P(\boldsymbol{X})(A - \lambda_Z(\boldsymbol{X}))\right\}\right\} u(O)\bigg]\bigg), \\
    &= -V_{\xi}^{-1}\mathbb{E}_{P}\bigg(M_{\xi}(V; \boldsymbol{\beta}_{\xi}(P)) w(V)\bigg[\Psi_P(\boldsymbol{X}) -
          m_{\xi}(V; \boldsymbol{\beta}_{\xi}(P)) \\
    & \quad \quad \quad \quad \quad \quad \quad 
     + \frac{2Z-1}{\delta(\boldsymbol{X})\pi_Z(\boldsymbol{X})}\left\{Y - \mu_Z(\boldsymbol{X}) -\Psi_P(\boldsymbol{X})(A - \lambda_Z(\boldsymbol{X}))\right\} \bigg]u(O)\bigg).
  \end{align*}
  By definition, we conclude that
  \begin{align*}
      \dot{\boldsymbol{\beta}}_{\xi}(O; P)
      &= -V_{\xi}^{-1} M_{\xi}(V; \boldsymbol{\beta}_{\xi}(P)) w(V)\bigg[\Psi_P(\boldsymbol{X}) -
          m_{\xi}(V; \boldsymbol{\beta}_{\xi}(P)) \\
    & \quad \quad \quad \quad
     + \frac{2Z-1}{\delta(\boldsymbol{X})\pi_Z(\boldsymbol{X})}\left\{Y - \mu_Z(\boldsymbol{X}) -\Psi_P(\boldsymbol{X})(A - \lambda_Z(\boldsymbol{X}))\right\} \bigg],
  \end{align*}
  as claimed. For $\zeta_P(V)$, observe that
  \begin{align*}
        &\left.\frac{d}{d\epsilon} \zeta_{P_{\epsilon}}(V) \right|_{\epsilon=0} \\
        &= \left.\frac{d}{d\epsilon} \mathbb{E}_{P_{\epsilon}}\left(\Psi_{\epsilon}(\boldsymbol{X}) \mid V, A = 1\right)\right|_{\epsilon=0} \\
        &= \mathbb{E}_P\left(\left\{\Psi_P(\boldsymbol{X}) - \zeta_P(V)\right\}u_{\boldsymbol{X} \mid V, A = 1} \mid V, A = 1\right) + \mathbb{E}_P\left(\left.\frac{d}{d\epsilon} \Psi_{\epsilon}(\boldsymbol{X}) \right|_{\epsilon = 0} \mid V, A = 1\right) \\
        &= \mathbb{E}_P\left(\frac{A}{P[A = 1 \mid V]}\left\{\Psi_P(\boldsymbol{X}) - \zeta_P(V)\right\}u_{\boldsymbol{X} \mid V, A} \mid V\right) \\
        & \quad \quad + \mathbb{E}_P\bigg(\frac{A}{P[A = 1 \mid V]}\frac{1}{\delta(\boldsymbol{X})}\mathbb{E}_P\bigg(\frac{2Z-1}{\pi_Z(\boldsymbol{X})}(Y - \mu_Z(\boldsymbol{X}))u_{Y \mid \boldsymbol{X}, Z} \\
        & \hspace{2.2in} - \Psi_P(\boldsymbol{X})(A - \lambda_Z(\boldsymbol{X}))u_{A\mid \boldsymbol{X}, Z} \, \bigg| \, \boldsymbol{X}\bigg)\mid V\bigg),
    \end{align*}
    using the previous calculations for $\left.\frac{d}{d\epsilon} \Psi_{\epsilon}(\boldsymbol{X}) \right|_{\epsilon = 0}$. Again using the tower law and properties of score functions, we conclude that
    \begin{align*}
      \dot{\boldsymbol{\beta}}_{\zeta}(O; P)
      &= -V_{\zeta}^{-1} M_{\zeta}(V; \boldsymbol{\beta}_{\zeta}(P)) w(V)\bigg[\zeta_P(V) -
          m_{\zeta}(V; \boldsymbol{\beta}_{\zeta}(P)) \\
    & \quad \quad \quad \quad
     + \frac{1}{P[A = 1 \mid V]}\bigg(A \left\{\Psi_P(\boldsymbol{X}) - \zeta_P(V)\right\} \\
    & \quad \quad \quad \quad \quad \quad \quad + \frac{\rho(\boldsymbol{X})}{\delta(\boldsymbol{X})}\frac{2Z-1}{\pi_Z(\boldsymbol{X})}\left\{Y - \mu_Z(\boldsymbol{X}) -\Psi_P(\boldsymbol{X})(A - \lambda_Z(\boldsymbol{X}))\right\} \bigg)\bigg].
  \end{align*}
  Lastly, for $\chi_P(V) = \frac{\mathbb{E}_P(\gamma(\boldsymbol{X}) \mid V)}{\mathbb{E}_P(\delta(\boldsymbol{X}) \mid V)}$, applying the quotient rule gives
  \begin{align*}
        \left.\frac{d}{d\epsilon} \chi_{P_{\epsilon}}(V) \right|_{\epsilon=0}
        &= \frac{1}{\mathbb{E}_P(\delta(\boldsymbol{X}) \mid V)}\left( \left.\frac{d}{d\epsilon} \mathbb{E}_{P_{\epsilon}}(\gamma_{\epsilon}(\boldsymbol{X}) \mid V) \right|_{\epsilon = 0} - \chi_P(V) \left.\frac{d}{d\epsilon} \mathbb{E}_{P_{\epsilon}}(\delta_{\epsilon}(\boldsymbol{X}) \mid V) \right|_{\epsilon = 0}\right).
    \end{align*}
    Noting that
    \begin{align*}
        \left.\frac{d}{d\epsilon} \mathbb{E}_{P_{\epsilon}}(\gamma_{\epsilon}(\boldsymbol{X}) \mid V)\right|_{\epsilon = 0} 
        &= \mathbb{E}_P\left(\left\{\gamma(\boldsymbol{X}) -\mathbb{E}_P(\gamma(\boldsymbol{X}) \mid V)\right\}u_{\boldsymbol{X} \mid V} \mid V\right) \\
        & \quad \quad \quad \quad + \mathbb{E}_P\left(\mathbb{E}_P\left\{\frac{2Z-1}{\pi_Z(\boldsymbol{X})}(Y - \mu_Z(\boldsymbol{X}))u_{Y \mid \boldsymbol{X}, Z} \, \middle| \, \boldsymbol{X}\right\} \mid V\right)
    \end{align*}
    and
    \begin{align*}
        \left.\frac{d}{d\epsilon} \mathbb{E}_{P_{\epsilon}}(\delta_{\epsilon}(\boldsymbol{X}) \mid V)\right|_{\epsilon = 0}
        &= \mathbb{E}_P\left(\left\{\delta(\boldsymbol{X}) -\mathbb{E}_P(\delta(\boldsymbol{X}) \mid V)\right\}u_{\boldsymbol{X} \mid V} \mid V\right) \\
        & \quad \quad \quad \quad + \mathbb{E}_P\left(\mathbb{E}_P\left\{\frac{2Z-1}{\pi_Z(\boldsymbol{X})}(A - \lambda_Z(\boldsymbol{X}))u_{A \mid \boldsymbol{X}, Z} \, \middle| \, \boldsymbol{X}\right\} \mid V\right)
    \end{align*}
    Thus, using the tower law and properties of scores, and simplifying, we obtain
    \begin{align*}
      \dot{\boldsymbol{\beta}}_{\chi}(O; P)
      &= -V_{\chi}^{-1} M_{\chi}(V; \boldsymbol{\beta}_{\chi}(P)) w(V)\bigg[\chi_P(V) -
          m_{\chi}(V; \boldsymbol{\beta}_{\chi}(P)) \\
    & \quad \quad \quad \quad
     + \frac{1}{\mathbb{E}_P(\delta(\boldsymbol{X}) \mid V)}\bigg(\gamma(\boldsymbol{X}) - \chi_P(V)\delta(\boldsymbol{X}) \\
    & \quad \quad \quad \quad \quad \quad \quad + \frac{2Z-1}{\pi_Z(\boldsymbol{X})}\left\{Y - \mu_Z(\boldsymbol{X}) -\chi_P(V)(A - \lambda_Z(\boldsymbol{X}))\right\} \bigg)\bigg],
  \end{align*}
  as claimed.
\end{proof}

\begin{proof}[Proof of Theorem~\ref{thm:est-projection}]
    This result is a direct application of Lemma 3 in~\citet{kennedy2023}, a ``master lemma'' giving rates of convergence for solutions to sample-split estimating equations involving estimated nuisance functions. We need only justify the stated formulas for the bias terms $R_{n, \theta} = \mathbb{E}_P\left(\phi_{\theta}(O;
      \boldsymbol{\beta}_{\theta}(P), \widehat{\eta}_{\theta}) - \phi_{\theta}(O;
      \boldsymbol{\beta}_{\theta}(P), \eta_{\theta})\right)$, for $\theta \in \{\xi, \zeta, \chi\}$, but this follows similar arguments to the second-order remainder calculations in the proof of Lemma~\ref{lemma:VM-marginal}.
\end{proof}

\section{IV Unconfoundedness Assumption Revisited}

In Remark~\ref{remark:UC} in Section~\ref{sec:notation}, we noted that not all listed unconfoundedness assumptions need be invoked for every identification result. Indeed, Propositions~\ref{prop:homogeneity-ident}, \ref{prop:nonconstant-ident}, \ref{prop:SMM-ident}, and \ref{prop:hernan} rely only on unconfoundedness through Assumption~\ref{ass:UC}(a), i.e., $Z \independent Y(z, a) \mid \boldsymbol{X}$. In this section, we briefly describe a scenario where this latter condition holds, yet other listed unconfoundedness assumptions fail.

Consider the putative data generating scenario illustrated in Figure~\ref{fig:IV-alt}. In addition to unmeasured $A$--$Y$ confounders (here written as $U_1$), we included additional $Z$--$A$ confounders, $U_2$. One can rigorously assess the counterfactual independencies implied by this DAG by associating it with a formal causal model; here, we will use the finest fully randomized causally interpretable structured tree graph model \citep{robins2010}, for which many counterfactual independencies can be probed with simple graphical criteria \citep{richardson2013}. In this DAG, due to the independence of $U_1$ and $U_2$ given $\boldsymbol{X}$, Assumption~\ref{ass:UC}(a) holds, i.e., $\boldsymbol{X}$ is sufficient to control for $Z$--$Y$ confounding. On the other hand, Assumption~\ref{ass:UC}(b) fails, since $Z \not \independent A(z) \mid \boldsymbol{X}$ due to unmeasured confounding by $U_2$. Note, in this case that Assumption~\ref{ass:UC}(c) and \ref{ass:UC}(d) hold here by choosing $U \equiv U_1$.

\begin{figure}[ht]
\centering




\large{\begin{tikzpicture}[%
        ->,
        >=stealth,
        node distance=0.5cm,
        pil/.style={
          ->,
          thick,
          shorten =2pt,},
        regnode/.style={circle, draw=black, fill=none, thick, minimum size=8mm},
        bluenode/.style={circle, draw=blue, fill=none, thick, minimum size=8mm},
        boxednode/.style={rectangle, draw=black, fill=none, thick, minimum size=8mm},
        leftsplitnode/.style={semicircle, draw=black, fill=none, thick, minimum size=9mm, shape border rotate=90},
        rightsplitnode/.style={semicircle, draw=red, fill=none, thick, minimum size=9mm, shape border rotate=270},
        rednode/.style={circle, draw=red, fill=none, thick, minimum size=8mm},
        redbox/.style={rectangle, draw=red, fill=none, thick, minimum size=8mm},
        ]
        \node[regnode] (1) {$Z$};
        \node[regnode, right=of 1] (2) {$A$};
        \node[regnode, above right=of 2] (3) {$U_1$};
        \node[regnode, below right=of 3] (4) {$Y$};
        \node[regnode, left=of 1] (5) {$\boldsymbol{X}$};
        \node[regnode, below=of 5] (6) {$U_2$};
        \draw [->] (1) to (2);
        \draw [->] (2) to (4);
        \draw [->] (3) to (2);
        \draw [->] (3) to (4);
        \draw [->] (5) to (1);
        \draw [->] (5) to [out=330, in=210] (2);
        \draw [<->] (5) to [out=45, in=155] (3);
        \draw [<->] (5) to [out=225, in=135] (6);
        \draw [->] (5) to [out=315, in=225] (4);
        \draw [->] (6) to (1);
        \draw [->] (6) to [out = 0, in = 240] (2);
\end{tikzpicture}}

\caption{Non-standard instrumental variable DAG}
\label{fig:IV-alt}
\end{figure}
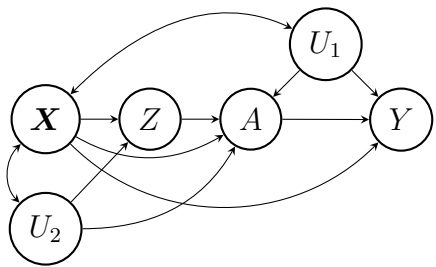


\end{appendices}

\end{document}